  \let\oldparagraph\paragraph
  \renewcommand{\paragraph}{
    \@ifstar
      \xxxParagraphStar
      \xxxParagraphNoStar
  }
  \newcommand{\xxxParagraphStar}[1]{\oldparagraph*{#1}\mbox{}}
  \newcommand{\xxxParagraphNoStar}[1]{\oldparagraph{#1}\mbox{}}
  \let\oldsubparagraph\subparagraph
  \renewcommand{\subparagraph}{
    \@ifstar
      \xxxSubParagraphStar
      \xxxSubParagraphNoStar
  }
  \newcommand{\xxxSubParagraphStar}[1]{\oldsubparagraph*{#1}\mbox{}}
  \newcommand{\xxxSubParagraphNoStar}[1]{\oldsubparagraph{#1}\mbox{}}
\patchcmd\longtable{\par}{\if@noskipsec\mbox{}\fi\par}{}{}
\def\maxwidth{\ifdim\Gin@nat@width>\linewidth\linewidth\else\Gin@nat@width\fi}
\def\maxheight{\ifdim\Gin@nat@height>\textheight\textheight\else\Gin@nat@height\fi}
\def\fps@figure{htbp}
  \renewcommand*\contentsname{Table of contents}
  \newcommand\contentsname{Table of contents}
  \renewcommand*\listfigurename{List of Figures}
  \newcommand\listfigurename{List of Figures}
  \renewcommand*\listtablename{List of Tables}
  \newcommand\listtablename{List of Tables}
  \renewcommand*\figurename{Figure}
  \newcommand\figurename{Figure}
  \renewcommand*\tablename{Table}
  \newcommand\tablename{Table}
\numberwithin{equation}{section}
\providecommand{\algorithmname}{Algorithm}
\newtheorem{theorem}{Theorem}[section]
\newtheorem{lem}{Lemma}[section]
\newcounter{hypA}
\newenvironment{hypA}{\refstepcounter{hypA}\begin{itemize}
  \item[({\bf A\arabic{hypA}})]}{\end{itemize}}
\newcounter{hypB}
\newcounter{hypD}
\date{}
\def\l{\lambda}
\begin{document}

\begin{center}

{\Large \textbf{Bayesian Inference for Partially Observed McKean-Vlasov SDEs with Full Distribution Dependence}}

\vspace{0.5cm}

Ning Ning$^{a,*}$ and Amin Wu$^{b,*}$

{\footnotesize $^{a}$Department of Statistics,  Texas A\&M University,
College Station, TX, US.}\\
{\footnotesize $^{b}$Statistics Program, Computer, Electrical and Mathematical Sciences and Engineering Division, \\ King Abdullah University of Science and Technology, Thuwal, 23955-6900, KSA.} \\
{\footnotesize E-Mail:\,} \texttt{\emph{\footnotesize patning@tamu.edu}}, 
\texttt{\emph{\footnotesize amin.wu@kaust.edu.sa}}

\begingroup
    \renewcommand{\thefootnote}{*}
    \footnotetext{The authors are listed in alphabetical order.}
\endgroup

\begin{abstract}
McKean–Vlasov stochastic differential equations (MVSDEs) describe systems whose dynamics depend on both individual states and the population distribution, and they arise widely in neuroscience, finance, and epidemiology. In many applications the system is only partially observed, making inference very challenging when both drift and diffusion coefficients depend on the evolving empirical law. This paper develops a Bayesian framework for latent state inference and parameter estimation in such partially observed MVSDEs. We combine time-discretization with particle-based approximations to construct tractable likelihood estimators, and we design two particle Markov chain Monte Carlo (PMCMC) algorithms: a single-level PMCMC method and a multilevel PMCMC (MLPMCMC) method that couples particle systems across discretization levels. The multilevel construction yields correlated likelihood estimates and achieves mean square error $(O(\varepsilon^2))$ at computational cost $(O(\varepsilon^{-6}))$, improving on the $(O(\varepsilon^{-7}))$ complexity of single-level schemes. We address the fully law-dependent diffusion setting which is the most general formulation of MVSDEs, and provide theoretical guarantees under standard regularity assumptions. Numerical experiments confirm the efficiency and accuracy of the proposed methodology. 
\bigskip

\noindent \textbf{Keywords}: Bayesian estimation, Generalized Mckean-Vlasov stochastic differential equations, Markov Chain Monte Carlo, Multilevel Monte Carlo. 
\end{abstract}

\end{center}


\section{Introduction}
We first provide the background and motivation in Section~\ref{sec:background}, summarize our main contributions in Section~\ref{sec:Contributions}, followed by the overall organization of the paper in Section~\ref{sec:Organization}.

\subsection{Background and Motivation}
\label{sec:background}

McKean–Vlasov stochastic differential equations (MVSDEs) provide a fundamental framework for describing systems of interacting particles or agents whose dynamics depend on both their individual states and the overall distribution of the population. Such equations naturally arise in diverse fields, including mean-field games, statistical physics, neuroscience, and quantitative finance, where collective effects play a critical role in shaping individual behavior \citep{bald,biol,fin}. Despite their expressiveness, simulating MVSDEs poses significant challenges. The dependence on the law of the process introduces an intrinsic coupling between all particles, leading to high computational complexity and propagation-of-chaos errors in particle-based approximations \citep{basic_method,ning2026well}. Moreover, the nonlinear dependence on the empirical measure often precludes closed-form solutions and exacerbates numerical instability, particularly when the coefficients are non-globally Lipschitz \citep{ning2024one}. As a result, the accurate and efficient simulation of MVSDEs remains an active and technically demanding area of research.

While substantial progress has been made in the analysis and simulation of MVSDEs under fully observed settings \citep{nik, wen2016maximum}, many real-world systems are only partially observed. In such scenarios, the latent state evolves continuously in time, but only discrete, noisy measurements are available, providing an incomplete and indirect view of the underlying dynamics \citep{heng2024unbiased}. This partial-observation structure arises naturally in fields such as neuroscience, finance, and epidemiology, where the full system state cannot be measured directly or is prohibitively costly to obtain \citep{li2024inference, gu2025optimizing}. Modeling MVSDEs under partial information introduces significant challenges: the dependence of the dynamics on the population distribution must be inferred through observations, rendering parameter estimation a nonlinear filtering problem on the space of probability measures. These difficulties are further amplified when the diffusion term depends on the law of the process, since inference must simultaneously track both stochasticity and evolving distributional interactions, complicating both theoretical analysis and numerical computation.

Despite these complications, incorporating law-dependence in the diffusion coefficient is essential for accurately representing systems in which volatility itself exhibits mean-field effects. In neuroscience, for example, synaptic noise intensities in large neural populations depend on collective activity: aggregate firing patterns modulate membrane conductances and background fluctuations \citep{dos2022simulation}. Models with law-independent diffusion therefore miss key population-level volatility phenomena underlying noise-induced synchronization and stochastic resonance. In finance, similar effects arise in systemic risk modeling, where institutions’ idiosyncratic volatility depends on the evolving cross-sectional distribution of capital or exposure. For instance, \cite{fin} develop an asymptotic inference framework for large interacting systems with jump diffusions whose drift and diffusion coefficients depend on the empirical distribution, capturing correlated default, interbank dependence, and other distribution-driven volatility mechanisms. Across these applications, law-dependent diffusion is indispensable for describing comovement, volatility clustering, and distribution-mediated interactions that cannot be reproduced by simpler specifications.

The inclusion of law-dependent diffusion, however, greatly intensifies both the statistical and computational burden. The posterior distribution becomes highly intractable due to the intertwined roles of latent trajectories, unknown parameters, and evolving probability measures. Classical inference methods are insufficient: closed-form transition densities are unavailable \citep{beskos2006exact}, and numerical schemes must handle high-dimensional integrations over spaces of probability measures. These challenges prompt two natural questions: 
\begin{itemize}
\item Can one develop a broadly applicable, data-driven methodology for such general models?
\item Once such an approach is established, can its performance be further enhanced by reducing estimation variability?
\end{itemize}  
Addressing these questions is the central aim of this paper.

\subsection{Our Contributions}
\label{sec:Contributions}

This work develops a general Bayesian methodology for latent state inference and parameter estimation in partially observed MVSDE (PO-MVSDEs) with law-dependent drift and diffusion coefficients. We begin by combining appropriate time-discretization schemes with particle-based approximations \citep{po_mv} to obtain tractable likelihood estimators for models in which both components of the dynamics depend on the evolving empirical distribution. Building on these approximations, we design Markov chain Monte Carlo (MCMC) algorithms \citep{andrieu} tailored for posterior sampling in this setting. Since fine discretization levels incur substantial computational cost, we further incorporate multilevel Monte Carlo (MLMC) techniques \citep{giles, giles1, jasra_bpe_sde} to balance variance and bias across resolutions. The multilevel construction yields significant gains in efficiency over classical single-level schemes, and the accompanying theoretical analysis provides rigorous guarantees on approximation error and overall complexity.

We develop two particle MCMC (PMCMC) methods for Bayesian inference in discretized PO-MVSDEs. Particle filters, or sequential Monte Carlo methods, are a workhorse for non-linear stochastic filtering \citep{doucet2001sequential, chopin2020introduction}. 
First, we construct a single-level PMCMC algorithm based on a particle filter built from the approximated transition law $\mu_{k \Delta_l, \theta}^l$.  
The particle filter provides unbiased marginal likelihood estimates, which are embedded within a pseudo-marginal MCMC scheme to yield exact inference for the discretized model.
Second, to improve efficiency, we introduce a multilevel PMCMC (MLPMCMC) algorithm that couples the approximated laws at two consecutive levels $(l-1,l)$.  
This coupling enables the construction of correlated likelihood estimates through coupled particle filters, which are aggregated via weighted differences or averages to form a telescoping estimator with level-dependent variance decay.  
Throughout the hierarchy, the particle numbers $N_l$ and parameters $\theta$ remain fixed within each level, while the inter-level coupling is designed to minimize the variance of the multilevel decomposition.  
These methods extend the foundational PMCMC results of \cite{andrieu} to mean-field systems with law-dependent noise, and we show that exactness is retained even when only biased law approximations are available.

Our methodology generalizes the framework of \cite{jasra2025bayesian}, which treated partially observed McKean--Vlasov models with law-dependent drift and state-only diffusion.  
Their setting is recovered in our framework as the special case $\zeta_{2,\theta} \equiv 0$.  
Allowing $\zeta_{2,\theta} \neq 0$ introduces substantial new challenges: both interaction kernels $\zeta_{1,\theta}$ (drift) and $\zeta_{2,\theta}$ (diffusion) must be approximated; the law-dependent diffusion affects the stochastic component of the dynamics, complicating both the construction of coupled processes and the variance analysis; and the regularity conditions must hold uniformly over the measure space $\mathcal{P}(\mathbb{R}^d)$ rather than only with respect to state variables.  
Despite these challenges, we show that multilevel variance reduction remains effective, achieving computational cost $O(\varepsilon^{-6})$ for mean square error $O(\varepsilon^{2})$, which matches the optimal complexity in \cite{jasra2025bayesian} while handling a strictly broader model class.

The proof proceeds by decomposing the mean square error into several components, corresponding to different sources of approximation. Specifically, we distinguish between the error arising from the base-level MCMC sampler, the additional error introduced by the bi-level MCMC updates, the particle approximation error at the base discretization level, the coupled particle approximation errors across successive levels, and the bias induced by time discretization. The main technical challenge stems from the law-dependent diffusion term, whose presence influences the variance bounds associated with both interaction kernels, $\zeta_{1,\theta}$ and $\zeta_{2,\theta}$. To address this, we develop new variance estimates for the coupled system and establish that, under the regularity conditions imposed in Assumption~A\ref{ass:1}, these contributions can be controlled sufficiently to ensure that the overall algorithm attains the optimal multilevel convergence rate.

\subsection{Organization of the Paper}
\label{sec:Organization}

The rest of the paper proceeds as follows. Section~\ref{sec:model} introduces the overall model formulation, beginning with the PO-MVSDE framework, followed by its time-discretized version. Section~\ref{sec:method} develops the computational methodology for approximate inference under the discretized model, proposing two PMCMC approaches for posterior sampling over the PO-MVSDE. Section~\ref{sec:theory} presents the theoretical results of the proposed methodology, including the required regularity assumptions and the main convergence results. Section~\ref{sec:numerics} applies the proposed methods to a neuronal activity model governed by a multidimensional MVSDE with a non-globally Lipschitz structure, and presents the corresponding numerical results. Finally, Section \ref{sec:Conclusion} concludes the paper by summarizing the key findings, discussing broader implications, and suggesting directions for future research. Proofs of all theoretical results are provided in the Supplementary Material.

\section{Model Formulation}
\label{sec:model}
In this section, we introduce the overall model formulation, beginning with the PO-MVSDE framework in Section~\ref{sec:POMVD}, followed by its time-discretized setting in Section~\ref{sec:Time_Discretization}.

\subsection{Partially Observed McKean-Vlasov SDEs}
\label{sec:POMVD}
Consider a complete probability space $(\Omega, \mathscr{F}, \mathbb{F}=\{\mathscr{F}_t\}_{t\in [0,T]}, P)$ which supports an $\mathbb{F}$-adapted standard Brownian motion $W=\{W_t\}_{t\in[0,T]}$.
We consider a mean-field SDE of McKean-Vlasov type. Let $X_0 = x_0 \in \mathbb{R}^d$ be the initial condition and $\theta \in \Theta \subseteq \mathbb{R}^{d_{\theta}}$ be a fixed parameter. The process $\{X_t\}_{t \geq 0}$ satisfies the following SDE:
\begin{equation}\label{eq:sde}
dX_t = a_{\theta}\left(X_t,\overline{\zeta}_{1,\theta}(X_t,\mu_{t,\theta})\right)dt + b_{\theta}\left(X_t,\overline{\zeta}_{2,\theta}(X_t,\mu_{t,\theta})\right)dW_t,
\end{equation}
where the interaction terms are defined by
$$
\overline{\zeta}_{j,\theta}(X_t,\mu_{t,\theta}) = \int_{\mathbb{R}^d}\zeta_{j,\theta}(X_t,x)\mu_{t,\theta}(dx), \qquad j \in \{1,2\}.
$$
Here, $\mu_{t,\theta}$ denotes the law of $X_t$ under parameter $\theta$, with initial condition $\mu_{0,\theta}(dx) = \delta_{x_0}(dx)$, which is the Dirac measure concentrated at the fixed starting point $x_0$. For each $\theta \in \Theta$, the functions $\zeta_{j,\theta}: \mathbb{R}^{2d} \rightarrow \mathbb{R}$ represent interaction kernels, $a_{\theta}: \mathbb{R}^d \times \mathbb{R} \rightarrow \mathbb{R}^d$ is the drift coefficient function, and $b_{\theta}: \mathbb{R}^d \times \mathbb{R} \rightarrow \mathbb{R}^{d \times d}$ is the diffusion coefficient function.

We consider a discrete-time observation process $Y_1, Y_2, \ldots, Y_T$ taking values in a compact set $\mathsf{Y} \subseteq \mathbb{R}^{d_{Y}}$, where $T \in \mathbb{N}$. The observations are assumed to be collected at unit time intervals for notational convenience.
Conditional on the position $X_t$ at time $t \in \mathbb{N}$, the random variable $Y_t$ is assumed to be independent of all other random variables and has a bounded, positive probability density function $g_{\theta}(x_t, y_t)$.
Let $P_{\mu_{t-1,\theta}, t, \theta}(x_{t-1}, dx_t)$ denote the conditional law of $X_t$ given $\mathscr{F}_{t-1}$, which represents the transition kernel over unit time. Let $\nu$ be a probability density on the parameter space $\Theta$ with respect to the Lebesgue measure. Let $d\theta$ denote the Lebesgue measure on $(\Theta, \mathscr{B}(\Theta))$, where $\mathscr{B}(\Theta)$ is the Borel $\sigma$-field on $\Theta$. Our objective is to sample from the posterior distribution
\begin{equation}\label{eq:posterior}
\pi\left(d(\theta, x_{1:T}) \mid y_{1:T}\right) = \frac{L(\theta, x_{1:T}; y_{1:T}) \nu(\theta) d\theta}{\int_{\Theta \times \mathbb{R}^{Td}} L(\theta, x_{1:T}; y_{1:T}) \nu(\theta) d\theta},
\end{equation}
with the likelihood function given by
$$
L(\theta, x_{1:T}; y_{1:T}) = \prod_{k=1}^T g_{\theta}(x_k, y_k) P_{\mu_{k-1,\theta}, k, \theta}(x_{k-1}, dx_k).
$$
We assume throughout that the normalizing constant (denominator) in \eqref{eq:posterior} is finite.

\subsection{Time Discretization}
\label{sec:Time_Discretization}
Since the continuous-time formulation in \eqref{eq:sde} is not directly amenable to computational implementation, we consider a standard time discretization scheme. The resulting discrete-time approximation will require further numerical approximation methods, which we address in Section \ref{sec:method}.


Let $l \in \mathbb{N}_0 = \mathbb{N}_+\cup \{0\}$ be given and define the time step size as $\Delta_l = 2^{-l}$. We employ the first-order Euler-Maruyama scheme to discretize the SDE \eqref{eq:sde}. For $k \in \{0, 1, \ldots, \Delta_l^{-1}T - 1\}$ and initial condition $\widetilde{X}_0 = x_0$, the discretized process satisfies:
\begin{equation}\label{eq:sde_disc}
\widetilde{X}_{(k+1)\Delta_l} = \widetilde{X}_{k\Delta_l} + a_{\theta}\left(\widetilde{X}_{k\Delta_l}, \overline{\zeta}_{1,\theta}(\widetilde{X}_{k\Delta_l}, \mu_{k\Delta_l,\theta}^l)\right)\Delta_l + b_{\theta}\left(\widetilde{X}_{k\Delta_l}, \overline{\zeta}_{2,\theta}(\widetilde{X}_{k\Delta_l}, \mu_{k\Delta_l,\theta}^l)\right)\Delta W_{k},
\end{equation}
where $\Delta W_k = W_{(k+1)\Delta_l} - W_{k\Delta_l}$ denotes the Brownian increment over the time interval $[k\Delta_l, (k+1)\Delta_l]$, and $\mu_{k\Delta_l,\theta}^l$ represents the law of the time-discretized process $\widetilde{X}_{k\Delta_l}$ at time $k\Delta_l$.
 Let $P_{\mu_{t-1,\theta}^l, t, \theta}^l(x_{t-1}, dx_t)$ denote the conditional law of $\widetilde{X}_t$ from the discretized process \eqref{eq:sde_disc}, representing the time-discretized transition kernel over unit time. It corresponds to the discrete-time analog of the continuous-time transition kernel $P_{\mu_{t-1,\theta}, t, \theta}(x_{t-1}, dx_t)$ introduced previously.

Our computational objective becomes sampling from the discretized posterior distribution
\begin{equation}\label{eq:post_disc}
\pi^l\left(d(\theta, x_{1:T}) \mid y_{1:T}\right) = \frac{L^l(\theta, x_{1:T}; y_{1:T}) \nu(\theta) d\theta}{\int_{\Theta \times \mathbb{R}^{Td}} L^l(\theta, x_{1:T}; y_{1:T}) \nu(\theta) d\theta},
\end{equation}
where the discretized likelihood function is given by
$$
L^l(\theta, x_{1:T}; y_{1:T}) = \prod_{k=1}^T g_{\theta}(x_k, y_k) P_{\mu_{k-1,\theta}^l, k, \theta}^l(x_{k-1}, dx_k).
$$
The discretized formulation \eqref{eq:post_disc} approximates the continuous-time posterior \eqref{eq:posterior}, with the approximation error decreasing as the discretization parameter $l$ increases (equivalently, as $\Delta_l \to 0$).
However, even with the Euler-Maruyama time discretization, the evaluation of the transition kernels $P_{\mu_{k-1,\theta}^l, k, \theta}^l(x_{k-1}, dx_k)$ remains computationally intractable due to the mean-field dependence. This necessitates further simulation-based approximation methods, which we develop in the next section.

\section{Methodology}
\label{sec:method}
In this section, we develop the computational methodology for approximate inference under the discretized model introduced earlier. We prepose two PMCMC approaches for posterior sampling: the single-level PMCMC algorithm in Section~\ref{sec:smooth_approx} and its multilevel extension in Section~\ref{sec:smooth_pair_approx}.

\subsection{Single-level PMCMC Algorithm}\label{sec:smooth_approx}

We approximate the laws $\mu_{t,\theta}^l$ for each $t\in\{1,\ldots,T\}$ using the particle-based method from \cite{basic_method}, with parameter $\theta\in\Theta$ fixed. The law approximation procedure corresponds to Step 2(b) of Algorithm \ref{alg:PMCMC_mckean_vlasov}. For each time interval $[t-1,t]$, we initialize $N$ particles $\{X_{t-1}^i\}_{i=1}^N$ drawn from the empirical distribution at the previous time step, or from the initial condition when $t=1$. The particles then evolve on the discrete grid $\{(t-1) + j\Delta_l\}_{j=0}^{\Delta_l^{-1}}$, with interaction terms approximated through empirical averaging. This procedure yields empirical measures $\{\mu_{t-1+j\Delta_l,\theta}^{l,N}\}_{j=1}^{\Delta_l^{-1}}$ that approximate the true McKean-Vlasov laws for subsequent use in particle filtering.

We construct the time-discretized posterior in \eqref{eq:post_disc} based on the particle-based law estimates produced by Algorithm~\ref{alg:PMCMC_mckean_vlasov}. This posterior serves as the target distribution for MCMC sampling. For each $t\in\{1,\dots,T\}$, the transition kernel over the unit time interval is
$$
P_{\mu_{t-1,\theta}^l,t,\theta}^l(x_{t-1},dx_t) = \int_{\mathbb{R}^{d(\Delta_l^{-1}-1)}}
\prod_{k=1}^{\Delta_{l}^{-1}} Q_{\mu_{t-1+(k-1)\Delta_l,\theta}^l,\theta}^l(x_{t-1+(k-1)\Delta_l},dx_{t-1+k\Delta_l}),
$$
where $Q_{\mu_{t-1+(k-1)\Delta_l,\theta}^l,\theta}^l(x_{t-1+(k-1)\Delta_l},dx_{t-1+k\Delta_l})$ denote the Gaussian Markov kernel on $(\mathbb{R}^d,\mathscr{B}(\mathbb{R}^d))$ in $\Delta_l$ time step of the discretized SDE \eqref{eq:sde_disc} with law $\mu_{t-1+(k-1)\Delta_l,\theta}^l\in\mathcal{P}(\mathbb{R}^d)$.

Let $u_t = (x_{t-1+\Delta_l}, \ldots, x_t)$ represent the intermediate states within the time interval $[t-1, t]$ for $t \in \{1, \ldots, T\}$. At discretization level $l$, we introduce the following notation: we define $E_l := (\mathbb{R}^d)^{\Delta_l^{-1}}$ as the state space of discretized paths over unit time intervals for the intermediate states, so that $u_t \in E_l$; for $N$ particles, we define $E_l^N := (E_l)^N$ as the state space for $N$ particle trajectories; for the full time horizon, we define $E_l^T := (E_l)^T$. We then define the extended transition kernel
$$
\overline{P}_{\mu_{t-1,\theta}^l,t,\theta}^l(x_{t-1},du_t) := \prod_{k=1}^{\Delta_{l}^{-1}} Q_{\mu_{t-1+(k-1)\Delta_l,\theta}^l,\theta}^l(x_{t-1+(k-1)\Delta_l},dx_{t-1+k\Delta_l}).
$$
The target posterior distribution on $(\Theta\times\mathsf{E}_l^T,\mathscr{B}(\Theta\times\mathsf{E}_l^T))$ is
\begin{align}
\label{eqn:overline_pi}
\overline{\pi}^l\left(d(\theta,u_1,\dots,u_T)\right) := \frac{\left\{\prod_{k=1}^T g_{\theta}(x_k,y_k)
\overline{P}_{\mu_{k-1,\theta}^l,k,\theta}^l(x_{k-1},du_k)\right\}\nu(d\theta)
}{\int_{\Theta\times\mathsf{E}_l^T}
\left\{\prod_{k=1}^T g_{\theta}(x_k,y_k)
\overline{P}_{\mu_{k-1,\theta}^l,k,\theta}^l(x_{k-1},du_k)\right\}\nu(d\theta)}.
\end{align}
This measure admits \eqref{eq:post_disc} as a marginal. However, direct sampling is intractable since the exact laws $\mu_{k,\theta}^l$ are unknown and must be approximated.

Using the particle-based law approximation procedure from Step 2(b), we replace the true laws with empirical approximations $\mu_{t,\theta}^{l,N}$ based on $N$ particles. Let $\overline{u}_t = u_t^{1:N} \in \mathsf{E}_l^N$
denote all simulated intermediate states $\{(X_{t-1+\Delta_l}^1,\ldots, X_{t-1+\Delta_l}^N), \ldots, (X_t^1,\ldots,X_t^N)\}$ generated by the law approximation procedure at time $t$. The joint distribution of the complete trajectory sequence $(\overline{u}_1,\ldots,\overline{u}_T)$ is denoted by $\overline{\mathbb{P}}_{\theta}^N$, with corresponding expectation operator $\overline{\mathbb{E}}_{\theta}^N$.
The particle-approximated transition kernel is
$$
\overline{P}_{\mu_{t-1,\theta}^{l,N},t,\theta}^l(x_{t-1},du_t) =  \prod_{k=1}^{\Delta_{l}^{-1}} Q_{\mu_{t-1+(k-1)\Delta_l,\theta}^{l,N},\theta}^l(x_{t-1+(k-1)\Delta_l},dx_{t-1+k\Delta_l}).
$$
Our MCMC method targets the approximated posterior
\begin{align} 
\label{eqn:bar_pi}
&\overline{\pi}^{l,N}\left(d(\theta,u_1,\dots,u_T)\right)\\
&= \frac{
\left\{\prod_{k=1}^T g_{\theta}(x_k,y_k)\right\}
\overline{\mathbb{E}}_{\theta}^N\left[\prod_{k=1}^T \overline{P}_{\mu_{k-1,\theta}^{l,N},k,\theta}^l(x_{k-1},du_k)
\right]\nu(\theta)d\theta
}{
\int_{\Theta\times \mathsf{E}_l^T}
\left\{\prod_{k=1}^T g_{\theta}(x_k,y_k)\right\}
\overline{\mathbb{E}}_{\theta}^N\left[
\prod_{k=1}^T\overline{P}_{\mu_{k-1,\theta}^{l,N},k,\theta}^l(x_{k-1},du_k)
\right]\nu(\theta)d\theta
}.\nonumber
\end{align}

\subsubsection{Single-level PMCMC Sampling Algorithm}
\label{subsec:Single-level PMCMC_D}
We employ a single-level PMCMC approach to sample from $\overline{\pi}^{l,N}$. Algorithm \ref{alg:PMCMC_mckean_vlasov} integrates four essential components executed sequentially within each MCMC iteration. The complete algorithmic details are provided in Section \ref{app:single_level_details} of the Supplementary Material.

\begin{enumerate}
\item \textbf{Parameter Proposal} (Step 2(a)): Generate a candidate parameter $\theta'$ from a proposal kernel $r(\theta^{(k-1)}, \theta')$.

\item \textbf{Transition Law Approximation} (Step 2(b)): For the proposed parameter $\theta'$, construct particle-based approximations $\{\mu_{t-1+j\Delta_l,\theta'}^{l,N}\}_{j=1}^{\Delta_l^{-1}}$ of the McKean-Vlasov laws by evolving $N$ particles through discretized dynamics with empirical interaction terms.

\item \textbf{Particle Filter Likelihood Estimation} (Step 2(c)): Conditional on the approximated laws, run a particle filter with $M$ particles to estimate the marginal likelihood $\hat{p}_{\theta'}^{l,M,N}(y_{1:T})$ and generate trajectory samples.

\item \textbf{Metropolis-Hastings Step} (Step 2(d)): Accept or reject the proposed parameter using the acceptance probability
\begin{equation}
\alpha = \min\left\{1,\; \frac{\hat{p}_{\theta'}^{l,M,N}(y_{1:T})\nu(\theta')r(\theta',\theta^{(k-1)})}{\hat{p}_{\theta^{(k-1)}}^{l,M,N}(y_{1:T})\nu(\theta^{(k-1)})r(\theta^{(k-1)},\theta')}\right\}.
\label{eq:alpha_acceptance}
\end{equation}
\end{enumerate}

This acceptance-rejection procedure ensures that the Markov chain $\{\theta^{(k)},\xi_{1:T}^{(k)}\}_{k=0}^K$ has $\overline{\pi}^{l,N}$ (given in \eqref{eqn:bar_pi}) as its stationary distribution. The algorithm is an adaptation of the original PMCMC framework proposed by \cite{andrieu}. Under mild regularity conditions established therein, the resulting Markov chain produces samples from the approximate posterior distribution $\overline{\pi}^{l,N}$.

\textbf{Computational Complexity:} Each MCMC iteration has complexity $\mathcal{O}(\Delta_l^{-1}N^2T) + \mathcal{O}(\Delta_l^{-1}NMT)$. The first term arises from the transition law approximation, where particle interactions are required to compute empirical averages at each discretization step. The second term corresponds to the particle filter operations that involve propagating $M$ filtering particles through transition kernels derived from the $N$ law approximation particles.

\textbf{Implementation Details:} Initialization proceeds by sampling $\theta^{(0)}$ from the prior $\nu(\theta)$ and executing the particle filter to obtain an initial likelihood estimate. The Metropolis-Hastings step is then applied iteratively for $k=1,\ldots,K$ to generate the sample chains $\{\theta^{(k)}, \xi_{1:T}^{(k)}\}_{k=0}^K$.

For any measurable function $\varphi: \Theta \times \mathbb{R}^{Td} \rightarrow \mathbb{R}$, we approximate the expectation
$$
\int_{\Theta\times\mathsf{E}_l^T}\varphi(\theta,x_{1:T}) \overline{\pi}^{l,N}\left(d(\theta,u_1,\dots,u_T)\right)
$$
by the empirical average
\begin{equation}\label{eq:mcmc_est}
\overline{\pi}^{l,N,K}(\varphi) := \frac{1}{K+1}\sum_{k=0}^K \varphi(\theta^{(k)},x_{1:T}^{(k)}).
\end{equation}
This estimator provides consistent approximations to expectations under $\overline{\pi}^{l,N}$ as $K \to \infty$, enabling Monte Carlo inference for the MVSDE parameters.

\subsection{Multilevel PMCMC algorithm}\label{sec:smooth_pair_approx}

\subsubsection{Bi-Level Telescoping Difference}

To implement the MLMC estimation, we need to approximate the bi-level telescoping difference
\begin{equation}\label{eq:diff}
\int_{\Theta\times\mathsf{E}_l^T}\varphi(\theta,x_{1:T})
\overline{\pi}^{l,N_l}\left(d(\theta,u_1,\dots,u_T)\right)
-
\int_{\Theta\times\mathsf{E}_{l-1}^T}\varphi(\theta,x_{1:T})
\overline{\pi}^{l-1,N_{l}}\left(d(\theta,u_1,\dots,u_T)\right),
\end{equation}
where $l\in\mathbb{N}$ is fixed, $\varphi:\Theta\times\mathbb{R}^{Td}\rightarrow\mathbb{R}$ is a test function, and $N_{l}\in\mathbb{N}$ denotes the number of particles used to approximate the law. Effective MLMC estimation necessitates sampling from a coupled distribution of $\overline{\pi}^{l,N_l}$ and $\overline{\pi}^{l-1,N_{l}}$. Here, $\overline{\pi}^{l,N_l}$ is defined analogously to the single-level posterior in \eqref{eqn:bar_pi}, employing $N_l$ particles to approximate the law at discretization level $l$. Similarly, $\overline{\pi}^{l-1,N_{l}}$ corresponds to the posterior at discretization level $l-1$, but crucially utilizes the same particle count $N_l$ to approximate the law. Our bi-level MCMC algorithm, detailed in Step 3 of Algorithm \ref{MLPMCMC_mckean_vlasov_part1}, constructs a joint probability measure on $(\Theta\times\mathsf{E}_{l}^T\times\mathsf{E}_{l-1}^T,\mathscr{B}(\Theta\times\mathsf{E}_{l}^T\times\mathsf{E}_{l-1}^T))$ 
that preserves the correct marginal distributions while introducing correlation across discretization levels. This construction leverages change-of-measure techniques to enable MCMC approximation of \eqref{eq:diff}, and extend the MLMC framework to McKean–Vlasov systems with law-dependent diffusion coefficients.

To implement the bi-level MCMC component, we simultaneously approximate the measure sequences 
$\{\mu_{t-1+\Delta_l,\theta}^l,\ldots,\mu_{t,\theta}^l\}$ and $\{\mu_{t-1+\Delta_{l-1},\theta}^{l-1},\ldots,\mu_{t,\theta}^{l-1}\}$ for each $t \in \{1,\ldots,T\}$. To avoid notational confusion, we denote the latter as $\{\widetilde{\mu}_{t-1+\Delta_{l-1},\theta}^{l-1},\ldots,\widetilde{\mu}_{t,\theta}^{l-1}\}$.
We accomplish this through the coupled law approximation procedure detailed in Step 3(b)(ii) of Algorithm~\ref{MLPMCMC_mckean_vlasov_part1}. This scheme generalizes the standard law approximation from Step 2(b) of Algorithm~\ref{alg:PMCMC_mckean_vlasov} by introducing statistical dependence between fine-level ($l$) and coarse-level ($l-1$) discretizations. The coupling design is inspired by MLMC theory, where correlating approximations across consecutive discretization levels ensures that the variance of the telescoping difference estimator in~\eqref{eq:diff} decreases appropriately as $l$ increases, thereby maintaining the computational efficiency characteristic of effective multilevel methods.

\subsubsection{Coupled Posterior Distribution Framework}
\label{subsec: cou_pos_dis}
We present our bi-level MCMC methodology to approximate \eqref{eq:diff}. We denote by $\check{\mathbb{P}}_{\theta}^{N_l}$ the joint distribution of the coupled trajectory sequences 
\[
(\overline{u}_1^l,\ldots,\overline{u}_T^l)
\quad\text{and}\quad
(\widetilde{\overline{u}}_1^{l-1},\ldots,\widetilde{\overline{u}}_T^{l-1})
\]
over $[0,T]$, with corresponding expectation operator $\check{\mathbb{E}}_{\theta}^{N_l}$. Here, $\overline{u}^l_t = u_t^{1:N_l} \in \mathsf{E}_l^{N_l}$ represents the fine-level trajectory ensemble 
$$\overline{u}^l_t=\Big\{(X_{t-1+\Delta_l}^1,\ldots, X_{t-1+\Delta_l}^{N_l}), \ldots, (X_t^1,\ldots,X_t^{N_l})\Big\},$$
and $\widetilde{\overline{u}}^{l-1}_t = \widetilde{u}_t^{1:N_{l}} \in \mathsf{E}_{l-1}^{N_{l}}$ denotes the coarse-level trajectory ensemble $$\widetilde{\overline{u}}^{l-1}_t=\Big\{(\widetilde{X}_{t-1+\Delta_{l-1}}^1,\ldots, \widetilde{X}_{t-1+\Delta_{l-1}}^{N_l}), \ldots, (\widetilde{X}_t^1,\ldots,\widetilde{X}_t^{N_l})\Big\},$$ generated through the coupled law approximation procedure (Algorithm 2, Step 3(b)(ii)). The superscript $N_l$ explicitly indicates that this distribution arises from using $N_l$ particles to construct empirical approximations $\{\mu^{l,N_l}_{t,\theta}\}$ and $\{\tilde{\mu}^{l-1,N_l}_{t,\theta}\}$ of the McKean-Vlasov laws at discretization levels $l$ and $l-1$. 

For observation weighting, we define the coupled conditional density
\begin{equation}\label{eq:hk_ch}
	H_{k,\theta}(x,x';y_k) = \tfrac{1}{2}\{g_{\theta}(x',y_k) + g_{\theta}(x,y_k)\},
\end{equation}
for $(x,x',\theta)\in\mathbb{R}^{2d}\times\Theta$ and $k\in\{1,\ldots,T\}$.  
This symmetric averaging introduces correlation between discretization levels. Algorithm~\ref{MLPMCMC_mckean_vlasov_part1} produces coupled kernels for simultaneous simulation of $
 \overline{P}_{\mu_{t-1,\theta}^{l,N_l},t,\theta}^l(x_{t-1},du_t)$ and 
$ \overline{P}_{\mu_{t-1,\theta}^{l-1,N_{l}},t,\theta}^{l-1}(x_{t-1},du_t)
$,
where the underlying probability measures are generated through the pre-computed empirical approximations in Step 3(b)(ii) of Algorithm~\ref{MLPMCMC_mckean_vlasov_part1}. We denote by $\check{P}^{l}_{\theta}$ the coupled transition kernel that generates $(u_t^l, \widetilde{u}_{t}^{l-1}) \in \mathsf{E}_l \times \mathsf{E}_{l-1}$ at time $t$, conditional on the previous state $(u_{t-1}^l, \widetilde{u}_{t-1}^{l-1})$ at time $t-1$, with initial condition $(u_{0}^l, \widetilde{u}_{0}^{l-1}) = (x_0, x_0)$. This kernel $\check{P}^{l}_{\theta}$ is employed in the Delta Particle Filter (Algorithm 2, Step 3(b)(iii)). At this stage, the coupled empirical measures 
$\{\mu^{l,N_l}_{s,\theta}\}$ and $\{\tilde{\mu}^{l-1,N_l}_{s,\theta}\}$ are fixed and serve as deterministic input parameters.

The joint probability measure, serving as the target distribution for our bi-level MCMC sampler, is defined on the product space $\left(\Theta\times(\mathsf{E}_l\times\mathsf{E}_{l-1})^T,\mathscr{B}(\Theta\times(\mathsf{E}_l\times\mathsf{E}_{l-1})^T)\right)$ as
\begin{align}\label{eq:main_tar}
&\check{\pi}^{l,N_{l}}\!\left(d(\theta,u_{1:T}^l,\widetilde{u}_{1:T}^{l-1})\mid y_{1:T}\right)\\
&= 
\dfrac{
\splitdfrac{
\left\{\prod_{k=1}^T H_{k,\theta}(x_k^l,\widetilde{x}_k^{l-1};y_k)\right\}
\check{\mathbb{E}}_{\theta}^{N_{l}}\!\left[
\prod_{k=1}^T 
\check{P}_{\theta}^{N_l}\!\left(
(x_{k-1}^l,\widetilde{x}_{k-1}^{l-1}),
d(u_k^l,\widetilde{u}_{k}^{l-1})
\right)\right]}
{\times\nu(\theta)d\theta}
}{
\splitdfrac{
\int_{\Theta\times(\mathsf{E}_l\times\mathsf{E}_{l-1})^T}
\!\left\{\prod_{k=1}^T H_{k,\theta}(x_k^l,\widetilde{x}_k^{l-1};y_k)\right\}
\check{\mathbb{E}}_{\theta}^{N_{l}}\!\left[
\prod_{k=1}^T 
\check{P}_{\theta}^{N_l}\!\left(
(x_{k-1}^l,\widetilde{x}_{k-1}^{l-1}),
d(u_k^l,\widetilde{u}_{k}^{l-1})
\right)\right]}
{\times\nu(\theta)d\theta}
}.\nonumber
\end{align}

The coupled measure $\check{\pi}^{l,N_{l}}$ encodes the statistical dependence between trajectories at discretization levels $l$ and $l-1$, which is essential for effective variance reduction in multilevel methods.
To enable telescoping decomposition, we define the auxiliary weight
\[
\check{H}_{k,\theta}(x,x';y_k)
= \frac{g_{\theta}(x,y_k)}{H_{k,\theta}(x,x';y_k)},
\qquad (x,x',\theta)\in\mathbb{R}^{2d}\times\Theta,\ k\in\{1,\ldots,T\}.
\]
This ratio connects the fine-level observation density $g_{\theta}(x,y_k)$ with the averaged coupled density $H_{k,\theta}$ from~\eqref{eq:hk_ch}. 

With these definitions established, the bi-level telescoping difference in~\eqref{eq:diff} can now be expressed in terms of expectations with respect to the coupled measure $\check{\pi}^{l,N_{l}}$. The key technical challenge is that $\check{\pi}^{l,N_l}$ uses the averaged observation density $H_{k,\theta}$ to introduce correlation between levels for variance reduction, whereas our target quantities $\bar{\pi}^{l,N_l}(\varphi)$ and $\bar{\pi}^{l-1,N_l}(\varphi)$ involve the original observation densities $g_{\theta}$. The auxiliary weights $\check{H}_{k,\theta}$ enable the necessary change of measure between these distributions. Specifically, the importance weight $\prod_{k=1}^T \check{H}_{k,\theta}(x_k^l,\widetilde{x}_{k}^{l-1};y_k)$ has the critical property that its denominator $\prod_{k=1}^T H_{k,\theta}(x_k^l,\widetilde{x}_{k}^{l-1};y_k)$ exactly matches the coupled observation density term in the coupled measure $\check{\pi}^{l,N_l}$, which enables a cancellation: 
\begin{align*}
&\dfrac{
\splitdfrac{
\int_{\Theta\times(\mathsf{E}_l\times\mathsf{E}_{l-1})^T}\!\varphi(\theta,x_{1:T}^l)
\!\left\{\frac{\prod_{k=1}^T g_{\theta}(x_k^l,y_k)}{\prod_{k=1}^T H_{k,\theta}(x_k^l,\widetilde{x}_{k}^{l-1};y_k)}\right\}
\!\left\{\prod_{k=1}^T H_{k,\theta}(x_k^l,\widetilde{x}_k^{l-1};y_k)\right\}}
{
\times\check{\mathbb{E}}_{\theta}^{N_{l}}\!\left[
\prod_{k=1}^T \check{P}_{\theta}^{l}\!\left((x_{k-1}^l,\widetilde{x}_{k-1}^{l-1}),d(u_k^l,\widetilde{u}_{k}^{l-1})\right)
\right]\nu(\theta)d\theta}
}
{
\splitdfrac{
\int_{\Theta\times(\mathsf{E}_l\times\mathsf{E}_{l-1})^T}
\!\left\{\frac{\prod_{k=1}^T g_{\theta}(x_k^l,y_k)}{\prod_{k=1}^T H_{k,\theta}(x_k^l,\widetilde{x}_{k}^{l-1};y_k)}\right\}
\!\left\{\prod_{k=1}^T H_{k,\theta}(x_k^l,\widetilde{x}_k^{l-1};y_k)\right\}}
{
\times\check{\mathbb{E}}_{\theta}^{N_{l}}\!\left[
\prod_{k=1}^T \check{P}_{\theta}^{l}\!\left((x_{k-1}^l,\widetilde{x}_{k-1}^{l-1}),d(u_k^l,\widetilde{u}_{k}^{l-1})\right)
\right]\nu(\theta)d\theta}
}.
\end{align*}
The cancellation restores the fine-level posterior $\bar{\pi}^{l,N_l}$ with the original observation densities $\prod_{k=1}^T g_{\theta}(x_k^l,y_k)$. 

Thus, we sample coupled trajectories $((x_1^l, \ldots, x_T^l), (\widetilde{x}_1^{l-1}, \ldots, \widetilde{x}_T^{l-1}))$ from the coupled measure $\check{\pi}^{l,N_l}$, which employs the averaged observation density $H_{k,\theta}(x_k^l, \widetilde{x}_k^{l-1}; y_k)$ to induce strong correlation between discretization levels. These trajectories are then reweighted using importance weights $\prod_{k=1}^T \check{H}_{k,\theta}(x_k^l, \widetilde{x}_k^{l-1}; y_k)$ and $\prod_{k=1}^T \check{H}_{k,\theta}(\widetilde{x}_k^{l-1}, x_k^l; y_k)$ to compute expectations under the correct marginal posteriors $\bar{\pi}^{l,N_l}$ and $\bar{\pi}^{l-1,N_l}$. This change-of-measure technique is key to variance reduction in the multilevel framework: coupled sampling ensures high correlation between fine- and coarse-level trajectories, reducing variance, while importance weights guarantee expectations align with the target distributions. 

The telescoping difference could be rewritten as
$$
\frac{\int_{\Theta\times(\mathsf{E}_l\times\mathsf{E}_{l-1})^T}\varphi(\theta,x_{1:T}^l)
\left\{\prod_{k=1}^T \check{H}_{k,\theta}(x_k^l,\widetilde{x}_{k}^{l-1};y_k)\right\}
\check{\pi}^{l,N_{l}}\left(d(\theta,u_{1:T}^l,\widetilde{u}_{1:T}^{l-1})\right)
}
{\int_{\Theta\times(\mathsf{E}_l\times\mathsf{E}_{l-1})^T}
\left\{\prod_{k=1}^T \check{H}_{k,\theta}(x_k^l,\widetilde{x}_{k}^{l-1};y_k)\right\}
\check{\pi}^{l,N_{l}}\left(d(\theta,u_{1:T}^l,\widetilde{u}_{1:T}^{l-1})\right)} 
$$
\begin{equation}\label{eq:main_eq}
-\frac{\int_{\Theta\times(\mathsf{E}_l\times\mathsf{E}_{l-1})^T}\varphi(\theta,\widetilde{x}_{1:T}^{l-1})
\left\{\prod_{k=1}^T \check{H}_{k,\theta}(\widetilde{x}_{k}^{l-1},x_k^l;y_k)\right\}
\check{\pi}^{l,N_{l}}\left(d(\theta,u_{1:T}^l,\widetilde{u}_{1:T}^{l-1})\right)
}
{\int_{\Theta\times(\mathsf{E}_l\times\mathsf{E}_{l-1})^T}
\left\{\prod_{k=1}^T \check{H}_{k,\theta}(\widetilde{x}_{k}^{l-1},x_k^l;y_k)\right\}
\check{\pi}^{l,N_{l}}\left(d(\theta,u_{1:T}^l,\widetilde{u}_{1:T}^{l-1})\right)}.
\end{equation}

With the help of \eqref{eq:main_eq}, we are going to develop an efficient MCMC algorithm for sampling from the coupled target measure \eqref{eq:main_tar}. The key challenge lies in designing an effective coupling between the discretized posterior distributions $\overline{\pi}^{l,N_l}$ and $\overline{\pi}^{l-1,N_{l}}$ using MCMC techniques. No standard MCMC methodology exists for this purpose beyond trivial independent coupling, which fails to provide necessary variance reduction. To tackle it, we construct the joint probability measure \eqref{eq:main_tar} that induces statistical dependence between the coupled trajectories $(u_{1:T}^l,\widetilde{u}_{1:T}^{l-1})$. 

\subsubsection{Multilevel PMCMC Sampling Algorithm}
\label{subsec:Multilevel-level PMCMC_D}

Our multilevel PMCMC methodology combines single-level and bi-level components to estimate the MLMC approximation. The framework integrates single-level PMCMC for base level computation with bi-level MCMC for telescoping difference estimation across multiple discretization levels. The complete algorithmic details are provided in Section \ref{app:multilevel_details} of the Supplementary Material. We now provide an overview of Algorithm \ref{MLPMCMC_mckean_vlasov_part1}:

\textbf{1. Base Level Computation} (Step 2): Execute Algorithm~\ref{alg:PMCMC_mckean_vlasov} at base discretization level $l_{\star} \in \mathbb{N}$ for $K_{l_{\star}}$ iterations with $N_{l_{\star}}$ particles in law approximation to obtain base-level samples $\{\theta^{(k)}_{l_{\star}}, x^{(k)}_{l_{\star},1:T}\}_{k=0}^{K_{l_{\star}}}$ and the corresponding likelihood estimation.

\textbf{2. Bi-level Telescoping Difference Approximation} (Step 3): For each level $l \in \{l_{\star}+1,\ldots,L\}$, we independently run the bi-level MCMC to compute the corresponding telescoping difference.

\textbf{2.1. Parameter Proposal} (Step 3(b)(i)): For each level $l$, generate candidate parameter $\theta'$ from proposal kernel $r(\theta^{(k-1)}, \theta')$.

\textbf{2.2. Coupled Law Approximation} (Step 3(b)(ii)): For the proposed parameter $\theta'$, construct synchronized particle-based approximations $\{\mu_{t-1+j\Delta_l,\theta'}^{l,N_l}\}_{j=1}^{\Delta_l^{-1}}$ and $\{\widetilde{\mu}_{t-1+j\Delta_{l-1},\theta'}^{l-1,N_l}\}_{j=1}^{\Delta_{l-1}^{-1}}$ of the McKean--Vlasov laws at consecutive discretization levels by evolving $N_l$ coupled particle pairs with correlated Brownian increments $W$ and $\widetilde{W}$, where $\Delta \widetilde{W}_{t-1+j\Delta_{l-1}}^i = \Delta W_{t-1+(2j-1)\Delta_l}^i + \Delta W_{t-1+2j\Delta_l}^i$.

\textbf{2.3. Delta Particle Filter Likelihood Estimation} (Step 3(b)(iii)): For a given parameter $\theta' \in \Theta$, conditional on the sequence of coupled approximated empirical measures $\{\mu^{l,N_l}_{t,\theta'}, \widetilde{\mu}^{l-1,N_l}_{t,\theta'}\}$ generated in Step 3(b)(ii), we employ a Delta Particle Filter with $M$ coupled particle pairs to estimate the joint likelihood $\hat{p}^{l,M,N_l}_{\theta'}(y_{1:T})$ and generate coupled trajectory samples for subsequent use in the Metropolis-Hastings step.

\textbf{2.4. Metropolis-Hastings Step} (Step 3(b)(iv)): Accept or reject the proposed parameter using the acceptance probability
\begin{align}
    \alpha = \min\left\{1,\; \frac{\hat{p}^{l,M,N_l}_{\theta'}(y_{1:T})\nu(\theta')r(\theta',\theta^{(k-1)})}{\hat{p}^{l,M,N_l}_{\theta^{(k-1)}}(y_{1:T})\nu(\theta^{(k-1)})r(\theta^{(k-1)},\theta')}\right\}.
    \label{eq:alpha}
\end{align}

\textbf{2.5. Importance Weight Correction for Telescoping Difference}: Apply importance weighting correction using the auxiliary weight function $\check{H}_{k,\theta}(x,x';y_k) = \frac{g_\theta(x,y_k)}{H_{k,\theta}(x,x';y_k)}$ to obtain samples from the correct posterior difference $\bar{\pi}^{l,N_l}(\varphi) - \bar{\pi}^{l-1,N_l}(\varphi)$, as rigorously justified through equation \eqref{eq:main_eq}.

\textbf{Bi-level MCMC Computational Complexity}: Each bi-level MCMC iteration has complexity $\mathcal{O}(\Delta_l^{-1}T\{MN_l + N_l^2\})$. The $\mathcal{O}(\Delta_l^{-1}N_l^2 T)$ term arises from coupled law approximation, while $\mathcal{O}(\Delta_l^{-1}MN_l T)$ corresponds to Delta Particle Filter operations.

\textbf{Bi-level Difference Estimator}: Using coupled samples $\{\theta^l(k), u^l_{1:T}(k), \widetilde{u}^{l-1}_{1:T}(k)\}_{k=0}^{K_l}$ generated by the bi-level MCMC for each level $l \in \{l_{\star}+1,\ldots,L\}$, we approximate the telescoping difference through
\hspace{1cm}\begin{align}\label{eq:mcmc_bl_est}
&\hspace{-1cm}\overline{\pi}^{l,N_l,K_l}(\varphi) - \overline{\pi}^{l-1,N_{l},K_l}(\varphi) \\:= &
\frac{
\frac{1}{K_l+1}\sum_{k=0}^{K_l}\varphi(\theta^l(k),x_{1:T}^{l}(k))
\left\{\prod_{s=1}^T \check{H}_{s,\theta^l(k)}(x_s^l(k),\widetilde{x}_{s}^{l-1}(k);y_s)\right\}
}{\frac{1}{K_l+1}\sum_{k=0}^{K_l}
\left\{\prod_{s=1}^T \check{H}_{s,\theta^l(k)}(x_s^l(k),\widetilde{x}_{s}^{l-1}(k);y_s)\right\}} \nonumber\\
&- \frac{
\frac{1}{K_l+1}\sum_{k=0}^{K_l}\varphi(\theta^l(k),\widetilde{x}_{1:T}^{l-1}(k))
\left\{\prod_{s=1}^T \check{H}_{s,\theta^l(k)}(\widetilde{x}_{s}^{l-1}(k),x_s^l(k);y_s)\right\}
}{\frac{1}{K_l+1}\sum_{k=0}^{K_l}
\left\{\prod_{s=1}^T \check{H}_{s,\theta^l(k)}(\widetilde{x}_{s}^{l-1}(k),x_s^l(k);y_s)\right\}}. \nonumber
\end{align}

\textbf{Complete Multilevel Framework}: For test function $\varphi:\Theta\times\mathbb{R}^{Td}\rightarrow\mathbb{R}$, define
$$
\pi(\varphi) := \int_{\Theta\times\mathbb{R}^{Td}}\varphi(\theta,x_{1:T}) \pi\left(d(\theta,x_{1:T})\, | \, y_{1:T}\right),
$$
which we assume is finite. The multilevel estimator is
\begin{equation}\label{eq:mlmc_final_estimator}
\widehat{\pi}(\varphi) := \overline{\pi}^{l_{\star},N_{l_{\star}},K_{l_{\star}}}(\varphi) + \sum_{l=l_{\star}+1}^L\left\{
\overline{\pi}^{l,N_l,K_l}(\varphi) - \overline{\pi}^{l-1,N_{l},K_l}(\varphi)
\right\},
\end{equation}
where $\overline{\pi}^{l_{\star},N_{l_{\star}},K_{l_{\star}}}(\varphi)$ is computed via \eqref{eq:mcmc_est} and each difference $\overline{\pi}^{l,N_l,K_l}(\varphi) - \overline{\pi}^{l-1,N_{l},K_l}(\varphi)$ is computed via \eqref{eq:mcmc_bl_est}.
The estimator is subject to six sources of error: base-level MCMC error, multilevel MCMC error, cross-level interaction error, base-level particle approximation error, multilevel particle approximation error, and time discretization bias. Through appropriate selection of parameters $L$, $\{K_{l_{\star}},\ldots,K_L\}$, and $\{N_{l_{\star}},\ldots,N_L\}$, these errors can be systematically controlled, as analyzed in the next section.

\section{Theoretical Results}\label{sec:theory}

In this section, we establish the theoretical results of the proposed methodology. Our analysis relies on three regularity assumptions (A\ref{ass:1})–(A\ref{ass:3}), detailed in Section~\ref{app:assumptions} of the Supplementary Material. The proofs build upon the techniques in Appendix~\ref{app:appendix}, with extensions to accommodate the law-dependent diffusion coefficients. Throughout this section, $C$ denotes a generic finite constant whose value may vary from line to line.

\subsection{Main Results}
\label{sec:main_result}
Based on the regularity assumptions above and auxiliary technical results from Section~\ref{app:appendix} of the Supplementary Material, we present the main theoretical contribution of this work. Throughout this analysis, $\mathbb{E}$ denotes the expectation operator with respect to the randomness defined in (\ref{eq:mlmc_final_estimator}). For notational simplicity, we omit the time horizon $T$ from our expressions.

\begin{theorem}[MSE Bound for Multilevel PMCMC Estimator]\label{thm:main_convergence}
Under Assumptions (A\ref{ass:1})-(A\ref{ass:3}), let $\varphi\in\mathcal{C}_b^2(\Theta\times\mathbb{R}^{dT})\cap\mathcal{B}_b(\Theta\times\mathbb{R}^{dT})$ be a test function, there exists a constant $C<\infty$ such that for any multilevel configuration specified by integers $l_{\star}<L$ and particle/iteration counts $(N_{l_{\star}},\ldots,N_L,K_{l_{\star}},\ldots,K_L)\in\mathbb{N}^{2(L-l_{\star}+1)}$, the mean square error satisfies
\begin{equation}\label{eq:mse_decomposition}
\mathbb{E}\left[\left(\widehat{\pi}(\varphi) - \pi(\varphi)\right)^2\right] \leq C\left(\varepsilon_{\mathrm{MCMC}}^{(l_{\star})} + \varepsilon_{\mathrm{MCMC}}^{(\mathrm{ML})} + \varepsilon_{\mathrm{cross}} + \varepsilon_{\mathrm{part}}^{(l_{\star})} + \varepsilon_{\mathrm{part}}^{(\mathrm{ML})} + \varepsilon_{\mathrm{disc}}\right),
\end{equation}
where the error components are defined as follows:
\begin{itemize}
\item Base-level MCMC error: $\displaystyle \varepsilon_{\mathrm{MCMC}}^{(l_{\star})} := \frac{1}{K_{l_{\star}}+1}$,

\item Multilevel MCMC error: $\displaystyle \varepsilon_{\mathrm{MCMC}}^{(\mathrm{ML})} := \sum_{l=l_{\star}+1}^L\frac{\Delta_l}{K_l+1}$,

\item Cross-level interaction error: $\displaystyle \varepsilon_{\mathrm{cross}} := \sum_{\substack{l,q=l_{\star}+1\\l\neq q}}^L\frac{\sqrt{\Delta_l\Delta_q}}{(K_l+1)(K_q+1)}$,

\item Base-level particle approximation error: $\displaystyle \varepsilon_{\mathrm{part}}^{(l_{\star})} := \frac{1}{N_{l_{\star}}}$,

\item Multilevel particle approximation error: $\displaystyle \varepsilon_{\mathrm{part}}^{(\mathrm{ML})} := \left(\sum_{l=l_{\star}+1}^L\frac{\sqrt{\Delta_l}}{\sqrt{N_l}}\right)^2$,

\item Time discretization bias: $\displaystyle \varepsilon_{\mathrm{disc}} := \Delta_L^2$.
\end{itemize}
\end{theorem}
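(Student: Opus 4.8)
The plan is to start from a telescoping identity that compares the estimator \eqref{eq:mlmc_final_estimator} against the true target $\pi(\varphi)$. Writing $\pi^l(\varphi)$ for the exact discretized posterior expectation (the $N\to\infty$ limit of the stationary target $\overline{\pi}^{l,N}(\varphi)$) and using $\pi^{l_\star}(\varphi)+\sum_{l=l_\star+1}^L\{\pi^l(\varphi)-\pi^{l-1}(\varphi)\}=\pi^L(\varphi)$, I would decompose
$$\widehat{\pi}(\varphi)-\pi(\varphi)=D_{l_\star}+\sum_{l=l_\star+1}^L D_l+\big(\pi^L(\varphi)-\pi(\varphi)\big),$$
where $D_{l_\star}:=\overline{\pi}^{l_\star,N_{l_\star},K_{l_\star}}(\varphi)-\pi^{l_\star}(\varphi)$ and $D_l:=\big(\overline{\pi}^{l,N_l,K_l}(\varphi)-\overline{\pi}^{l-1,N_l,K_l}(\varphi)\big)-\big(\pi^l(\varphi)-\pi^{l-1}(\varphi)\big)$. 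Each $D$ is then split into an \emph{MCMC part} $M_l$ (the deviation of the finite-$K$ estimator from its stationary limit $\overline{\pi}^{l,N_l}$, which for the level differences is the self-normalized importance-sampling limit identified in \eqref{eq:main_eq}) and a deterministic \emph{particle part} $P_l$ (the gap between that stationary limit and $\pi^l$). The final bracket is the deterministic discretization bias.

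Next I would square and take expectations, grouping the terms by their mode of control. The statistical contribution $\mathbb{E}[(\sum_l M_l)^2]$ equals $\sum_l\mathrm{Var}(M_l)+(\sum_l\mathbb{E}[M_l])^2$, since the base chain and the $L-l_\star$ bi-level chains are run independently (Step 3), so all cross-covariances vanish. Using Assumption A\ref{ass:3} (uniform ergodicity with level-uniform minorization) gives $\mathrm{Var}(M_{l_\star})=O(1/K_{l_\star})$ with $\mathbb{E}[M_{l_\star}]=0$, and a ratio-estimator variance/bias for each level-difference chain. The coupling built in Step 3(b)(ii) — correlated increments $\Delta\widetilde W=\Delta W+\Delta W$ together with the averaged weight $H_{k,\theta}$ of \eqref{eq:hk_ch} — forces the difference functional to have $L^2$-size $O(\sqrt{\Delta_l})$, hence $\mathrm{Var}(M_l)=O(\Delta_l/K_l)$ (yielding $\varepsilon_{\mathrm{MCMC}}^{(l_\star)}+\varepsilon_{\mathrm{MCMC}}^{(\mathrm{ML})}$) and $\mathbb{E}[M_l]=O(\sqrt{\Delta_l}/K_l)$, whose independent products over $l\neq q$ reproduce exactly $\varepsilon_{\mathrm{cross}}$ (the diagonal squared-mean terms $O(\Delta_l/K_l^2)$ being absorbed into $\varepsilon_{\mathrm{MCMC}}^{(\mathrm{ML})}$). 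Boundedness and strict positivity of $g_\theta$ (Assumption A\ref{ass:2}) keep the weights $\check H_{k,\theta}$ and the self-normalizing ratios of \eqref{eq:mcmc_bl_est} uniformly controlled, so these estimates hold uniformly in $l$.

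For the particle part I would invoke the law-approximation/propagation-of-chaos estimates of Appendix~\ref{app:appendix}, extended to the law-dependent diffusion, to bound $P_{l_\star}^2=O(1/N_{l_\star})$ and, for the coupled differences, $\|P_l\|=O(\sqrt{\Delta_l/N_l})$; since these deterministic errors are correlated across levels I would combine them by Minkowski's inequality, $(\sum_l\|P_l\|)^2=(\sum_l\sqrt{\Delta_l/N_l})^2$, which is precisely $\varepsilon_{\mathrm{part}}^{(\mathrm{ML})}$. The discretization bias is controlled by the weak Euler–Maruyama rate for the McKean–Vlasov law under Assumption A\ref{ass:1}, giving $|\pi^L(\varphi)-\pi(\varphi)|=O(\Delta_L)$ and hence $\varepsilon_{\mathrm{disc}}=\Delta_L^2$. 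All remaining mixed cross terms — MCMC against particle, and either against the deterministic discretization bias — are absorbed into the six listed quantities by Cauchy–Schwarz together with Young's inequality $2ab\le a^2+b^2$, after which collecting terms yields \eqref{eq:mse_decomposition}.

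The main obstacle is the pair of second-moment estimates underlying the $\Delta_l$ factors for the coupled level differences when the \emph{diffusion} coefficient, not merely the drift, is law-dependent. One must show that the synchronously coupled fine/coarse particle systems — whose diffusions carry $\overline{\zeta}_{2,\theta}$ — remain $O(\sqrt{\Delta_l})$ close in $L^2$, uniformly over $\mathcal{P}(\mathbb{R}^d)$, so that both the MCMC variance $O(\Delta_l/K_l)$ and the coupled particle error $O(\Delta_l/N_l)$ are attained. This requires new bounds on the coupled Euler–Maruyama increments that propagate the regularity of \emph{both} kernels $\zeta_{1,\theta}$ and $\zeta_{2,\theta}$ through the stochastic-integral term, where the state-only-diffusion argument of \cite{jasra2025bayesian} no longer applies; it is here that the uniform ellipticity and the uniform $\mathcal{C}_b^2$ bounds of Assumption A\ref{ass:1} are indispensable.
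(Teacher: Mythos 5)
Your proposal is correct and follows essentially the same route as the paper's proof: the identical three-term decomposition (base-level error, summed coupled level-differences, discretization bias), the same split of each term into an MCMC part and a particle part, and the same key estimates --- variance $O(\Delta_l/(K_l+1))$ from the $L^2$-size $O(\sqrt{\Delta_l})$ of the coupled weighted-difference functional (the paper's Lemmas S.3--S.4), ratio-estimator bias $O(\sqrt{\Delta_l}/(K_l+1))$ whose products across independent chains give $\varepsilon_{\mathrm{cross}}$ (Lemma S.5), particle biases $O(N_{l_\star}^{-1/2})$ and $O(\sqrt{\Delta_l/N_l})$ (Lemmas S.2 and S.1), and the weak error $O(\Delta_L)$. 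Your organization of the statistical term via the variance-plus-squared-mean identity is only a cosmetic repackaging of the paper's diagonal/off-diagonal expansion, since its off-diagonal bound also rests on the independence of the bi-level chains.
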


The proof of Theorem \ref{thm:main_convergence} is provided in Section \ref{app:main_proof} of the Supplementary Material.
The practical implications of Theorem \ref{thm:main_convergence} can be quantified as follows. Let $\epsilon > 0$ denote a target accuracy level. Under the standard assumption that $M = \mathcal{O}(T)$ particles are employed in the particle filter components (Steps 2(c) of Algorithm \ref{alg:PMCMC_mckean_vlasov} and Steps 3(b)(iii) of Algorithm \ref{MLPMCMC_mckean_vlasov_part1}), as is typical in practice \citep{andrieu}, and neglecting constants associated with the observation horizon $T$, the total computational cost of our multilevel estimator \eqref{eq:mlmc_final_estimator} is $\mathcal{O}\left(\sum_{l=l_{\star}}^L I_l \Delta_l^{-1} N_l^2\right)$.
Within the MLMC framework, where the computational burden of the base level $l_{\star}$ becomes negligible, optimal parameter selection yields $I_l = \mathcal{O}(\epsilon^{-2}\Delta_l^{6/7})$, $N_l = \mathcal{O}(\epsilon^{-2}\Delta_l^{1/2})$, and $L = \mathcal{O}(|\log(\epsilon)|)$. This configuration achieves an upper bound on the MSE $\mathbb{E}\left[\left(\widehat{\pi}(\varphi) - \pi(\varphi)\right)^2\right]$ of $\mathcal{O}(\epsilon^2)$ while incurring computational cost $\mathcal{O}(\epsilon^{-6})$.
By contrast, a conventional single-level approach employing only level $L$ via Algorithm \ref{alg:PMCMC_mckean_vlasov} requires computational cost $\mathcal{O}(\epsilon^{-7})$ to attain the same mean square error of $\mathcal{O}(\epsilon^2)$; i.e. we reduce the cost by an order of magnitude. Consequently, our multilevel methodology achieves an order-of-magnitude reduction in computational complexity relative to standard single-level PMCMC approaches.

\section{Numerical Simulations}\label{sec:numerics}
This section introduces a neuronal activity model that utilizes a multidimensional MVSDE with a non-globally Lipschitz structure to capture complex neural dynamics. The model’s three-dimensional formulation is elaborated in Section \ref{sec:3DNeuron}. Numerical simulation results are presented in Section \ref{sec:Simulation}.

\subsection{3D Neuron Model}
\label{sec:3DNeuron}

This study examines the neuronal activity model from \cite{dos2022simulation}, employing a MVSDE with non-globally Lipschitz structure. The model operates in three dimensions and incorporates specifically designed drift and diffusion terms to capture neuronal dynamics. The associated MVSDE adopts a general form, 
\begin{equation}
\label{eq:3d}
dX_t = a( t, x_t, \mu_t) dt +  b( t, x_t, \mu_t) dW_t,
\end{equation}
where the drift term and diffusion coefficient are expressed as follows:
\begin{align*}
a \left( t, x_t, \mu_t \right) 
&:= \left( \begin{array}{c}
x_1 - (x_1)^3 / 3 - x_2 + I - \int_{\mathbb{R}^3} J \left( x_1 - V_{rev} \right) z_3 d \mu (z) \vspace{0.1cm} \\
c \left( x_1 + a - b x_2 \right) \vspace{0.1cm}\\
a_r \frac{ T_{max} (1 - x_3) }{ 1 + \exp ( - \lambda ( x_1 - V_T)) } - a_d x_3
\end{array} \right), \\
b \left( t, x_t, \mu_t \right) 
&:= \left( \begin{array}{ccc}
b_{ext} & 0 & - \int_{\mathbb{R}^3} b_J \left( x_1 - V_{rev} \right) z_3 d \mu (z) \vspace{0.1cm}\\
0 & 0 & 0 \vspace{0.1cm}\\
0 & b_{32} (x) & 0
\end{array} \right),
\end{align*}
with
\begin{equation*}
b_{32} (x) := \mathbbm{1}_{\{x_{3} \in (0,1)\}} \sqrt{ a_r \frac{ T_{max} (1 - x_3) }{ 1 + \exp ( - \lambda ( x_1 - V_T)) } + a_d x_3 } \ \Gamma \exp ( - \Lambda / (1 - (2x_3 - 1)^2)).
\end{equation*}
In the mean-field interaction terms $\int_{\mathbb{R}^3} J(x_1 - V_{\text{rev}}) z_3 \, d\mu(z)$ and $\int_{\mathbb{R}^3} b_J(x_1 - V_{\text{rev}}) z_3 \, d\mu(z)$ appearing in the drift and diffusion coefficients, the variable $z = (z_1, z_2, z_3)$ represents the state of an arbitrary neuron in the population at time $t$, distinct from the focal neuron state $x = (x_1, x_2, x_3)$. Since the integration is with respect to $z$, the functions $J(x_1 - V_{\text{rev}})$ and $b_J(x_1 - V_{\text{rev}})$ can be treated as constants (depending only on the focal neuron state $x_1$), and the expectation is taken with respect to the third component $z_3$ of the population distribution. The two mean-field interaction terms $\int_{\mathbb{R}^3} J (x_1 - V_{rev}) z_3 d\mu(z)$ and $\int_{\mathbb{R}^3} b_J (x_1 - V_{rev}) z_3 d\mu(z)$ represent the population-averaged synaptic input to each neuron, where $z_3$ denotes the synaptic gating variable of other neurons in the network, creating the McKean-Vlasov coupling structure.

For a detailed description of the physical interpretation of state variables and model parameters, see Section~\ref{app:neuron_model_details} of the Supplementary Material.


We implemented numerical simulations for the aforementioned model using discrete observations $\{Y_k\}_{k=1}^{T}$ recorded at unit time intervals. The observation process follows a conditional Gaussian distribution given by
\begin{equation}
Y_k | X_k = x_k \sim \mathcal{N}(x_k, \Sigma), \quad k \in \{1, 2, \ldots, T\},
\end{equation}
where $T=50$ is selected as the final time, $\mathcal{N}(x_k, \Sigma)$ denotes a multivariate normal distribution with mean vector $\mu = x_k$ and diagonal covariance matrix $\Sigma = \text{diag}(\sigma_1^2, \sigma_2^2, \sigma_3^2)$, with $\sigma_1^2, \sigma_2^2, \sigma_3^2$ denoting the variances of the observation noise for the three dimensions respectively.

The initial condition is specified as follows:
\begin{equation*}
X_0 \sim \mathcal{N} \left( \left( \begin{array}{c} V_0 \\ w_0 \\ y_0 \end{array} \right) , \left( \begin{array}{ccc} \sigma_{V_0} & 0 & 0 \\ 0 & \sigma_{w_0} & 0 \\ 0 & 0 & \sigma_{y_0} \end{array} \right) \right).
\end{equation*}
This simulation study employs the true parameter values listed in Table \ref{tab:model_parameters} of the Supplementary Material to generate synthetic data, with the observation noise levels fixed at $(\sigma_1, \sigma_2, \sigma_3) = (0.2, 0.1, 0.02)$. These distinct variance settings are specifically selected to accommodate the varying numerical magnitudes inherent to the membrane potential $V$, the recovery variable $w$, and the gating variable $y$ within the neuron model.

\subsection{Simulation Results}
\label{sec:Simulation}

We treated the parameters $\{I, J, c, \lambda, b_{ext}, \Gamma, \sigma_1, \sigma_2, \sigma_3\}$ as unknowns to be inferred, chosen for their minimal cross-parameter dependencies. Independent Gaussian priors are assigned to the parameters $I$ and $J$. As $\{c, \lambda, b_{ext}, \Gamma, \sigma_1, \sigma_2, \sigma_3\}$ must be positive, we infer their log-transformed counterparts, $\log(c)$, $\log(\lambda)$, $\log(b_{ext})$, $\log(\Gamma)$, $\log(\sigma_1)$, $\log(\sigma_2)$ and $\log(\sigma_3)$, each assigned independent Gaussian prior distributions. The MCMC sampling utilizes Gaussian random walk proposals with a diagonal covariance matrix structure. The sampler is tuned to optimize mixing properties and ensure robust convergence diagnostics. Simulation parameters adhere to the specifications detailed in Section \ref{sec:theory}.

To assess computational efficiency, we conducted a comparative analysis of the logarithmic relationship between computational cost and mean squared error (MSE) for both the PMCMC and MLPMCMC algorithms applied to the 3D Neuron Model. Specifically, we analyzed how the MSE decreases as a function of computational cost by examining the relationship between $\log(\mathrm{COST})$ and $\log(\mathrm{MSE})$. For each algorithm, the convergence rate is estimated numerically as the slope of the linear regression line fitted to the data points $(\log(\mathrm{COST}), \log(\mathrm{MSE}))$ obtained from multiple simulation runs over a range of discretization levels. This slope quantifies the rate at which the MSE decreases as computational cost increases, providing a direct measure of algorithmic efficiency.

The estimated convergence rates presented in Table \ref{tab:rate_3d} of the Supplementary Material reveal convergence rates of approximately $-3$ for the multilevel method, corresponding to a computational complexity of $\mathcal{O}(\epsilon^{-6})$, while the single-level method exhibits rates near $-3.5$, indicating a complexity of $\mathcal{O}(\epsilon^{-7})$. 
The empirically observed rates show good agreement with the theoretical analysis; the slight discrepancies are expected due to inherent uncertainties in estimating the underlying distribution. The superior computational scaling of the multilevel approach demonstrates its efficiency advantage over the conventional single-level methodology.

The convergence behavior of the proposed methods applied to the 3D neuron model is illustrated in the Supplementary Material: Figure~\ref{fig:trace_d3_pmcmc} presents the trace plots for PMCMC, Figure~\ref{fig:Trace_d3_ml} presents the trace plots for MLPMCMC, and Figure~\ref{fig:incre_d3_ml2} displays the running mean of increments for MLPMCMC. With 10000 iterations, both PMCMC and MLPMCMC exhibit satisfactory mixing properties across the parameter space, as evidenced by the trace plots. Furthermore, the running mean of increments for MLPMCMC demonstrates gradual convergence, confirming the ergodicity of the multilevel estimator.




\section{Conclusion}
\label{sec:Conclusion}
In this work, we have developed a general Bayesian framework for the estimation of latent states and parameters in PO-MVSDEs with fully law-dependent drift and diffusion coefficients. Our methodology introduces the first PMCMC and MLPMCMC algorithms capable of handling this general class of mean-field models. A key innovation lies in the multilevel coupling strategy, which synchronizes the evolution of particle-system laws across discretization levels and constructs correlated likelihood estimators through a coupled Delta Particle Filter. This design preserves the variance reduction properties characteristic of MLMC while accommodating the nonlinearities introduced by law-dependent stochasticity.

We established that the proposed MLPMCMC algorithm achieves an optimal mean square error of order 
 $O(\varepsilon^2)$ with computational cost $O(\varepsilon^{-6})$, improving upon the $O(\varepsilon^{-7})$ complexity of single-level PMCMC. These results hold under standard regularity and ellipticity assumptions on the law-dependent diffusion coefficient, demonstrating the robustness of the multilevel methodology even in the presence of distribution-dependent volatility. Numerical experiments on a three-dimensional neuron model with population-level noise further validate our theoretical findings and highlight the practical relevance of our approach for modeling systems where realistic volatility structures are essential.

Overall, this work significantly broadens the applicability of multilevel PMCMC methods to interaction-driven stochastic systems and provides a principled path forward for scalable, accurate inference in models where both drift and diffusion exhibit mean-field dependence. Future research may extend these techniques to high-dimensional interacting systems, jump–diffusion settings, or adaptive multilevel schemes that automatically tune discretization and particle allocations (see, e.g., \cite{ning2023iterated,ionides2024iterated}).

\section*{Supplementary Materials}
The supplementary material provides detailed descriptions of the single-level and multilevel PMCMC algorithms, complete pseudocode, detailed description of the 3D Neuron Model, additional tables and figures, regularity assumptions, and full proofs of the theoretical results.

\section*{Disclosure Statement}
The authors report there are no other competing interests to declare.

\section*{Data Availability Statement}
The code and data to reproduce our numerical results are publicly available at \url{https://github.com/aminwu-111/MV_SDE_FULL}.

\bibliography{bibliography.bib}
\appendix

\clearpage
\setcounter{page}{1}
\renewcommand{\thepage}{S\arabic{page}}

\renewcommand{\thesection}{S}
\renewcommand{\theequation}{S.\arabic{equation}}
\renewcommand{\thelemma}{S.\arabic{lemma}}
\renewcommand{\theassumption}{S.\arabic{assumption}}
\renewcommand{\thetheorem}{S.\arabic{theorem}}
\renewcommand{\thefigure}{F.\arabic{figure}}
\setcounter{figure}{0}
\renewcommand{\thetable}{T.\arabic{table}}
\setcounter{table}{0}
\section{Supplementary Materials}

\renewcommand{\theequation}{S.\arabic{equation}}
\setcounter{equation}{0}

\startcontents[supp]
\printcontents[supp]{l}{1}{\section*{Supplementary Table of Contents}}

\subsection{Single-level PMCMC Algorithm Details}
\label{app:single_level_details}

This section provides the complete algorithmic details for the single-level PMCMC sampling algorithm described in Section \ref{subsec:Single-level PMCMC_D} of the main text.

1. \textbf{Parameter Proposal} (Step 2(a)): Generate a candidate parameter $\theta'$ from a proposal kernel $r(\theta^{(k-1)}, \theta')$.

2. \textbf{Transition Law Approximation} (Step 2(b)): For the proposed parameter $\theta'$, construct particle-based approximations $\{\mu_{t-1+j\Delta_l,\theta'}^{l,N}\}_{j=1}^{\Delta_l^{-1}}$ of the McKean-Vlasov laws at each time step $t\in\{1,\ldots,T\}$. Initialize the empirical measure as
\begin{equation}
\mu_{t-1,\theta'}^{l,N}(dx) = 
\begin{cases}
\delta_{x_0}(dx), & \text{if } t=1, \\
\frac{1}{N}\sum_{i=1}^N\delta_{X_{t-1}^i}(dx), & \text{if } t>1.
\end{cases}
\label{eq:sing_mu_initialization}
\end{equation}
then evolving $N$ particles through the subintervals of $[t-1,t]$ for $j=1,\ldots,\Delta_l^{-1}$ via the discretized dynamics
\begin{align}
X_{t-1+j\Delta_l}^i &= X_{t-1+(j-1)\Delta_l}^i + a_{\theta'}\left(X_{t-1+(j-1)\Delta_l}^i,\overline{\zeta}_{1,\theta'}(X_{t-1+(j-1)\Delta_l}^i,\mu_{t-1+(j-1)\Delta_l,\theta'}^{l,N})\right)\Delta_l \notag \\
&\quad + b_{\theta'}\left(X_{t-1+(j-1)\Delta_l}^i,\overline{\zeta}_{2,\theta'}(X_{t-1+(j-1)\Delta_l}^i,\mu_{t-1+(j-1)\Delta_l,\theta'}^{l,N})\right)\Delta W_{t-1+j\Delta_l}^i,
\label{eq:particle_update}
\end{align}
where $\Delta W_{t-1+j\Delta_l}^i \stackrel{\text{iid}}{\sim} \mathcal{N}_d(0, \Delta_l I_d)$ is a $d$-dimensional Gaussian random variable with mean $0$ and covariance $\Delta_l I_d$, with $I_d$ being the $d \times d$-dimensional identity matrix. The empirical measure at each substep is defined as
\begin{equation}
\mu_{t-1+(j-1)\Delta_l,\theta'}^{l,N} = \frac{1}{N}\sum_{n=1}^N \delta_{X_{t-1+(j-1)\Delta_l}^{n}}(dx),
\label{eq:mu_empirical}
\end{equation}
and the averaged interactions for $m \in \{1,2\}$ are given by \begin{equation}
\overline{\zeta}_{m,\theta'}(X_{t-1+(j-1)\Delta_l}^i,\mu_{t-1+(j-1)\Delta_l,\theta'}^{l,N}) = \frac{1}{N}\sum_{n=1}^N \zeta_{m,\theta'}(X_{t-1+(j-1)\Delta_l}^i,X_{t-1+(j-1)\Delta_l}^n).
\label{eq:zeta_average}
\end{equation}


3. \textbf{Particle Filter Likelihood Estimation} (Step 2(c)): For a given parameter $\theta' \in \Theta$, conditional on the sequence of approximated empirical measures $\{\mu^{l,N}_{t,\theta'}\}$ generated in Step 2(b), we employ a particle filter with $M$ particles to estimate the marginal likelihood. The procedure operates recursively for $s=1, \dots, T$, where particles are propagated from their resampled ancestors at time $s-1$ using the transition kernel composed of $\Delta_l^{-1}$ Euler-Maruyama steps, defined as
\begin{equation}
\xi_s^m \sim \prod_{j=1}^{\Delta_{l}^{-1}} Q_{\mu_{s-1+(j-1)\Delta_l,\theta'}^{l,N},\theta'}^l(z_{j-1}, dz_j),
\label{eq:transition_product}
\end{equation}
where $z_0 = \xi_{s-1}^{A_{s-1}^m}$ and the dynamics utilize the fixed empirical laws. At each time step, we compute the normalized importance weights
\begin{equation}
r_s^m = \frac{g_{\theta'}(\xi_s^m, y_s)}{\sum_{j=1}^M g_{\theta'}(\xi_j^m, y_s)}, \quad m=1,\dots,M,
\label{eq:Normalized_ISW}
\end{equation}
which are subsequently used to resample the ancestor indices $\{A_s^m\}_{m=1}^M$ for the next iteration. The marginal likelihood estimator is updated recursively via
\begin{equation}
\hat{p}_{\theta'}^{l,M,N}(y_{1:s}) = \hat{p}_{\theta'}^{l,M,N}(y_{1:s-1}) \cdot \frac{1}{M} \sum_{m=1}^M g_{\theta'}(\xi_s^m,y_s),
\label{eq:recursive_likelihood}
\end{equation}
yielding the final estimator constructed as the product of the average unnormalized weights across all time steps:
\begin{equation}
\hat{p}_{\theta'}^{l,M,N}(y_{1:T}) := \prod_{s=1}^T \left\{\frac{1}{M}\sum_{m=1}^M g_{\theta'}(\xi^m_s, y_s)\right\}.
\end{equation}
Finally, a single trajectory sample is generated by tracing back the ancestry of a particle selected from the final weights $\{r_T^m\}$, which serves as the proposed latent trajectory for the Metropolis-Hastings update.

4. \textbf{Metropolis-Hastings Step} (Step 2(d)): This step implements the standard Metropolis-Hastings acceptance-rejection mechanism to determine whether to accept the proposed parameter $\theta'$ or retain the current parameter $\theta^{(k-1)}$. The acceptance probability is computed using the ratio of posterior densities (up to normalizing constants) at the proposed and current parameter values
\begin{equation}
\alpha = \min\left\{1,\; \frac{\hat{p}_{\theta'}^{l,M,N}(y_{1:T})\nu(\theta')r(\theta',\theta^{(k-1)})}{\hat{p}_{\theta^{(k-1)}}^{l,M,N}(y_{1:T})\nu(\theta^{(k-1)})r(\theta^{(k-1)},\theta')}\right\}.
\label{eq:alpha_acceptance}
\end{equation}

\textit{Accept-Reject Decision:} Generate a uniform random variable $U \sim \mathcal{U}[0,1]$, the uniform distribution on $[0,1]$. If $U < \alpha$, accept the proposal: set  $\theta^{(k)} = \theta'$, update the likelihood estimate 
\begin{equation}
\hat{p}_{\theta^{(k)}}^{l,M,N}(y_{1:T}) = \hat{p}_{\theta'}^{l,M,N}(y_{1:T}),
\label{eq:p_hat_equality}
\end{equation}
and store the corresponding trajectory\begin{equation}
(\xi_1^{(k)}, \ldots, \xi_T^{(k)}) = (\xi_1^{m^*}, \ldots, \xi_T^{m^*}),
\label{eq:trajectory_equality}
\end{equation} where the trajectory index $m^*$ is sampled from the normalized weights $\{r_T^m\}_{m=1}^M$ defined as:
\begin{equation}
r_T^m = \frac{g_{\theta'}(\xi_T^m,y_T)}{\sum_{j=1}^M g_{\theta'}(\xi_T^j,y_T)}
\end{equation}
computed at the final time step $T$ in Step 2(c). Otherwise, reject the proposal: retain $\theta^{(k)} = \theta^{(k-1)}$, along with its associated likelihood estimate and trajectory.

This acceptance-rejection procedure ensures that the Markov chain $\{\theta^{(k)},\xi_{1:T}^{(k)}\}_{k=0}^K$ has $\overline{\pi}^{l,N}$ (given in \eqref{eqn:bar_pi}) as its stationary distribution, thereby enabling valid posterior inference despite the use of approximated likelihoods from the particle filter. The algorithm is an adaptation of the original PMCMC framework proposed by \cite{andrieu}. Under mild regularity conditions established therein, the resulting Markov chain produces samples from the approximate posterior distribution $\overline{\pi}^{l,N}$.

\subsection{Multilevel PMCMC Algorithm Details}
\label{app:multilevel_details}

This section provides the complete algorithmic details for the multilevel PMCMC sampling algorithm described in Section \ref{subsec:Multilevel-level PMCMC_D} of the main text.

\textbf{1. Base Level Computation} (Step 2): Execute Algorithm~\ref{alg:PMCMC_mckean_vlasov} at base discretization level $l_{\star} \in \mathbb{N}$ for $K_{l_{\star}}$ iterations with $N_{l_{\star}}$ particles in law approximation to obtain base-level samples $\{\theta^{(k)}_{l_{\star}}, x^{(k)}_{l_{\star},1:T}\}_{k=0}^{K_{l_{\star}}}$ and the corresponding likelihood estimation $\{\hat{p}_{\theta_{l_{\star}}^{(k)}}^{{l_{\star}},M,N_{l_{\star}}}(y_{1:T})\}_{k=1}^{K_{l_{\star}}}$.

\textbf{2. Bi-level Telescoping Difference Approximation} (Step 3): For each level $l \in \{l_{\star}+1,\ldots,L\}$, we independently run the bi-level MCMC to compute the corresponding telescoping difference.

\textbf{2.1. Parameter Proposal} (Step 3(b)(i)): For each level $l$, generate candidate parameter $\theta'$ from proposal kernel $r(\theta^{(k-1)}, \theta')$.

\textbf{2.2. Coupled law approximation} (Step 3(b)(ii)): 
For the proposed parameter $\theta'$, construct synchronized particle-based approximations 
$\{\mu_{t-1+j\Delta_l,\theta'}^{l,N_l}\}_{j=1}^{\Delta_l^{-1}}$ and 
$\{\widetilde{\mu}_{t-1+j\Delta_{l-1},\theta'}^{l-1,N_l}\}_{j=1}^{\Delta_{l-1}^{-1}}$ 
of the McKean--Vlasov laws at consecutive discretization levels. The coupled empirical measures are initialized as 
\begin{align}\label{eq:coupled_init}
&\mu_{t-1,\theta'}^{l,N_l}(dx) = 
\widetilde{\mu}_{t-1,\theta'}^{l-1,N_l}(dx) = \delta_{x_0}(dx),
\quad \text{if } t = 1, \\
&\mu_{t-1,\theta'}^{l,N_l}(dx) 
= \frac{1}{N_l}\sum_{i=1}^{N_l}\delta_{X^{l,i}_{t-1}}(dx) 
\quad\text{and}\quad
\widetilde{\mu}_{t-1,\theta'}^{l-1,N_l}(dx) 
= \frac{1}{N_l}\sum_{i=1}^{N_l}\delta_{\widetilde{X}^{l-1,i}_{t-1}}(dx),
\quad \text{if } t > 1.\nonumber
\end{align}
Subsequently, $N_l$ coupled particle pairs are evolved over the subintervals of $[t-1,t]$ 
using correlated Brownian increments. For $j = 1,\ldots,\Delta_l^{-1}$ and each particle 
$i \in \{1,\ldots,N_l\}$, the fine-level particles evolve according to
\begin{align}
\label{eq:fine_particles}
X_{t-1+j\Delta_l}^{l,i} &= X_{t-1+(j-1)\Delta_l}^{l,i} 
+ a_{\theta'}\left(X_{t-1+(j-1)\Delta_l}^{l,i}, \zeta_{1,\theta'}\left(X_{t-1+(j-1)\Delta_l}^{l,i}, \mu_{t-1+(j-1)\Delta_l,\theta'}^{l,N_l}\right)\right)\Delta_l \notag \\
&\quad + b_{\theta'}\left(X_{t-1+(j-1)\Delta_l}^{l,i}, \zeta_{2,\theta'}\left(X_{t-1+(j-1)\Delta_l}^{l,i}, \mu_{t-1+(j-1)\Delta_l,\theta'}^{l,N_l}\right)\right)\Delta W_{t-1+j\Delta_l}^i
\end{align}
where $\Delta W_{t-1+j\Delta_l}^i \stackrel{\text{iid}}{\sim} \mathcal{N}_d(0,\Delta_l I_d)$. Simultaneously, for $j = 1,\ldots,\Delta_{l-1}^{-1}$, the coarse-level particles evolve as
\begin{align}
\label{eq:coarse_particles}
&\widetilde{X}_{t-1+j\Delta_{l-1}}^{l-1,i}  \notag \\
&= \widetilde{X}_{t-1+(j-1)\Delta_{l-1}}^{l-1,i} \notag \\
&\quad + a_{\theta'}\left(\widetilde{X}_{t-1+(j-1)\Delta_{l-1}}^{l-1,i}, \zeta_{1,\theta'}\left(\widetilde{X}_{t-1+(j-1)\Delta_{l-1}}^{l-1,i}, \widetilde{\mu}_{t-1+(j-1)\Delta_{l-1},\theta'}^{l-1,N_l}\right)\right)\Delta_{l-1} \notag \\
&\quad + b_{\theta'}\left(\widetilde{X}_{t-1+(j-1)\Delta_{l-1}}^{l-1,i}, \zeta_{2,\theta'}\left(\widetilde{X}_{t-1+(j-1)\Delta_{l-1}}^{l-1,i}, \widetilde{\mu}_{t-1+(j-1)\Delta_{l-1},\theta'}^{l-1,N_l}\right)\right)\Delta \widetilde{W}_{t-1+j\Delta_{l-1}}^i
\end{align}
where $\Delta \widetilde{W}_{t-1+j\Delta_{l-1}}^i$ are constructed from the fine-level Brownian increments to ensure coupling via the aggregation
$\Delta \widetilde{W}_{t-1+j\Delta_{l-1}}^i = \Delta W_{t-1+(2j-1)\Delta_l}^i + \Delta W_{t-1+2j\Delta_l}^i$, the empirical measures at each substep are
\begin{align}
\label{eq:empirical_measures}
\mu_{t-1+(j-1)\Delta_l,\theta'}^{l,N_l} &= \frac{1}{N_l}\sum_{n=1}^{N_l}\delta_{X_{t-1+(j-1)\Delta_l}^{l,n}}(dx), \notag \\
\widetilde{\mu}_{t-1+(j-1)\Delta_{l-1},\theta'}^{l-1,N_l} &= \frac{1}{N_l}\sum_{n=1}^{N_l}\delta_{\widetilde{X}_{t-1+(j-1)\Delta_{l-1}}^{l-1,n}}(dx),
\end{align}
and the averaged interactions are given by,  for $m \in \{1,2\}$, 
\begin{align}
\label{eq:averaged_interactions}
\zeta_{m,\theta'}\left(X_{t-1+(j-1)\Delta_l}^{l,i}, \mu_{t-1+(j-1)\Delta_l,\theta'}^{l,N_l}\right) &= \frac{1}{N_l}\sum_{n=1}^{N_l}\zeta_{m,\theta'}\left(X_{t-1+(j-1)\Delta_l}^{l,i}, X_{t-1+(j-1)\Delta_l}^{l,n}\right), \notag \\
\zeta_{m,\theta'}\left(\widetilde{X}_{t-1+(j-1)\Delta_l}^{l,i}, \widetilde{\mu}_{t-1+(j-1)\Delta_l,\theta'}^{l,N_l}\right) &= \frac{1}{N_l}\sum_{n=1}^{N_l}\zeta_{m,\theta'}\left(\widetilde{X}_{t-1+(j-1)\Delta_l}^{l,i}, \widetilde{X}_{t-1+(j-1)\Delta_l}^{l,n}\right).
\end{align}

\textbf{2.3. Delta Particle Filter Likelihood Estimation} (Step 3(b)(iii)): For a given parameter $\theta' \in \Theta$, conditional on the sequence of coupled approximated empirical measures $\{\mu^{l,N_l}_{t,\theta'}, \widetilde{\mu}^{l-1,N_l}_{t,\theta'}\}$ generated in Step 3(b)(ii), we employ a Delta Particle Filter with $M$ coupled particle pairs to estimate the joint likelihood. The procedure operates recursively for $s=1, \dots, T$, where coupled particle pairs are propagated from their resampled ancestors at time $s-1$ using the synchronized coupled transition kernel, denoted as
\begin{equation}
(\xi_s^{m,l}, \widetilde{\xi}_s^{m,l-1}) \sim \check{P}_{\theta'}^l \left( (\xi_{s-1}^{A_{s-1}^m,l}, \widetilde{\xi}_{s-1}^{A_{s-1}^m,l-1}), d(x, \tilde{x}) \right),
\label{eq:coupled_transition_kernel}
\end{equation}
where the dynamics utilize the fixed coupled empirical laws. At each time step, we compute the normalized coupled importance weights
\begin{equation}
r_s^m = \frac{H_{s,\theta'}(\xi_s^{m,l}, \widetilde{\xi}_s^{m,l-1}; y_s)}{\sum_{j=1}^M H_{s,\theta'}(\xi_j^{m,l}, \widetilde{\xi}_j^{m,l-1}; y_s)}, \quad m=1,\dots,M,
\label{eq:coupled_ISW}
\end{equation}
where $H_{s,\theta'}$ is the coupled observation density defined in \eqref{eq:hk_ch}. These weights are subsequently used to resample the ancestor indices $\{A_s^m\}_{m=1}^M$ for the next iteration. The coupled likelihood estimator is updated recursively via
\begin{equation}
\hat{p}^{l,M,N_l}_{\theta'}(y_{1:s}) = \hat{p}^{l,M,N_l}_{\theta'}(y_{1:s-1}) \cdot \frac{1}{M}\sum_{m=1}^M H_{s,\theta'}(\xi_s^{m,l}, \widetilde{\xi}_s^{m,l-1};y_s),
\label{eq:coupledlikelihood_RU}
\end{equation}
yielding the final estimator constructed as the product of the average unnormalized weights across all time steps:
\begin{equation}
\hat{p}^{l,M,N_l}_{\theta'}(y_{1:T}) := \prod_{s=1}^T\left\{\frac{1}{M}\sum_{m=1}^M H_{s,\theta'}(\xi_s^{m,l}, \widetilde{\xi}_s^{m,l-1};y_s)\right\}.
\label{eq:jointF_coupledLikelihood}
\end{equation}
Finally, a single coupled trajectory sample is generated by tracing back the ancestry of a particle pair selected from the final weights $\{r_T^m\}$, which serves as the proposed coupled latent trajectory for the Metropolis-Hastings update.


\textbf{2.4. Metropolis-Hastings Step} (Step 3(b)(iv)): This step implements the standard Metropolis-Hastings acceptance-rejection mechanism to determine whether to accept the proposed parameter $\theta'$ or retain the current parameter $\theta^{(k-1)}$. The acceptance probability is computed using the ratio of posterior densities (up to normalizing constants) at the proposed and current parameter values based on the coupled likelihood estimates:
\begin{align}
    \alpha = \min\left\{1,\; \frac{\hat{p}^{l,M,N_l}_{\theta'}(y_{1:T})\nu(\theta')r(\theta',\theta^{(k-1)})}{\hat{p}^{l,M,N_l}_{\theta^{(k-1)}}(y_{1:T})\nu(\theta^{(k-1)})r(\theta^{(k-1)},\theta')}\right\}.
    \label{eq:alpha}
\end{align}
Accept-Reject Decision: Generate a uniform random variable $U \sim \mathcal{U}[0,1]$. If $U < \alpha$, the proposal is accepted: we set $\theta^{(k)} = \theta'$, update the coupled likelihood estimate 
\begin{equation}
\hat{p}^{l,M,N_l}_{\theta^{(k)}}(y_{1:T}) = \hat{p}^{l,M,N_l}_{\theta'}(y_{1:T}),
\label{eq:likelihood_update}
\end{equation}
and store the corresponding coupled trajectory 
\begin{equation}
    \left((\xi_1^{(k)},\ldots,\xi_T^{(k)}), (\widetilde{\xi}_1^{(k)},\ldots,\widetilde{\xi}_T^{(k)})\right) = \left((\xi_1^{m^*,l},\ldots,\xi_T^{m^*,l}), (\widetilde{\xi}_1^{m^*,l-1},\ldots,\widetilde{\xi}_T^{m^*,l-1})\right),
    \label{eq:trajectory_update}
\end{equation} 
where the trajectory index $m^*$ is resampled from the final normalized weights $\{r_T^m\}_{m=1}^M$ with
\begin{align*}
    r_T^m = \frac{H_{T,\theta'}(\xi_T^{m,l}, \widetilde{\xi}_T^{m,l-1};y_T)}{\sum_{j=1}^M H_{T,\theta'}(\xi_T^{j,l}, \widetilde{\xi}_T^{j,l-1};y_T)}
\end{align*}
computed at the final time step $T$ in Step 3(b)(iii). Otherwise, the proposal is rejected: we retain $\theta^{(k)} = \theta^{(k-1)}$ along with its associated coupled likelihood estimate and trajectory pair.

\textbf{2.5. Importance Weight Correction for Telescoping Difference}: The coupled observation density $H_{k,\theta}(x,x';y_k)$ introduced in Step 3(b)(iii) induces sampling from the coupled measure $\check{\pi}^{l,N_l}$ defined in equation (\ref{eq:main_tar}), which does not directly correspond to the difference of the individual posterior distributions $\bar{\pi}^{l,N_l}$ and $\bar{\pi}^{l-1,N_l}$. To obtain samples from the correct posterior difference, we apply importance weighting correction using the auxiliary weight function
$\check{H}_{k,\theta}(x,x';y_k) = \frac{g_\theta(x,y_k)}{H_{k,\theta}(x,x';y_k)}$,
as defined in Section \ref{subsec: cou_pos_dis}. For the coupled trajectory samples $\{(\theta^l(k), x_{1:T}^{l}(k), \widetilde{x}_{1:T}^{l-1}(k))\}_{k=0}^{K_l}$ generated by the bi-level MCMC, the telescoping difference estimator in equation (\ref{eq:mcmc_bl_est}) employs these correction weights to properly account for the distributional mismatch. Specifically, the fine-level expectation is weighted by $\prod_{s=1}^T \check{H}_{s,\theta^{(k)}_l}(x_s^{l}(k), \widetilde{x}_s^{l-1}(k);y_s)$ when evaluating $\varphi(\theta^{(k)}_l, x_{1:T}^{l}(k))$, while the coarse-level expectation is weighted by $\prod_{s=1}^T \check{H}_{s,\theta^{(k)}_l}(\widetilde{x}_s^{l-1}(k), x_s^{l}(k);y_s)$ when evaluating $\varphi(\theta^{(k)}_l, \widetilde{x}_{1:T}^{l-1}(k))$. This change of measure technique, rigorously justified through equation (\ref{eq:main_eq}), ensures that the normalized weighted empirical averages converge to the true posterior difference $\bar{\pi}^{l,N_l}(\varphi) - \bar{\pi}^{l-1,N_l}(\varphi)$ as $K_l \to \infty$, thereby maintaining the validity of the MLMC framework while enabling effective variance reduction through the coupled sampling mechanism.

\subsection{Complete Algorithm Pseudocode}
\label{app:algorithms}
\renewcommand{\thealgorithm}{A.\arabic{algorithm}}
\setcounter{algorithm}{0}

{\small
\begin{algorithm}[H]
	\begin{enumerate}
        \item{\textbf{Initialization:} Input discretization level $l\in\mathbb{N}_0$ (set $\Delta_l = 2^{-l}$), particle counts $N\in\mathbb{N}$ (for transition law approximation) and $M\in\mathbb{N}$ (for filtering), initial parameter $\theta^{(0)}\in\Theta$, MCMC iterations $K\in\mathbb{N}$, and data $y_{1:T}$.}
        
		\item{\textbf{MCMC Iteration Loop:} For $k=1,\ldots,K$:
			
			\begin{enumerate}[(a)]
				\item \textbf{Parameter Proposal:} Generate $\theta'$ from proposal kernel $r(\theta^{(k-1)},\theta')d\theta'$.
				
				\item \textbf{Law Approximation:} For each time step $t\in\{1,\ldots,T\}$:
				\begin{enumerate}[i.]
					\item Initialize empirical measure according to Eq.\eqref{eq:sing_mu_initialization}.
					
					\item For $j=1,\ldots,\Delta_l^{-1}$ and each particle $i\in\{1,\ldots,N\}$, evolve particles $X_{t-1+j\Delta_l}^i$ via Eq.\eqref{eq:particle_update}, using the interaction terms and empirical measures defined in Eqs. \eqref{eq:zeta_average}and \eqref{eq:mu_empirical}. The increments $\Delta W_{t-1+j\Delta_l}^i \stackrel{\text{iid}}{\sim} \mathcal{N}_d(0,\Delta_l I_d)$.
                    
					\item Store the approximated laws $\{\mu_{t-1+j\Delta_l,\theta'}^{l,N}\}_{j=1}^{\Delta_l^{-1}}$.
					
				\end{enumerate}

                \item \textbf{Particle Filter Likelihood Estimation:}
                \begin{enumerate}[i.]
                    \item Initialize $M$ particles $\{\xi_1^m\}_{m=1}^M$ from transition kernels using Eq.\eqref{eq:transition_product}  with $s = 1$. Set ancestor indices $A_0^m = m$ and marginal likelihood $\hat{p}_{\theta'}^{l,M,N}(y_0) = 1$.
                    
                    \item For $s=1,\ldots,T$:
                    \begin{itemize}
                        \item \textbf{Resampling:} Compute normalized weights via Eq.\eqref{eq:Normalized_ISW} and sample ancestor indices $A_s^m$ according to $\{r_s^m\}_{m=1}^M$.
                        \item \textbf{Likelihood Update:} Update likelihood by Eq.\eqref{eq:recursive_likelihood}.                 
                        \item \textbf{Propagation:} If $s < T$, for $m\in \{1,\cdots,M\},$ sample $\xi_{s+1}^m|\xi_{s}^{A_{s}^m}$ from transition kernel
                        using Eq.\eqref{eq:transition_product} with $s = s+1$, 
                        where $x_{s}^m = \xi_{s}^{A_{s}^m}$ and $(\xi_1^m,\cdots,\xi_{s+1}^m) = (\xi_1^{A_s^m},\cdots,\xi_{s}^{A_s^m}, \xi_{s+1}^m).$
                    \end{itemize}
                    \item Sample final trajectory index $m^*$ from final weights $\{r_T^m\}_{m=1}^M$.
                \end{enumerate}
                
                \item \textbf{Metropolis-Hastings Step:} 
                \begin{enumerate}[i.]
                    \item Compute acceptance probability according to Eq.\eqref{eq:alpha_acceptance}.
                    \item Generate $U\sim\mathcal{U}[0,1]$. If $U < \alpha$, accept: $\theta^{(k)} = \theta'$, and update the likelihood and trajectory according to Eq.\eqref{eq:p_hat_equality} and \eqref{eq:trajectory_equality}. Otherwise, reject: $\theta^{(k)} = \theta^{(k-1)}$ with unchanged likelihood and trajectory. 
                \end{enumerate}
			\end{enumerate}
	}
    \item{\textbf{Output:} Return parameter sample chain $\{\theta^{(k)}\}_{k=1}^K$, corresponding likelihood estimates $\{\hat{p}_{\theta^{(k)}}^{l,M,N}(y_{1:T})\}_{k=1}^K$, and trajectory samples $\{(\xi_1^{(k)},\ldots,\xi_T^{(k)})\}_{k=1}^K$.}
\end{enumerate}
\caption{PMCMC Algorithm for MVSDE Parameter Estimation}
\label{alg:PMCMC_mckean_vlasov}
\end{algorithm}
}

{\tiny
\begin{algorithm}[H]
\setlength{\itemsep}{1.5 em}  
\setlength{\parskip}{1 em}
\begin{enumerate}
\item{\textbf{Initialization:} Given base level $l^* \in \mathbb{N}_0$ and finest level $L \in \mathbb{N}$ ($L > l^*$) with $\Delta_l = 2^{-l}$. Set particle counts $N_{l^*} \in \mathbb{N}$ for base law approximation, $\{N_{l^*+1}, \ldots, N_L\} \subset \mathbb{N}$ for coupled law approximation, and $M \in \mathbb{N}$ for filtering. Set iterations $K_{l^*} \in \mathbb{N}$ for base MCMC and $\{K_{l^*+1}, \ldots, K_L\} \subset \mathbb{N}$ for bi-level MCMC. Input initial parameter $\theta^{(0)} \in \Theta$ and data $y_{1:T}$.}

\item{\textbf{Base Level Computation:} Execute Algorithm \ref{alg:PMCMC_mckean_vlasov} at level $l^*$ for $K_{l^*}$ iterations with $N_{l^*}$ particles in Step 2(b) Law Approximation. Obtain base level samples $\{\theta_{l^*}^{(k)}, x_{l^*,1:T}^{(k)}\}_{k=0}^{K_{l^*}}$ and likelihood estimates  $\{\hat{p}_{\theta_{l_{\star}}^{(k)}}^{l^*,M,N_{l^*}}(y_{1:T})\}_{k=0}^{K_{l^*}}$.}

\item{\textbf{Telescoping Difference Computation:} For each discretization level $l \in \{l^* + 1, \ldots, L\}$:
\begin{enumerate}[(a)]
\item \textbf{Bi-level MCMC Setup:} Set $\Delta_l = 2^{-l}$, $\Delta_{l-1} = 2^{-(l-1)}$. Initialize MCMC counter $k = 0$.

\item \textbf{MCMC Iteration Loop:} For $k = 1, \ldots, K_l$:

\begin{enumerate}[i.]
\item \textbf{Parameter Proposal:} Generate $\theta'$ from kernel $r(\theta^{(k-1)}, \theta') d\theta'$.

\item \textbf{Coupled Law Approximation:} For each time step $t \in \{1, \ldots, T\}$:
\begin{enumerate}[1.]
\item Initialize coupled empirical measures via Eq.~\eqref{eq:coupled_init}.

\item For $j = 1, \ldots, \Delta_l^{-1}$ and particle $i \in \{1, \ldots, N_l\}$, evolve fine-level particles via Eq.~\eqref{eq:fine_particles}, using the interaction terms and empirical measures defined in Eqs.~\eqref{eq:averaged_interactions} and \eqref{eq:empirical_measures}. The increments $\Delta W_{t-1+j\Delta_l}^i$ are i.i.d.~random vectors distributed as $\mathcal{N}_d(0,\Delta_l I_d)$.

\item Analogously, for $j = 1, \ldots, \Delta_{l-1}^{-1}$, evolve coarse-level particles via Eq.~\eqref{eq:coarse_particles} and adopts the interaction terms and empirical measures via Eqs.~\eqref{eq:averaged_interactions} and \eqref{eq:empirical_measures}, respectively. The coarse-level increments $\Delta \widetilde{W}_{t-1+j\Delta_{l-1}}^i$ are constructed from the fine-level increments.

\item Store the approximated coupled laws
$$\left\{\left(\mu_{t-1+\Delta_l,\theta'}^{l,N_l}, \widetilde{\mu}_{t-1+\Delta_{l-1},\theta'}^{l-1,N_l}\right), \ldots, \left(\mu_{t,\theta'}^{l,N_l}, \widetilde{\mu}_{t,\theta'}^{l-1,N_l}\right)\right\}.$$
\end{enumerate}
\end{enumerate}
\end{enumerate}
}
\end{enumerate}
\caption{Multilevel PMCMC for McKean-Vlasov SDE - Part I }
\label{MLPMCMC_mckean_vlasov_part1}
\end{algorithm}
}

{\tiny
\addtocounter{algorithm}{-1}  
\begin{algorithm}[H]
\begin{enumerate}
\setcounter{enumi}{2}
\item{\textbf{Telescoping Difference Computation (continued):} For each discretization level $l \in \{l^* + 1, \ldots, L\}$:

\begin{enumerate}[(a)]
\setcounter{enumii}{1}
\item \textbf{MCMC Iteration Loop (continued):} For $k = 1, \ldots, K_l$:

\begin{enumerate}[i.]
\setcounter{enumiii}{2}
\item \textbf{Delta Particle Filter Likelihood Estimation:}
\begin{enumerate}[i.]
\item Initialize $M$ coupled particle pairs $\left\{\left(\xi_1^{m,l}, \widetilde{\xi}_1^{m,l-1}\right)\right\}_{m=1}^M$ from coupled transition kernels using $\check{P}_{\theta'}^l\left(\left(x_{0}, \widetilde{x}_{0}\right), d\left(u_1^l, \widetilde{u}_1^{l-1}\right)\right)$. Set $A_0^m = m$ and $p_{\theta'}^{l,M,N_l}(y_0) = 1$.

\item For $s = 1, \ldots, T$:
\begin{itemize}
\item Compute coupled normalized weights via Eq.\eqref{eq:coupled_ISW}, and sample $A_s^m$ from $\left\{r_s^m\right\}_{m=1}^M$.
\item Update likelihood according to Eq.\eqref{eq:coupledlikelihood_RU}.
\item If $s < T$, sample $\xi_{s+1}^{m,l}, \widetilde{\xi}_{s+1}^{m,l-1}|\xi_{s}^{A_s^m,l}, \widetilde{\xi}_{s}^{A_s^m,l-1}$ from the coupled transition kernels by Eq.\eqref{eq:coupled_transition_kernel} with $s = s+1$, where the input measures have been generated in Step (b)(ii). For $m \in \{1, \ldots, M\}$, $\left(\xi_1^{m,l}, \ldots, \xi_{s+1}^{m,l}\right) = \left(\xi_1^{A_s^m,l}, \ldots, \xi_s^{A_s^m,l}, \xi_{s+1}^{m,l}\right),$
$\left(\widetilde{\xi}_1^{m,l-1}, \ldots, \widetilde{\xi}_{s+1}^{m,l-1}\right) = \left(\widetilde{\xi}_1^{A_s^m,l-1}, \ldots, \widetilde{\xi}_s^{A_s^m,l-1}, \widetilde{\xi}_{s+1}^{m,l-1}\right).$
\end{itemize}

\item Sample final coupled trajectory index $m^*$ from final weights $\left\{r_T^m\right\}_{m=1}^M$.
\end{enumerate}

\item \textbf{Metropolis-Hastings Step:}
\begin{enumerate}[i.]
\item Compute acceptance probability $\alpha$ according to Eq. \eqref{eq:alpha}:
\item Generate $U \sim \mathcal{U}[0,1]$. If $U < \alpha$, accept: $\theta^{(k)} = \theta'$,  and  update the likelihood and trajectory according to Eq.\eqref{eq:likelihood_update} and Eq.\eqref{eq:trajectory_update}.
Otherwise, reject: $\theta^{(k)} = \theta^{(k-1)}$ with unchanged likelihood and trajectory.
\end{enumerate}
\end{enumerate}
\end{enumerate}
}

\item{\textbf{Output:} 
\begin{itemize}
\item Base level samples: $\{\theta_{l^*}^{(k)}\}_{k=0}^{K_{l^*}}$, $\{\hat{p}_{\theta^{(k)}}^{l^*,M,N_{l^*}}(y_{1:T})\}_{k=0}^{K_{l^*}}$, $\{(\xi_1^{(k)}, \ldots, \xi_T^{(k)})\}_{k=0}^{K_{l^*}}$.

\item Bi-level coupled samples: For each level $l \in \{l^* + 1, \ldots, L\}$, return $\{\theta_{l}^{(k)}\}_{k=0}^{K_{l}}$, $\{\hat{p}_{\theta^{(k)}}^{l,M,N_{l}}(y_{1:T})\}_{k=0}^{K_{l}}$, $\left\{\left(\xi_1^{(k)}, \ldots, \xi_T^{(k)}\right), \left(\widetilde{\xi}_1^{(k)}, \ldots, \widetilde{\xi}_T^{(k)}\right)\right\}_{k=0}^{K_l}$.
\end{itemize}
}

\end{enumerate}
\caption{Multilevel PMCMC for McKean-Vlasov SDE - Part II}
\label{alg:MLPMCMC_mckean_vlasov_part2}
\end{algorithm}
}

\subsection{3D Neuron Model: Detailed Description}
\label{app:neuron_model_details}
The three-dimensional state vector $x = (x_1, x_2, x_3)$ represents distinct neuronal variables: $x_1$ corresponds to the membrane potential $V$, $x_2$ denotes the recovery variable $w$, and $x_3$ characterizes the synaptic gating variable $y$. The drift term $a(t, x, \mu)$ governs the deterministic dynamics through three coupled components. The first component models the FitzHugh-Nagumo neuronal dynamics with external input $I$ and mean-field synaptic coupling modulated by the reversal potential $V_{rev}$ and coupling strength $J$. The second component captures the recovery dynamics with time scale parameter $c$ and coupling coefficients $a$ and $b$. The third component describes synaptic gating kinetics, incorporating rise and decay rates $a_r$ and $a_d$, maximum transmission $T_{max}$, and a sigmoid activation function controlled by the steepness parameter $\lambda$ and threshold potential $V_T$.
The diffusion matrix $b(t, x, \mu)$ introduces stochasticity into the system. External noise with intensity $b_{ext}$ directly affects the membrane potential, while synaptic noise with strength $b_J$ contributes through the mean-field term. The recovery variable experiences no direct noise. The synaptic gating variable incorporates state-dependent noise through the function $b_{32}(x)$, which combines the synaptic dynamics with an exponential damping factor controlled by parameters $\Gamma$ and $\Lambda$, and includes an indicator function ensuring $x_3$ remains within the physically meaningful interval $(0,1)$. 

\subsection{Supplementary Tables and Figures}
\label{app:tables_figures}

\setcounter{table}{0}
\setcounter{figure}{0}

\begin{table}[H]
\centering
\begin{tabular}{|c|c|c|c|c|c|c|}
\hline
$V_0 = 0$ & $\sigma_{V_0} = 0.4$ & $a = 0.7$ & $b = 0.8$ & $c = 0.08$ & $I = 0.5$ & $b_{ext} = 0.5$ \\
\hline
$w_0 = 0.5$ & $\sigma_{w_0} = 0.4$ & $V_{rev} = 1$ & $a_r = 1$ & $a_d = 1$ & $T_{max} = 1$ & $\lambda = 0.2$ \\
\hline
$y_0 = 0.3$ & $\sigma_{y_0} = 0.05$ & $J = 1$ & $b_J = 0.2$ & $V_T = 2$ & $\Gamma = 0.1$ & $\Lambda = 0.5$ \\
\hline
\end{tabular}
\caption{Model Parameters and Their Values}
\label{tab:model_parameters}
\end{table}


\vspace{2em}

\begin{table}[H]
	\centering
	\begin{tabular}{c|c|c}
		Parameter & PMCMC & MLPMCMC \\
		\hline
		$I$ & -3.32 & -3.10 \\
		$J$ & -3.39 & -2.99 \\
		$\log(c)$ & -3.92 & -2.18 \\
		$\log(\lambda)$ & -4.50 & -2.76 \\
		$\log(b_{ext})$ & -3.48 & -2.83 \\
		$\log(\Gamma)$ & -3.69 & -2.47 \\
		$\log(\sigma_1)$ & -3.74 & -2.38\\
        $\log(\sigma_2)$ & -3.64 & -3.26\\
        $\log(\sigma_3)$ & -3.24 & -3.17\\
	\end{tabular}
    \caption{Comparison of Estimated Convergence Rates}
	\label{tab:rate_3d}
\end{table}

\vspace{2em}

\begin{center}
\includegraphics[width=0.9\textwidth]{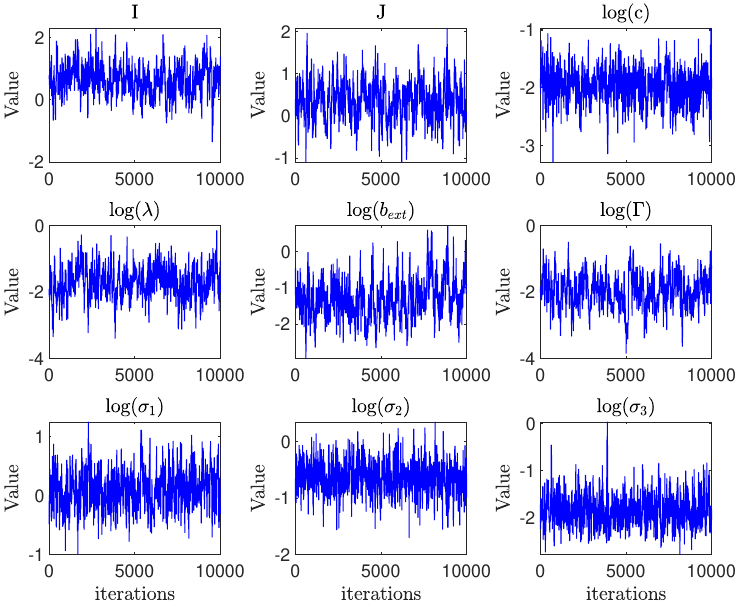}
\captionof{figure}{Trace plots for the estimated parameters of the 3D Neuron Model using PMCMC.}
\label{fig:trace_d3_pmcmc}
\end{center}

\vspace{2em}

\begin{center}
\includegraphics[width=0.9\textwidth]{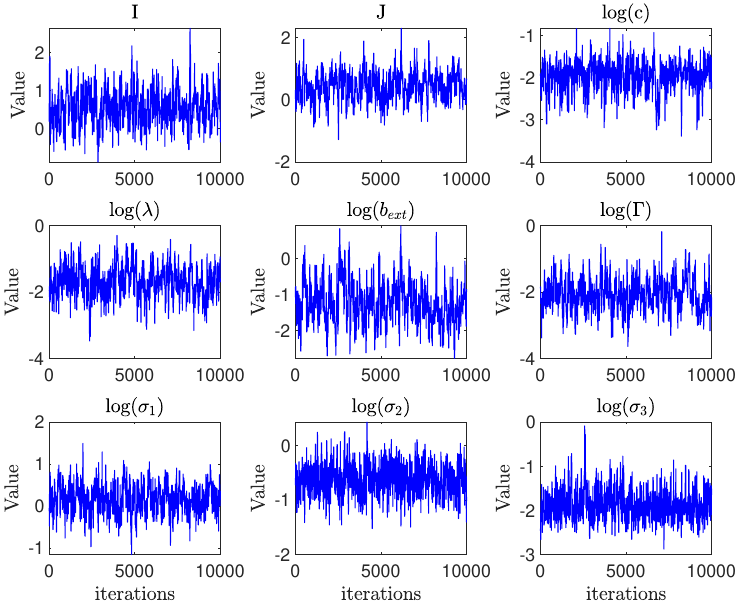}
\captionof{figure}{Trace plots for the estimated parameters of the 3D Neuron Model using MLPMCMC.}
\label{fig:Trace_d3_ml}
\end{center}

\vspace{2em}

\begin{center}
\includegraphics[width=0.9\textwidth]{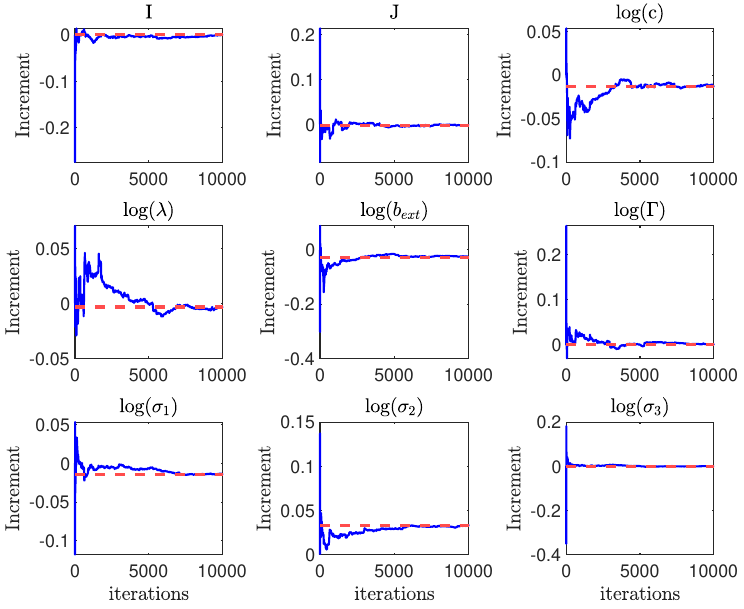}
\captionof{figure}{Running mean of the increments for the estimated parameters of the 3D Neuron Model using MLPMCMC.}
\label{fig:incre_d3_ml2}
\end{center}

\vspace{2em}

\subsection{Regularity Assumptions}
\label{app:assumptions}

Let $\mathcal{P}(\mathbb{R}^d)$ denote the set of probability measures on the measurable space $(\mathbb{R}^d, \mathcal{B}(\mathbb{R}^d))$, where $\mathcal{B}(\mathbb{R}^d)$ represents the Borel $\sigma$-field on $\mathbb{R}^d$. 
We denote by $\mathcal{C}_b^k(\mathbb{R}^{d_1}, \mathbb{R}^{d_2})$ the space of functions $f: \mathbb{R}^{d_1}\rightarrow \mathbb{R}^{d_2}$ with bounded derivatives up to order $k$, and $\mathcal{B}_b(\mathbb{R}^{d_1}, \mathbb{R}^{d_2})$ for the collection of bounded measurable functions from $\mathbb{R}^{d_1}$ to $\mathbb{R}^{d_2}$. We assume that all Markov chains are initialized from their respective stationary distributions and enforce the following conditions in the theoretical analysis.

\begin{hypA}\label{ass:1}
For each $(\theta,\mu)\in\Theta\times\mathcal{P}(\mathbb{R}^d)$ and a fixed $x\in \mathbb{R}$, we have $(a_{\theta}(\cdot,\overline{\zeta}_{1,\theta}(\cdot,\mu)),\zeta_{1,\theta}(\cdot,x))\in\mathcal{C}_b^2(\mathbb{R}^{d+1},\mathbb{R}^d)\cap\mathcal{B}_b(\mathbb{R}^{d+1},\mathbb{R}^d)\times\mathcal{C}_b^2(\mathbb{R}^{2d},\mathbb{R})\cap\mathcal{B}_b(\mathbb{R}^{2d},\mathbb{R})$ and $(b_{\theta}(\cdot, \overline{\zeta}_{2,\theta}(\cdot, \mu)),\zeta_{2,\theta}(\cdot,x)) \in \mathcal{C}_b^2(\mathbb{R}^{d+1}, \mathbb{R}^{d \times d}) \cap \mathcal{B}_b(\mathbb{R}^{d+1}, \mathbb{R}^{d \times d})\times\mathcal{C}_b^2(\mathbb{R}^{2d},\mathbb{R})\cap\mathcal{B}_b(\mathbb{R}^{2d},\mathbb{R})$. All upper bounds of these functionals hold uniformly over $\theta\in\Theta$.
Furthermore, there exists $k > 0$ such that the uniform ellipticity condition holds:
$$\inf_{(x,y) \in \mathbb{R}^{d}\times \mathbb{R}} \inf_{v \in \mathbb{R}^d \setminus \{0\}} \frac{v^{\top} b_{\theta}(x,y)^{\top} b_{\theta}(x,y) v}{\|v\|^2} \geq k, \qquad \forall \theta \in \Theta.$$
\end{hypA}

\begin{hypA}\label{ass:2}
For every $y\in \mathsf{Y}$, the function $(\theta,x)\mapsto g_{\theta}(x,y)$ belongs to $\mathcal{C}_b^2(\Theta\times\mathbb{R}^d, \mathbb{R})\cap\mathcal{B}_b(\Theta\times\mathbb{R}^d, \mathbb{R})$. Furthermore, $\inf_{(\theta,x,y)\in\Theta\times\mathbb{R}^{d}\times\mathsf{Y}}g_{\theta}(x,y)>0$.
\end{hypA}

\begin{hypA}\label{ass:3}
For each $l\in\mathbb{N}$, the Markov kernels $R_l$ (arising from Algorithm \ref{alg:PMCMC_mckean_vlasov}) and $\check{R}_l$ (arising from Algorithm \ref{MLPMCMC_mckean_vlasov_part1}) admit invariant measures with respect to which they are reversible. Moreover, they are uniformly ergodic with 1-step minorization conditions that are uniform in $l$.
\end{hypA}

\subsection{Proof of Theorem \ref{thm:main_convergence}}
\label{app:main_proof}

\begin{proof}
The proof proceeds in the following $4$ steps.

\noindent\textbf{Step 1: Error Decomposition.}
For the difference
$\overline{\pi}^{l,N_l,K_l}(\varphi) - \overline{\pi}^{l-1,N_{l},K_l}(\varphi)$ defined in \eqref{eq:mcmc_bl_est} and the distribution
$\overline{\pi}^l$ defined in \eqref{eqn:overline_pi},
we denote $$\Delta_l^{N_l,K_l}[\varphi] := \overline{\pi}^{l,N_l,K_l}[\varphi] - \overline{\pi}^{l-1,N_l,K_l}[\varphi] \quad \text{and}\quad \Delta_l[\varphi] := \overline{\pi}^{l}[\varphi] - \overline{\pi}^{l-1}[\varphi].$$
For $\widehat{\pi}$ defined in \eqref{eq:mlmc_final_estimator}, 
by telescoping from level $l_{\star}$ to $L$ and applying Cauchy–Schwarz inequality, we obtain
\begin{align*}
\widehat{\pi}(\varphi) - \pi(\varphi) &= \left[\overline{\pi}^{l_{\star},N_{l_{\star}},K_{l_{\star}}}[\varphi] - \overline{\pi}^{l_{\star}}[\varphi]\right]
+ \sum_{l=l_{\star}+1}^L\left[\Delta_l^{N_l,K_l}[\varphi] - \Delta_l[\varphi]\right]
+ \left[\overline{\pi}^{L}[\varphi] - \pi[\varphi]\right]\\
&=:\mathcal{T}_1+\mathcal{T}_2+\mathcal{T}_3.
\end{align*}
Taking square and then expectation of the above equation, by means of the Cauchy–Schwarz inequality $\mathbb{E}[(\sum_{i=1}^n X_i)^2] \leq n\sum_{i=1}^n \mathbb{E}[X_i^2]$ with $n=3$, yields
\begin{align}
\label{eqn:3termsum}
\mathbb{E}\left[\left(\widehat{\pi}(\varphi) - \pi(\varphi)\right)^2\right] \leq 3\left\{\mathbb{E}[\mathcal{T}_1^2] + \mathbb{E}[\mathcal{T}_2^2] + \mathcal{T}_3^2\right\}.
\end{align}

\noindent\textbf{Step 2: Bounding $\mathcal{T}_1$.}
We decompose the term $\mathcal{T}_1$ into two distinct components: the stochastic error arising from the MCMC sampling and the systematic bias introduced by the particle filter approximation. By introducing the intermediate quantity $\overline{\pi}^{l_{\star},N_{l_{\star}}}[\varphi]$, which represents the exact expectation with respect to the particle-approximated posterior (i.e., the limit as $K_{l_*} \to \infty$), we write
\begin{align}
\mathcal{T}_1 &= \overline{\pi}^{l_{\star},N_{l_{\star}},K_{l_{\star}}}[\varphi] - \overline{\pi}^{l_{\star}}[\varphi] \nonumber \\
&= \underbrace{\left[\overline{\pi}^{l_{\star},N_{l_{\star}},K_{l_{\star}}}[\varphi] - \overline{\pi}^{l_{\star},N_{l_{\star}}}[\varphi]\right]}_{\text{MCMC Error}} + \underbrace{\left[\overline{\pi}^{l_{\star},N_{l_{\star}}}[\varphi] - \overline{\pi}^{l_{\star}}[\varphi]\right]}_{\text{Particle Approximation Error}}.
\end{align}
Squaring both sides and taking the expectation, we apply the inequality $(a+b)^2 \leq 2a^2 + 2b^2$ to separate the contributions:
\begin{align}
\mathbb{E}[\mathcal{T}_1^2] \leq 2\mathbb{E}\left[\left(\overline{\pi}^{l_{\star},N_{l_{\star}},K_{l_{\star}}}[\varphi] - \overline{\pi}^{l_{\star},N_{l_{\star}}}[\varphi]\right)^2\right] + 2\left(\overline{\pi}^{l_{\star},N_{l_{\star}}}[\varphi] - \overline{\pi}^{l_{\star}}[\varphi]\right)^2.
\end{align}
We bound these two terms separately. 
Firstly, applying the uniform ergodicity condition (Assumption A\ref{ass:3}) and the Corollary 2.1 in \cite{roberts1997geometric}, the variance of the MCMC estimator decays linearly with the number of iterations $K_{l_{\star}}$. Thus, there exists a constant $C$ such that:
    \begin{align}    \mathbb{E}\left[\left(\overline{\pi}^{l_{\star},N_{l_{\star}},K_{l_{\star}}}[\varphi] - \overline{\pi}^{l_{\star},N_{l_{\star}}}[\varphi]\right)^2\right] \leq \frac{C}{K_{l_{\star}}+1}.
    \end{align}
Secondly, according to Lemma \ref{lem:lem2}, the bias induced by the particle approximation with $N_{l_{\star}}$ particles satisfies $|\overline{\pi}^{l_{\star},N_{l_{\star}}}[\varphi] - \overline{\pi}^{l_{\star}}[\varphi]| \leq C N_{l_{\star}}^{-1/2}$. Squaring this yields
    \begin{align}
    \left(\overline{\pi}^{l_{\star},N_{l_{\star}}}[\varphi] - \overline{\pi}^{l_{\star}}[\varphi]\right)^2 \leq \frac{C}{N_{l_{\star}}}.
    \end{align}
Combining these estimates yields the final bound for the base level error:
\begin{align}
\label{eqn:3termsum_1}
\mathbb{E}[\mathcal{T}_1^2] \leq C\left(\frac{1}{K_{l_{\star}}+1} + \frac{1}{N_{l_{\star}}}\right) = C\left(\varepsilon_{\mathrm{MCMC}}^{(l_{\star})} + \varepsilon_{\mathrm{part}}^{(l_{\star})}\right).
\end{align}

\noindent\textbf{Step 3: Bounding $\mathcal{T}_2$.}
We first define the intermediate quantity $$\Delta_l^{N_l}[\varphi] := \overline{\pi}^{l,N_l}[\varphi] - \overline{\pi}^{l-1,N_l}[\varphi],$$ which represents the difference between the exact expectations under the particle-approximated posteriors at levels $l$ and $l-1$, where $\overline{\pi}^{l,N_l}$ and $\overline{\pi}^{l-1,N_l}$ are the distributions we mentioned in \eqref{eq:diff}. We then decompose the total sum into a stochastic MCMC error component and a deterministic particle bias component:
\begin{align*}
\mathcal{T}_2 = \sum_{l=l_{\star}+1}^L \left( \Delta_l^{N_l,K_l}[\varphi] - \Delta_l^{N_l}[\varphi] \right) + \sum_{l=l_{\star}+1}^L \left( \Delta_l^{N_l}[\varphi] - \Delta_l[\varphi] \right) := \mathcal{T}_{2, \text{MCMC}} + \mathcal{T}_{2, \text{part}}.
\end{align*}
Using the inequality $(a+b)^2 \leq 2a^2 + 2b^2$, we have
\begin{equation*}
\mathbb{E}[\mathcal{T}_2^2] \leq 2\mathbb{E}[\mathcal{T}_{2, \text{MCMC}}^2] + 2\mathcal{T}_{2, \text{part}}^2.
\end{equation*}
The term $\mathcal{T}_{2, \text{part}}$ represents the accumulated particle approximation bias across levels. By Lemma \ref{lem:lem1},
\begin{equation*}
|\mathcal{T}_{2, \text{part}}| = \left| \sum_{l=l_{\star}+1}^L \left( \Delta_l^{N_l}[\varphi] - \Delta_l[\varphi] \right) \right| \leq C \sum_{l=l_{\star}+1}^L \frac{\sqrt{\Delta_l}}{\sqrt{N_l}}.
\end{equation*}
Squaring this result yields the multilevel particle error bound
\begin{equation*}
\mathcal{T}_{2, \text{part}}^2 \leq C \left( \sum_{l=l_{\star}+1}^L \frac{\sqrt{\Delta_l}}{\sqrt{N_l}} \right)^2 = C \varepsilon_{\mathrm{part}}^{(\mathrm{ML})}.
\end{equation*}
We next expand the square of the MCMC error sum $\mathcal{T}_{2, \text{MCMC}}$ into diagonal (variance) and off-diagonal (covariance) terms:
\begin{align*}
\mathbb{E}[\mathcal{T}_{2, \text{MCMC}}^2] &= \sum_{l=l_{\star}+1}^L \mathbb{E}\left[ \left( \Delta_l^{N_l,K_l}[\varphi] - \Delta_l^{N_l}[\varphi] \right)^2 \right] \\
&\quad + \sum_{\substack{l,q=l_{\star}+1\\l\neq q}}^L \mathbb{E}\left[ \left( \Delta_l^{N_l,K_l}[\varphi] - \Delta_l^{N_l}[\varphi] \right) \left( \Delta_q^{N_q,K_q}[\varphi] - \Delta_q^{N_q}[\varphi] \right) \right]\\
&=:\mathcal{T}_{2, \text{MCMC}}^{(1)}+\mathcal{T}_{2, \text{MCMC}}^{(2)}.
\end{align*}
To bound $\mathcal{T}_{2, \text{MCMC}}^{(1)}$, we apply Lemma \ref{lem:lem4}. Under stationarity (Assumption \ref{ass:3}), the MCMC error has zero mean, and Lemma \ref{lem:lem4} bounds the variance of the coupled estimator by $O(\Delta_l(K_l+1)^{-1})$. Summing over levels:
\begin{equation*}
\sum_{l=l_{\star}+1}^L \mathbb{E}\left[ \left( \Delta_l^{N_l,K_l}[\varphi] - \Delta_l^{N_l}[\varphi] \right)^2 \right] \leq C \sum_{l=l_{\star}+1}^L \frac{\Delta_l}{K_l+1} = C \varepsilon_{\mathrm{MCMC}}^{(\mathrm{ML})}.
\end{equation*}
To bound $\mathcal{T}_{2, \text{MCMC}}^{(2)}$, Lemma \ref{lem:lem5} gives
\begin{align*}
\sum_{\substack{l,q=l_{\star}+1\\l\neq q}}^L \mathbb{E}\left[ \left( \Delta_l^{N_l,K_l}[\varphi] - \Delta_l^{N_l}[\varphi] \right) \left( \Delta_q^{N_q,K_q}[\varphi] - \Delta_q^{N_q}[\varphi] \right) \right] &\leq C \sum_{\substack{l,q=l_{\star}+1\\l\neq q}}^L \frac{\sqrt{\Delta_l}}{K_l+1} \frac{\sqrt{\Delta_q}}{K_q+1}\\
&= C\varepsilon_{\mathrm{cross}}.
\end{align*}
The upper bound for $\mathbb{E}[\mathcal{T}_{2, \text{MCMC}}^2]$ follows by combining the two inequalities above, yielding $C(\varepsilon_{\mathrm{MCMC}}^{(\mathrm{ML})}+ \varepsilon_{\mathrm{cross}})$.

\noindent\textbf{Step 4: Bounding $\mathcal{T}_3$.}
For $\mathcal{T}_3$, the standard weak convergence analysis under Assumption A\ref{ass:1}, together with Lemma A.12 of \cite{po_mv}, yields
\begin{align}
\label{eqn:3termsum_3}
\mathcal{T}_3^2 = (\overline{\pi}^{L}[\varphi] - \pi[\varphi])^2 \leq C(\Delta_L^2) = C(\varepsilon_{\mathrm{disc}}).
\end{align}
 Substituting the upper bound for $\mathbb{E}[\mathcal{T}_{2, \text{MCMC}}^2]$ achieved in the last step, together with \eqref{eqn:3termsum_1} and \eqref{eqn:3termsum_3}, into \eqref{eqn:3termsum} completes the proof.
\end{proof}

\subsection{Technical Proofs}\label{app:appendix}

\begin{lem}\label{lem:lem1}
Under Assumptions (A\ref{ass:1})–(A\ref{ass:2}), for any test function $\varphi \in \mathcal{C}_b^2(\Theta \times \mathbb{R}^{dT}) \cap \mathcal{B}_b(\Theta \times \mathbb{R}^{dT})$, there exists a finite constant $C$ such that for any configuration $(l_{\star}, L, N_{l_{\star}}, \ldots, N_L) \in \mathbb{N}^{L-l_{\star}+3}$ satisfying $l_{\star} < L$,
\begin{equation}\label{eq:particle_telescoping_bound}
\left| \sum_{l=l_{\star}+1}^L \left[ \left( \overline{\pi}^{l,N_l}[\varphi] - \overline{\pi}^{l-1,N_l}[\varphi] \right) - \left( \overline{\pi}^l[\varphi] - \overline{\pi}^{l-1}[\varphi] \right) \right] \right| \leq C \sum_{l=l_{\star}+1}^L \frac{\sqrt{\Delta_l}}{\sqrt{N_l}}.
\end{equation}
\end{lem}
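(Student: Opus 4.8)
The plan is to reduce the claim to a per-level estimate and then use a cross-level coupling of the particle systems to extract the gain of order $\sqrt{\Delta_l}$ over the crude $N_l^{-1/2}$ particle bias. Writing $D_l^{N} := \overline{\pi}^{l,N}[\varphi] - \overline{\pi}^{l}[\varphi]$ for the particle-approximation bias at level $l$ with $N$ particles, the summand in \eqref{eq:particle_telescoping_bound} is exactly $D_l^{N_l} - D_{l-1}^{N_l}$, so by the triangle inequality it suffices to prove the single-level bound $|D_l^{N_l} - D_{l-1}^{N_l}| \le C\sqrt{\Delta_l}/\sqrt{N_l}$ with $C$ independent of $l$ and of the particle counts; summing over $l$ then gives the result. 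Each individual bias already satisfies $|D_l^{N}| \le C N^{-1/2}$ by Lemma \ref{lem:lem2}, so the entire content of the lemma is that the \emph{difference} of biases at two consecutive levels, evaluated with the common particle count $N_l$, is smaller by the factor $\sqrt{\Delta_l}$.

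First I would dispose of the ratio structure of the posterior expectations. Each of $\overline{\pi}^{l,N}[\varphi]$ and $\overline{\pi}^{l}[\varphi]$ is an integral against $\varphi$ divided by the corresponding normalizing integral, both weighted by $\prod_{k=1}^T g_\theta(x_k,y_k)$. Using the uniform positivity and boundedness of $g_\theta$ from Assumption (A\ref{ass:2}), the denominators are bounded above and below uniformly in $(l,N)$, so a standard quotient identity reduces bounding $D_l^{N}-D_{l-1}^{N}$ to bounding differences of the unnormalized integrals, i.e.\ differences of the form $\overline{\mathbb{E}}^N_\theta[\Phi(\mu^{l,N})] - \mathbb{E}[\Phi(\mu^{l})]$ for smooth bounded functionals $\Phi$ of the law path, evaluated at levels $l$ and $l-1$.

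The core step is to realize the level-$l$ and level-$(l-1)$ particle ensembles on a common probability space through the synchronized coupling of Step 3(b)(ii) of Algorithm~\ref{MLPMCMC_mckean_vlasov_part1}, in which the coarse Brownian increments are the aggregated fine increments $\Delta\widetilde{W} = \Delta W_{2j-1}+\Delta W_{2j}$. Under this coupling I would expand each particle bias into its leading mean-field fluctuation term, of size $O(N^{-1/2})$, plus a remainder of smaller order, relying on the $\mathcal{C}_b^2$ regularity of $a_\theta,b_\theta,\zeta_{1,\theta},\zeta_{2,\theta}$ in Assumption (A\ref{ass:1}). Because the two ensembles are driven by the same noise and differ only through the time step, the strong $L^2$ convergence of the coupled Euler--Maruyama system (order $1/2$, as established in the appendix of \cite{po_mv}) controls the pathwise discrepancy between the fine and coarse empirical measures by $C\sqrt{\Delta_l}$, uniformly in $N$. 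The difference of the two fluctuation terms therefore inherits one factor $N^{-1/2}$ from the central-limit scaling of the fluctuation and one factor $\sqrt{\Delta_l}$ from the strong-coupling discrepancy, yielding the desired $\sqrt{\Delta_l}/\sqrt{N_l}$ after a Cauchy--Schwarz step; the remainder terms are handled similarly and are of strictly higher order.

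The main obstacle will be propagating the strong-coupling estimate through the \emph{law-dependent diffusion} coefficient $b_\theta(\cdot,\overline{\zeta}_{2,\theta}(\cdot,\mu))$, which is the feature absent from the state-only-diffusion setting of \cite{jasra2025bayesian}. Here the noise amplitude itself depends on the evolving empirical measure, so the fine and coarse systems experience different, measure-dependent volatilities, and the usual It\^{o}-isometry bookkeeping acquires extra cross terms coupling the discretization error to the fluctuation of $\mu^{l,N}$ inside the diffusion. I would control these via the uniform $\mathcal{C}_b^2$ bounds on $\zeta_{2,\theta}$ and $b_\theta$ together with the uniform ellipticity in Assumption (A\ref{ass:1}): ellipticity keeps the one-step Gaussian transition densities comparable across levels, so the change of measure between coarse and fine kernels is well behaved, while boundedness of the measure-derivatives ensures the diffusion's dependence on $\mu^{l,N}$ is Lipschitz, so that the $\sqrt{\Delta_l}$ strong rate survives the nonlinear measure dependence rather than degrading. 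Once these diffusion-induced terms are shown to be of order $\sqrt{\Delta_l/N_l}$, summing the per-level bounds completes the proof.
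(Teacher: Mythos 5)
Your overall plan tracks the paper's proof in outline: the paper likewise treats each level separately (your identity that the summand equals $D_l^{N_l}-D_{l-1}^{N_l}$ is just a regrouping of the four-ratio difference the paper expands), disposes of the ratio structure using the uniform lower bound on $g_\theta$ from Assumption (A\ref{ass:2}), builds a synchronized coupling under common Brownian increments, invokes the order-$1/2$ strong rates from the appendix of \cite{po_mv}, combines the $\sqrt{\Delta_l}$ and $N_l^{-1/2}$ factors by Cauchy--Schwarz, and dismisses the remaining algebraic terms as higher order (the paper bounds them by $O(\Delta_l/\sqrt{N_l})$). Your discussion of the law-dependent diffusion is also consonant with how the paper's estimates absorb $\zeta_{2,\theta}$ through the uniform $\mathcal{C}_b^2$ bounds in Assumption (A\ref{ass:1}).

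The genuine gap is in your central step. You couple only the \emph{two particle ensembles} (fine and coarse) and then propose to compare each to its exact-law counterpart by ``expanding each particle bias into its leading mean-field fluctuation term of size $O(N^{-1/2})$ plus a remainder of smaller order.'' Such an expansion is a second-order propagation-of-chaos (CLT-type) statement for McKean--Vlasov systems with fully law-dependent diffusion, applied moreover to ratio-of-expectations quantities; it is established nowhere in the paper or its references and is substantially harder than the lemma itself --- note that Lemma \ref{lem:lem2} provides only a crude $O(N_l^{-1/2})$ bound, with no expansion and no identified remainder. The paper avoids this tool entirely: it places \emph{four} processes on one probability space --- fine and coarse paths with $N_l$-particle laws, and fine and coarse paths with exact laws, all driven by the same Brownian increments --- and a Taylor/mean-value bound on $\varphi\cdot\prod_{k}g_\theta$ reduces the principal term to
\begin{equation*}
\int_{\Theta}\widetilde{\mathbb{E}}^{N_l}_{\theta}\Big[ \big\| (X^{l,N_l}_{1:T} - X^{l-1,N_l}_{1:T}) - (X^l_{1:T} - X^{l-1}_{1:T}) \big\|
+ \| X^l_{1:T} - X^{l-1}_{1:T} \| \big( \| X^{l,N_l}_{1:T} - X^l_{1:T} \| + \| X^{l-1,N_l}_{1:T} - X^{l-1}_{1:T} \| \big) \Big]\nu(d\theta),
\end{equation*}
each piece of which is $O(\sqrt{\Delta_l}/\sqrt{N_l})$ by Cauchy--Schwarz and Lemma A.7 of \cite{po_mv}. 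Two consequences for your write-up: first, the cross-product terms force all four processes onto a common space, which your two-ensemble coupling does not supply; second, your heuristic that the difference of biases ``inherits one factor from each bound'' does not by itself produce the second-difference estimate $\widetilde{\mathbb{E}}^{N_l}_{\theta}\|(X^{l,N_l}-X^{l-1,N_l})-(X^{l}-X^{l-1})\| = O(\sqrt{\Delta_l/N_l})$ --- a difference of two $O(N^{-1/2})$ quantities need not be $O(\sqrt{\Delta_l/N})$ without the dedicated coupled estimate. Replacing the fluctuation expansion with the four-process coupling and the Taylor bound above closes the gap and recovers the paper's argument.
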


\begin{proof} 
The proof proceeds in the following two steps.

\noindent\textbf{Step 1: Posterior Representation.}
For each level $l$, recall that $P^l_{\mu^l_{k-1,\theta},k,\theta}(x_{k-1},du_k)$ denotes the extended transition kernel over the time interval $[k-1,k]$ at discretization level $l$:
\begin{equation*}
\overline{P}^l_{\mu^l_{k-1,\theta},k,\theta}(x_{k-1},du_k) = \prod_{j=1}^{\Delta_l^{-1}} Q^l_{\mu^l_{k-1+(j-1)\Delta_l,\theta},\theta}(x_{k-1+(j-1)\Delta_l}, dx_{k-1+j\Delta_l}),
\end{equation*}
where $Q_{\mu_{t-1+(k-1)\Delta_l,\theta}^l,\theta}^l(x_{t-1+(k-1)\Delta_l},dx_{t-1+k\Delta_l})$ denote the Gaussian Markov kernel on $(\mathbb{R}^d,\mathscr{B}(\mathbb{R}^d))$ in $\Delta_l$ time step of the discretized SDE \eqref{eq:sde_disc} with law $\mu_{t-1+(k-1)\Delta_l,\theta}^l\in\mathcal{P}(\mathbb{R}^d)$,
and $u_k = (x_{k-1+\Delta_l},\ldots,x_k) \in E_l = (\mathbb{R}^d)^{\Delta_l^{-1}}$ represents the intermediate states within $[k-1,k]$.

For any bounded measurable function $\psi: \Theta \times \mathbb{R}^{dT} \to \mathbb{R}$, define the path functional
\begin{equation*}
\mathcal{P}^l_{T,\theta}[\psi] := \int_{\Theta \times E_l^T} \psi(\theta, x_{1:T}) \prod_{k=1}^T \overline{P}^l_{\mu^l_{k-1,\theta},k,\theta}(x_{k-1}, du_k),
\end{equation*}
where $E_l^T = (E_l)^T$ is the path space over the full time horizon $[0,T]$ at discretization level $l$. Similarly, the particle-approximated path functional is defined as
\begin{equation*}
\mathcal{P}^{l,N_l}_{T,\theta}[\psi] := \int_{\Theta \times E_l^T} \psi(\theta, x_{1:T}) \prod_{k=1}^T \overline{P}^l_{\mu^{l,N_l}_{k-1,\theta},k,\theta}(x_{k-1}, du_k),
\end{equation*}
where $\mu^{l,N_l}_{k,\theta}$ denotes the empirical measure constructed using $N_l$ particles.

Let $\mathcal{G}_{\theta} := \prod_{k=1}^T g_{\theta}(x_k, y_k)$ denote the observation likelihood functional. The posterior representations then take the following form:
\begin{align*}
\overline{\pi}^l[\varphi] = \frac{\int_{\Theta} \mathcal{P}^l_{T,\theta}[\varphi \cdot \mathcal{G}_{\theta}] \, \nu(d\theta)}{\int_{\Theta} \mathcal{P}^l_{T,\theta}[\mathcal{G}_{\theta}] \, \nu(d\theta)} \quad\text{and}\quad
\overline{\pi}^{l,N_l}[\varphi] = \frac{\int_{\Theta} \mathbb{E}^{N_l}_{\theta}[\mathcal{P}^{l,N_l}_{T,\theta}[\varphi \cdot \mathcal{G}_{\theta}]] \, \nu(d\theta)}{\int_{\Theta} \mathbb{E}^{N_l}_{\theta}[\mathcal{P}^{l,N_l}_{T,\theta}[\mathcal{G}_{\theta}]] \, \nu(d\theta)},
\end{align*}
where $\mathbb{E}^{N_l}_{\theta}$ denotes expectation with respect to the $N_l$-particle approximation of the McKean-Vlasov laws.

\noindent\textbf{Step 2: Ratio Expansion and Coupling Construction.}
By the triangle inequality, we bound the difference in each level separately. For a fixed $l \in \{l_{\star}+1, \ldots, L\}$, the ratio difference can be expanded using the standard algebraic identity. For positive reals $A, B, C, D$ and arbitrary reals $a, b, c, d$, the following decomposition holds:
\begin{align}
\label{eq:ratio_identity_complete}
\frac{a}{A} - \frac{b}{B} - \frac{c}{C} + \frac{d}{D} 
&= \frac{1}{A}[(a-b) - (c-d)] - \frac{b}{AB}(A-B-C+D)+\mathcal{R}_{\text{rem}},
\end{align}
where 
\begin{align}
\label{eq:ratio_identity_cross}
\mathcal{R}_{\text{rem}}=&- \frac{1}{AC}(A-C)(c-d)  - \frac{1}{AB}(C-D)(b-d)\nonumber\\
&+ \frac{d}{CBD}(B-D)(C-D)  + \frac{d}{ACB}(A-C)(C-D).
\end{align}
The first two terms in the decomposition (\ref{eq:ratio_identity_complete}) constitute the dominant contributions and possess the same convergence order. We will analyze the first term in detail. The second term can be analyzed similarly, noting that the test function $\varphi$ is bounded ($\|\varphi\|_\infty < \infty$) and the term involves the coupled difference of normalizing constants. The remaining term $\mathcal{R}_{\text{rem}}$ represents higher-order error interactions. We will analyze the first term in $\mathcal{R}_{\text{rem}}$, as its other terms  can be controlled similarly.

We focus on the first term $\frac{1}{A}[(a-b) - (c-d)]$, the dominant contributor, which corresponds to 
\begin{multline}\label{eq:main_term}
\left( \int_{\Theta} \mathbb{E}^{N_l}_{\theta}[\mathcal{P}^{l,N_l}_{T,\theta}[\mathcal{G}_{\theta}]] \, \nu(d\theta) \right)^{-1} \\
\times \Bigg\{ \int_{\Theta} \mathbb{E}^{N_l}_{\theta}[\mathcal{P}^{l,N_l}_{T,\theta}[\varphi \cdot \mathcal{G}_{\theta}]] \, \nu(d\theta) - \int_{\Theta} \mathbb{E}^{N_l}_{\theta}[\mathcal{P}^{l-1,N_l}_{T,\theta}[\varphi \cdot \mathcal{G}_{\theta}]] \, \nu(d\theta) \\
- \left( \int_{\Theta} \mathcal{P}^l_{T,\theta}[\varphi \cdot \mathcal{G}_{\theta}] \, \nu(d\theta) - \int_{\Theta} \mathcal{P}^{l-1}_{T,\theta}[\varphi \cdot \mathcal{G}_{\theta}] \, \nu(d\theta) \right) \Bigg\}.
\end{multline}

To control \eqref{eq:main_term}, construct a probability space supporting four synchronized Euler-Maruyama processes driven by common Brownian increments:
\begin{itemize}
\item $X^{l,N_l}_{1:T}$: fine-level path with $N_l$-particle law approximation,
\item $X^{l-1,N_l}_{1:T}$: coarse-level path with $N_l$-particle law approximation,
\item $X^l_{1:T}$: fine-level path with exact McKean-Vlasov laws,
\item $X^{l-1}_{1:T}$: coarse-level path with exact McKean-Vlasov laws.
\end{itemize}
Denote by $\widetilde{\mathbb{E}}^{N_l}_{\theta}$ the expectation under this joint coupling. The numerator in \eqref{eq:main_term} can be rewritten as
\begin{multline}\label{eq:coupled_numerator}
\int_{\Theta} \widetilde{\mathbb{E}}^{N_l}_{\theta} \Big[ \varphi(\theta, X^{l,N_l}_{1:T}) \mathcal{G}_{\theta}(X^{l,N_l}_{1:T}) - \varphi(\theta, X^{l-1,N_l}_{1:T}) \mathcal{G}_{\theta}(X^{l-1,N_l}_{1:T}) \\
- \{ \varphi(\theta, X^l_{1:T}) \mathcal{G}_{\theta}(X^l_{1:T}) - \varphi(\theta, X^{l-1}_{1:T}) \mathcal{G}_{\theta}(X^{l-1}_{1:T}) \} \Big] \nu(d\theta).
\end{multline}

Since $\varphi \cdot \mathcal{G}_{\theta} \in \mathcal{C}_b^2 \cap \mathcal{B}_b$, Taylor expansion (see Lemma A.5 in \cite{po_mv}) gives the upper bound of the quantity in \eqref{eq:coupled_numerator} as
\begin{multline*}
C \int_{\Theta} \widetilde{\mathbb{E}}^{N_l}_{\theta} \Big[ \| (X^{l,N_l}_{1:T} - X^{l-1,N_l}_{1:T}) - (X^l_{1:T} - X^{l-1}_{1:T}) \| \\
+ \| X^l_{1:T} - X^{l-1}_{1:T} \| \cdot \left( \| X^{l,N_l}_{1:T} - X^l_{1:T} \| + \| X^{l-1,N_l}_{1:T} - X^{l-1}_{1:T} \| \right) \Big] \nu(d\theta).
\end{multline*}
Under Assumption (A\ref{ass:2}), applying Cauchy-Schwarz together with Lemma A.7 in \cite{po_mv}, the quantity in \eqref{eq:main_term} can be bounded by $O\left( \frac{\sqrt{\Delta_l}}{\sqrt{N_l}} \right)$.

We next bound the first term in $\mathcal{R}_{\text{rem}}$,
$- \frac{1}{AC}(A-C)(c-d)$, 
which corresponds to
\begin{align}
\label{eqn:1st_rem}
    -\left( \int_{\Theta} \mathbb{E}^{N_l}_{\theta}[\mathcal{P}^{l,N_l}_{T,\theta}[\mathcal{G}_{\theta}]] \, \nu(d\theta) \right)^{-1} \left( \int_{\Theta} \mathcal{P}^l_{T,\theta}[\mathcal{G}_{\theta}] \, \nu(d\theta) \right)^{-1}
\end{align}
$$\times \left| \int_{\Theta} \left(\mathbb{E}^{N_l}_{\theta}[\mathcal{P}^{l,N_l}_{T,\theta}[\mathcal{G}_{\theta}]] - \mathcal{P}^l_{T,\theta}[\mathcal{G}_{\theta}]\right) \nu(d\theta) \right|$$
$$\times \left| \int_{\Theta} \left(\mathcal{P}^l_{T,\theta}[\varphi \cdot \mathcal{G}_{\theta}] - \mathcal{P}^{l-1}_{T,\theta}[\varphi \cdot \mathcal{G}_{\theta}]\right) \nu(d\theta) \right|.$$

Under Assumption (A\ref{ass:2}), $\int_{\Theta} \mathbb{E}^{N_l}_{\theta}[\mathcal{P}^{l,N_l}_{T,\theta}[\mathcal{G}_{\theta}]] \, \nu(d\theta)$ and $\int_{\Theta} \mathcal{P}^l_{T,\theta}[\mathcal{G}_{\theta}] \, \nu(d\theta)$ are bounded away from zero. Here, $\int_{\Theta} \left(\mathbb{E}^{N_l}_{\theta}[\mathcal{P}^{l,N_l}_{T,\theta}[\mathcal{G}_{\theta}]] - \mathcal{P}^l_{T,\theta}[\mathcal{G}_{\theta}]\right) \nu(d\theta)$ representing particle approximation error, can be bounded by $O(N_l^{-1/2})$ using Lemma A.7 of \cite{po_mv}; $\int_{\Theta} \left(\mathcal{P}^l_{T,\theta}[\varphi \cdot \mathcal{G}_{\theta}] - \mathcal{P}^{l-1}_{T,\theta}[\varphi \cdot \mathcal{G}_{\theta}]\right) \nu(d\theta)$ representing time discretization weak error, can be bounded by $O(\Delta_l)$ under exact laws. Therefore, the quantity in \eqref{eqn:1st_rem} is of order 
$ O(1) \cdot O(N_l^{-1/2}) \cdot O(\Delta_l) = O\left(\frac{\Delta_l}{\sqrt{N_l}}\right).$

The other 3 terms in $\mathcal{R}_{\text{rem}}$ could be handled analogously, with each contributing $O(\Delta_l/\sqrt{N_l})$. They are dominated by the principal term's order $O(\sqrt{\Delta_l}/\sqrt{N_l})$ when $\Delta_l \to 0$. Summing over $l \in \{l_{\star}+1, \ldots, L\}$ completes the proof.
\end{proof}

\begin{lem}\label{lem:lem2}
Under Assumptions (A\ref{ass:1})–(A\ref{ass:2}), for any test function $\varphi \in \mathcal{C}_b^2(\Theta \times \mathbb{R}^{dT}) \cap \mathcal{B}_b(\Theta \times \mathbb{R}^{dT})$, there exists a finite constant $C$ such that for any $(l, N_l) \in \mathbb{N}^2$,
\begin{equation}\label{eq:base_particle_bias_bound}
\left| \overline{\pi}^{l,N_l}[\varphi] - \overline{\pi}^l[\varphi] \right| \leq \frac{C}{\sqrt{N_l}}.
\end{equation}
\end{lem}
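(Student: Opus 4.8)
The plan is to reuse the single-level part of the argument from Lemma~\ref{lem:lem1}, but without any telescoping structure since only one discretization level is present. Recall the ratio representations established in the proof of Lemma~\ref{lem:lem1},
$$
\overline{\pi}^l[\varphi] = \frac{\int_{\Theta} \mathcal{P}^l_{T,\theta}[\varphi \cdot \mathcal{G}_{\theta}] \, \nu(d\theta)}{\int_{\Theta} \mathcal{P}^l_{T,\theta}[\mathcal{G}_{\theta}] \, \nu(d\theta)},
\qquad
\overline{\pi}^{l,N_l}[\varphi] = \frac{\int_{\Theta} \mathbb{E}^{N_l}_{\theta}[\mathcal{P}^{l,N_l}_{T,\theta}[\varphi \cdot \mathcal{G}_{\theta}]] \, \nu(d\theta)}{\int_{\Theta} \mathbb{E}^{N_l}_{\theta}[\mathcal{P}^{l,N_l}_{T,\theta}[\mathcal{G}_{\theta}]] \, \nu(d\theta)}.
$$
Denote by $a,A$ the numerator and denominator of $\overline{\pi}^{l,N_l}[\varphi]$ and by $b,B$ the corresponding quantities for $\overline{\pi}^l[\varphi]$. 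The plan is to apply the elementary identity $\frac{a}{A} - \frac{b}{B} = \frac{1}{A}(a-b) - \frac{b}{AB}(A-B)$, so that $|\overline{\pi}^{l,N_l}[\varphi] - \overline{\pi}^l[\varphi]|$ is controlled by $|a-b|$ and $|A-B|$ together with the prefactors $1/A$ and $b/(AB)$.

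First I would observe that the prefactors are $O(1)$. By Assumption (A\ref{ass:2}) the likelihood $g_\theta$ is uniformly bounded above and bounded away from zero, so both $A$ and $B$ are bounded away from zero uniformly in $N_l$ and $\theta$, while boundedness of $\varphi$ makes $b$ finite; hence $1/A$ and $b/(AB)$ are uniformly bounded. It then remains to bound the two particle-approximation errors $|a-b|$ and $|A-B|$, which are of the same type (the former uses the test function $\varphi\cdot\mathcal{G}_\theta$, the latter $\mathcal{G}_\theta$). Focusing on $|A-B|$, I would construct, as in Lemma~\ref{lem:lem1}, a coupling of the empirical-law-driven Euler--Maruyama path $X^{l,N_l}_{1:T}$ and the exact-McKean--Vlasov-law path $X^l_{1:T}$ on a common probability space sharing the same Brownian increments. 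A Taylor expansion of $\mathcal{G}_\theta$ (bounded with bounded derivatives under Assumption (A\ref{ass:2}), via Lemma~A.5 of \cite{po_mv}) reduces $|A-B|$ to an $L^2$ control of $\|X^{l,N_l}_{1:T}-X^l_{1:T}\|$, which by the propagation-of-chaos estimate in Lemma~A.7 of \cite{po_mv} is $O(N_l^{-1/2})$ uniformly in $\theta$; integrating against $\nu$ preserves this rate. The same argument gives $|a-b|=O(N_l^{-1/2})$, and combining with the bounded prefactors yields $|\overline{\pi}^{l,N_l}[\varphi]-\overline{\pi}^l[\varphi]|\le C N_l^{-1/2}$.

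The main obstacle is the propagation-of-chaos bound $O(N_l^{-1/2})$ for $\|X^{l,N_l}_{1:T}-X^l_{1:T}\|$, which requires propagating the $N_l$-particle measure-approximation error through both interaction kernels $\overline{\zeta}_{1,\theta}$ and $\overline{\zeta}_{2,\theta}$. The law-dependent diffusion term is the delicate point, since the measure error enters the stochastic part of the dynamics; this is handled precisely by the $\mathcal{C}^2_b$-regularity of the coefficients and the uniform ellipticity imposed in Assumption (A\ref{ass:1}), exactly as invoked in the coupling step of Lemma~\ref{lem:lem1}. Once this estimate is in place, the remainder of the proof is the routine ratio-difference bookkeeping described above.
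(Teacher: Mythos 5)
Your proposal is correct and takes essentially the same route as the paper's proof: the same ratio-difference identity (in an algebraically equivalent form), the same $O(1)$ control of the prefactors via Assumption (A\ref{ass:2}) and boundedness of $\varphi$, and the same coupling of the empirical-law path $X^{l,N_l}_{1:T}$ and the exact-law path $X^l_{1:T}$ on common Brownian increments, reduced through the Taylor-expansion lemma and the propagation-of-chaos estimate (Lemmas A.5 and A.7 of \cite{po_mv}) to the rate $O(N_l^{-1/2})$. If anything, your explicit separate bounding of the numerator difference $|a-b|$ is slightly more careful than the paper's shorthand, which loosely attributes both terms to the normalizing-constant error $|C-A|$.
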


\begin{proof}
Recall the posterior representations from the proof of Lemma \ref{lem:lem1}:
\begin{align*}
\overline{\pi}^l[\varphi] = \frac{\int_{\Theta} \mathcal{P}^l_{T,\theta}[\varphi \cdot \mathcal{G}_{\theta}] \, \nu(d\theta)}{\int_{\Theta} \mathcal{P}^l_{T,\theta}[\mathcal{G}_{\theta}] \, \nu(d\theta)} \quad\text{and}\quad
\overline{\pi}^{l,N_l}[\varphi] = \frac{\int_{\Theta} \mathbb{E}^{N_l}_{\theta}[\mathcal{P}^{l,N_l}_{T,\theta}[\varphi \cdot \mathcal{G}_{\theta}]] \, \nu(d\theta)}{\int_{\Theta} \mathbb{E}^{N_l}_{\theta}[\mathcal{P}^{l,N_l}_{T,\theta}[\mathcal{G}_{\theta}]] \, \nu(d\theta)},
\end{align*}
where $\mathcal{G}_{\theta} = \prod_{k=1}^T g_{\theta}(x_k, y_k)$ and $\mathcal{P}^l_{T,\theta}[\cdot]$, $\mathcal{P}^{l,N_l}_{T,\theta}[\cdot]$ are path functionals defined therein. The proof proceeds in the following three steps.

\noindent\textbf{Step 1: Ratio Decomposition.}
To analyze the difference $\pi^{l,N_l}[\varphi] - \pi^l[\varphi]$, we introduce simplified notation. Let
\begin{align*}
a &:= \int_{\Theta} \mathbb{E}^{N_l}_{\theta}[\mathcal{P}^{l,N_l}_{T,\theta}[\varphi \cdot \mathcal{G}_{\theta}]] \, \nu(d\theta), \quad
A := \int_{\Theta} \mathbb{E}^{N_l}_{\theta}[\mathcal{P}^{l,N_l}_{T,\theta}[\mathcal{G}_{\theta}]] \, \nu(d\theta), \\
c &:= \int_{\Theta} \mathcal{P}^l_{T,\theta}[\varphi \cdot \mathcal{G}_{\theta}] \, \nu(d\theta), \quad\text{and}\quad
C := \int_{\Theta} \mathcal{P}^l_{T,\theta}[\mathcal{G}_{\theta}] \, \nu(d\theta).
\end{align*}

Then $\pi^{l,N_l}[\varphi] = a/A$ and $\pi^l[\varphi] = c/C$. Using the algebraic identity
$$\frac{a}{A} - \frac{c}{C} = \frac{a-c}{C} + \frac{a(C-A)}{AC},$$
we decompose the difference into two terms:
\begin{equation}\label{eq:decomp_two_terms}
\pi^{l,N_l}[\varphi] - \pi^l[\varphi] = \underbrace{\frac{1}{C}(a-c)}_{\text{Term 1}} + \underbrace{\frac{a}{AC}(C-A)}_{\text{Term 2}}.
\end{equation}

\noindent\textbf{Step 2: Bounding Strategy.}
Both terms in \eqref{eq:decomp_two_terms} are controlled by the same quantity. By Assumption (A\ref{ass:2}), the denominators $C$ and $AC$ are bounded away from zero. Since $\varphi$ is bounded ($\|\varphi\|_\infty < \infty$), we have $|a| \leq C'$ for some constant $C'$. Therefore, both terms can be bounded in terms of the particle approximation error of the normalizing constant:
\begin{equation}\label{eq:key_diff}
\left| C - A \right| = \left| \int_{\Theta} \mathcal{P}^l_{T,\theta}[\mathcal{G}_{\theta}] \, \nu(d\theta) - \int_{\Theta} \mathbb{E}^{N_l}_{\theta}[\mathcal{P}^{l,N_l}_{T,\theta}[\mathcal{G}_{\theta}]] \, \nu(d\theta) \right|.
\end{equation}

\noindent\textbf{Step 3: Particle Approximation Error via Coupling.}
To bound \eqref{eq:key_diff}, we construct a probability space supporting two coupled processes:
\begin{itemize}
\item $X^{l,N_l}_{1:T}$: particle-approximated path at level $l$ with $N_l$ particles,
\item $X^l_{1:T}$: exact-law path at level $l$,
\end{itemize}
driven by common Brownian increments. Denote the joint expectation by $\widetilde{\mathbb{E}}^{N_l}_{\theta}$. Then 
\begin{equation*}
\left| C - A \right| \leq C \left| \int_{\Theta} \widetilde{\mathbb{E}}^{N_l}_{\theta} \left[ \prod_{k=1}^T g_{\theta}(X^{l,N_l}_k, y_k) - \prod_{k=1}^T g_{\theta}(X^l_k, y_k) \right] \nu(d\theta) \right|=O\left( \frac{1}{\sqrt{N_l}} \right),
\end{equation*}
since $\mathcal{G}_{\theta} \in \mathcal{C}_b^2 \cap \mathcal{B}_b$, and the bound follows from Lemma A.7 of \cite{po_mv}.

Summing up Term 1 and Term 2 results $O\left( \frac{1}{\sqrt{N_l}} \right)+O\left( \frac{1}{\sqrt{N_l}} \right)=O\left( \frac{1}{\sqrt{N_l}} \right)$, which completes the proof.
\end{proof}

\begin{lem}\label{lem:lem3}
Under Assumptions (A1)–(A2), for any test function $\varphi \in \mathcal{C}_b^2(\Theta \times \mathbb{R}^{dT}) \cap \mathcal{B}_b(\Theta \times \mathbb{R}^{dT})$, there exists a finite constant $C$ such that for any $(l, N_l) \in \mathbb{N}^2$,
\begin{align}\label{eq:coupled_variance_bound}
\int_{\Theta \times (E_l \times E_{l-1})^T} \Big(\Psi(\theta, x^l_{1:T}, \tilde{x}^{l-1}_{1:T}) \Big)^2 \check{\pi}^{l,N_l}(d(\theta, u^l_{1:T}, \tilde{u}^{l-1}_{1:T})) \leq C\Delta_l,
\end{align}
where $\Psi$ is the weighted difference functional defined as
\begin{align}
\label{eqn:Psi_SP}
\Psi(\theta, x^l_{1:T}, \tilde{x}^{l-1}_{1:T}) := \varphi(\theta, x^l_{1:T}) \prod_{k=1}^T \check{H}_{k,\theta}(x^l_k, \tilde{x}^{l-1}_k) - \varphi(\theta, \tilde{x}^{l-1}_{1:T}) \prod_{k=1}^T \check{H}_{k,\theta}(\tilde{x}^{l-1}_k, x^l_k), 
\end{align}
with $\check{H}_{k,\theta}(x, x') = \frac{g_{\theta}(x, y_k)}{H_{k,\theta}(x, x')}$ and $H_{k,\theta}(x, x') = \frac{1}{2}(g_{\theta}(x, y_k) + g_{\theta}(x', y_k))$ defined in the main context.
\end{lem}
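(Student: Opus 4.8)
The plan is to exploit the structure of $\Psi$ as a \emph{swap difference} that vanishes on the diagonal $x^l_{1:T}=\widetilde{x}^{l-1}_{1:T}$ and is globally Lipschitz, so that $\Psi^2$ is pointwise dominated by the squared distance $\|x^l_{1:T}-\widetilde{x}^{l-1}_{1:T}\|^2$ between the coupled fine- and coarse-level trajectories, whose $\check{\pi}^{l,N_l}$-expectation is $O(\Delta_l)$ by the synchronous Brownian coupling. First I would set $G(\theta,a_{1:T},b_{1:T}):=\varphi(\theta,a_{1:T})\prod_{k=1}^{T}\check{H}_{k,\theta}(a_k,b_k)$, so that $\Psi=G(\theta,x^l_{1:T},\widetilde{x}^{l-1}_{1:T})-G(\theta,\widetilde{x}^{l-1}_{1:T},x^l_{1:T})$. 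Since $\check{H}_{k,\theta}(x,x)=1$, one has $G(\theta,a,a)=\varphi(\theta,a)$, which is invariant under the argument swap; hence $\Psi=0$ whenever $x^l_{1:T}=\widetilde{x}^{l-1}_{1:T}$.

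Next I would establish the Lipschitz bound. Under (A\ref{ass:2}), $g_\theta(\cdot,y_k)\in\mathcal{C}_b^2$ and $\inf_{\theta,x,y}g_\theta>0$, so $H_{k,\theta}=\tfrac12(g_\theta(x,y_k)+g_\theta(x',y_k))$ is bounded above and bounded away from zero, and therefore the quotient $\check{H}_{k,\theta}=g_\theta(\cdot,y_k)/H_{k,\theta}$ is bounded with bounded first derivatives, i.e.\ Lipschitz with a constant uniform in $\theta\in\Theta$. Because $\varphi$ is bounded and in $\mathcal{C}_b^2$, the finite product $G$ is bounded and Lipschitz in $(a_{1:T},b_{1:T})$ uniformly in $\theta$, whence $|\Psi|\le C\,\|(x^l_{1:T},\widetilde{x}^{l-1}_{1:T})-(\widetilde{x}^{l-1}_{1:T},x^l_{1:T})\|=C\sqrt{2}\,\|x^l_{1:T}-\widetilde{x}^{l-1}_{1:T}\|$ and thus $\Psi^2\le C\,\|x^l_{1:T}-\widetilde{x}^{l-1}_{1:T}\|^2$. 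I would stress here that $\Psi$ depends only on the integer-time states $x^l_{1:T},\widetilde{x}^{l-1}_{1:T}$, so the relevant dimension $dT$ is fixed and $C$ is independent of $\Delta_l$ and of $N_l$.

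I would then pass from $\check{\pi}^{l,N_l}$ to the coupled path-generating law. Since $H_{k,\theta}$ is bounded above and below by (A\ref{ass:2}), the normalizing constant of $\check{\pi}^{l,N_l}$ in \eqref{eq:main_tar} is bounded below and the observation-weight factor in its numerator is bounded above; consequently $\int\Psi^2\,d\check{\pi}^{l,N_l}\le C\,\check{\mathbb{E}}^{N_l}\big[\|x^l_{1:T}-\widetilde{x}^{l-1}_{1:T}\|^2\big]$, where $\check{\mathbb{E}}^{N_l}$ is the expectation under the coupled Euler--Maruyama trajectory law generated by the coupled kernel with prior $\nu$ and the $N_l$-particle law approximations. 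It remains to invoke the strong coupling estimate $\check{\mathbb{E}}^{N_l}[\|x^l_{1:T}-\widetilde{x}^{l-1}_{1:T}\|^2]=O(\Delta_l)$, which rests on the synchronous increment construction $\Delta\widetilde{W}^i_{t-1+j\Delta_{l-1}}=\Delta W^i_{t-1+(2j-1)\Delta_l}+\Delta W^i_{t-1+2j\Delta_l}$ together with the bounded Lipschitz coefficients and interaction kernels of (A\ref{ass:1}), via a discrete Gronwall argument on the per-step mean-square increments.

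The main obstacle is precisely this last estimate in the presence of the law-dependent diffusion. Here I would extend the strong-error machinery of \cite{po_mv} (in particular Lemma~A.7) to the case $\zeta_{2,\theta}\neq0$: one must control the extra term arising when the level-$l$ and level-$(l-1)$ empirical laws $\mu^{l,N_l}_{\cdot,\theta},\widetilde{\mu}^{l-1,N_l}_{\cdot,\theta}$ feed into the diffusion coefficient through $\zeta_{2,\theta}$. The $\mathcal{C}_b^2$ and boundedness hypotheses on $\zeta_{2,\theta}$ in (A\ref{ass:1}), combined with the uniform ellipticity, are what keep this diffusion-side contribution of order $\Delta_l$ and, crucially, uniform in the particle number $N_l$; propagating these bounds through the Gronwall step closes the argument and yields the claimed $O(\Delta_l)$ bound after summing the $T$ unit-interval increments.
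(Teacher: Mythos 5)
Your proposal is correct and follows essentially the same route as the paper's proof: a mean-value/Lipschitz bound $|\Psi|\le C\sum_{k=1}^T\|x_k^l-\widetilde{x}_k^{l-1}\|$ (using boundedness of $\varphi$ and of $\check{H}_{k,\theta}$ and its first derivatives under (A2)), a change of measure from $\check{\pi}^{l,N_l}$ to the coupled path-generating law via the two-sided bounds on $\prod_k H_{k,\theta}$, and finally the synchronous-coupling strong error $\mathbb{E}\big[\|x_t^l-\widetilde{x}_t^{l-1}\|^2\big]\le C\Delta_l$ obtained by extending the strong-convergence argument of Lemma A.7 of the cited reference to the law-dependent diffusion with both interaction kernels approximated by $N_l$ particles. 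Your additional observations (the diagonal-vanishing structure $\check{H}_{k,\theta}(x,x)=1$ and the explicit uniformity of the constant in $N_l$) are sound refinements but do not alter the argument.
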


\begin{proof}
The coupled posterior $\check{\pi}^{l,N_l}$ defined in (\ref{eq:main_tar}) is constructed from synchronized Euler-Maruyama discretizations at levels $l$ and $l-1$ driven by common Brownian increments. This coupling mechanism ensures that trajectories $(x^l_{1:T}, \tilde{x}^{l-1}_{1:T})$ exhibit strong correlation, which is essential for variance reduction in the multilevel framework. The proof proceeds in the following two steps.

\noindent\textbf{Step 1. }
Since $\varphi \in \mathcal{C}_b^2$ and $\check{H}_{k,\theta}$ is bounded under Assumption (A\ref{ass:2}), the composite function $\Psi$ admits a first-order Taylor expansion. For coupled trajectories $(x^l_{1:T}, \tilde{x}^{l-1}_{1:T})$, standard mean-value estimates yield
$$|\Psi(\theta, x^l_{1:T}, \tilde{x}^{l-1}_{1:T})| \leq C \sum_{k=1}^T \|x^l_k - \tilde{x}^{l-1}_k\|,$$
where the constant depends on $\|\varphi\|_{\mathcal{C}^1_b}$ and $\sup_{k,\theta} \|\check{H}_{k,\theta}\|_{\mathcal{C}^1_b}$.

Squaring both sides and taking expectations with respect to $\check{\pi}^{l,N_l}$:
$$\mathbb{E}_{\check{\pi}^{l,N_l}}[\Psi^2] \leq C \mathbb{E}_{\check{\pi}^{l,N_l}}\left[\left(\sum_{k=1}^T \|x^l_k - \tilde{x}^{l-1}_k\|\right)^2\right].$$
By the Cauchy-Schwarz inequality,
\begin{align*}
\mathbb{E}_{\check{\pi}^{l,N_l}}\left[\left(\sum_{k=1}^T \|x^l_k - \tilde{x}^{l-1}_k\|\right)^2\right] &\leq T \sum_{k=1}^T\mathbb{E}_{\check{\pi}^{l,N_l}}[\|x^l_k - \tilde{x}^{l-1}_k\|^2]\\&= T\sum_{k=1}^T \frac{ \mathbb{E}_{{l,N_l}}[\|x^l_k - \tilde{x}^{l-1}_k\|^2 \prod_{k=1}^T H_{k,\theta}(\tilde{x}^{l-1}_k, x^l_k)]}{ \mathbb{E}_{{l,N_l}}[ \prod_{k=1}^T H_{k,\theta}(\tilde{x}^{l-1}_k, x^l_k)]} \\
&\leq CT\sum_{k=1}^T\mathbb{E}_{{l,N_l}}[\|x^l_k - \tilde{x}^{l-1}_k\|^2].\end{align*}
where the constant $C$ follows from the bounds on $\prod_{k=1}^T H_{k,\theta}(\tilde{x}^{l-1}_k, x^l_k)$ imposed by Assumption (A\ref{ass:2}), and the expectation is taken w.r.t the joint distribution of the coupled trajectory sequences $\prod_{k=1}^T \check{P}_{\theta}^{N_l}\!\left(
(x_{k-1}^l,\widetilde{x}_{k-1}^{l-1}),
d(u_k^l,\widetilde{u}_{k}^{l-1})
\right)$.

\noindent\textbf{Step 2.}
For synchronized Euler-Maruyama discretizations at consecutive levels $l$ and $l-1$ sharing common Brownian increments, 
For synchronized Euler–Maruyama discretizations at consecutive levels $l$ and $l-1$ that share common Brownian increments, we have
$$\mathbb{E}_{l,N_l}[\|x^l_t - \tilde{x}^{l-1}_t\|^2] \leq C\Delta_l, \qquad t \in \{1, \ldots, T\},$$
which follows by applying the proof strategy used in establishing Lemma A.7 of \cite{po_mv}.
This result accounts for:
\begin{itemize}
\item The coupled Euler-Maruyama discretizations at levels $l$ and $l-1$, where both processes are driven by the same Brownian motion;
\item The particle approximation of the McKean-Vlasov laws $\mu^{l,N_l}_{t,\theta}$ and $\mu^{l-1,N_l}_{t,\theta}$ using $N_l$ particles, where both interaction kernels $\zeta_{1,\theta}$ (in drift) and $\zeta_{2,\theta}$ (in diffusion) are synchronously approximated;
\item The standard strong convergence rate $O(\Delta_l)$ for coupled discretizations, which holds under Assumption (\ref{ass:1}) and the coupling structure.
\end{itemize}

Combining the above two steps gives
$$\mathbb{E}_{\check{\pi}^{l,N_l}}[\Psi^2] \leq C \cdot T \sum_{t=1}^T \Delta_l = CT^2\Delta_l = O(\Delta_l),$$
which completes the proof.
\end{proof}

\begin{lem}\label{lem:lem4}
Under Assumptions (A\ref{ass:1})–(A\ref{ass:3}), for any test function $\varphi \in \mathcal{C}_b^2(\Theta \times \mathbb{R}^{dT}) \cap \mathcal{B}_b(\Theta \times \mathbb{R}^{dT})$, there exists a finite constant $C$ such that for any $(l_{\star}, L, N_{l_{\star}}, K_{l_{\star}}, \ldots, N_L, K_L) \in \mathbb{N}^{2(L-l_{\star})+4}$ with $l_{\star} < L$,
\begin{equation}\label{eq:bilevel_mcmc_variance}
\sum_{l=l_{\star}+1}^L \mathbb{E}\left[
\left(
\overline{\pi}^{l,N_l,K_l}[\varphi] - \overline{\pi}^{l-1,N_l,K_l}[\varphi] -
\{\overline{\pi}^{l,N_l}[\varphi] - \overline{\pi}^{l-1,N_l}[\varphi]\}
\right)^2
\right]
\leq C\sum_{l=l_{\star}+1}^L\frac{\Delta_l}{K_l+1}.
\end{equation}
\end{lem}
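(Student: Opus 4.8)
The plan is to reduce the MCMC fluctuation of the bi-level difference estimator \eqref{eq:mcmc_bl_est} to the ergodic averages of two functionals whose stationary second moments are of order $\Delta_l$, and then to invoke the uniform-ergodicity variance bound of Assumption~(A\ref{ass:3}). Fix $l$ and denote the bi-level chain output by $Z_k=(\theta^l(k),x^l_{1:T}(k),\widetilde{x}^{l-1}_{1:T}(k))$, $k=0,\dots,K_l$, which is stationary with law $\check{\pi}^{l,N_l}$ by (A\ref{ass:3}). Write $\check{H}=\prod_{s=1}^T\check{H}_{s,\theta}(x^l_s,\widetilde{x}^{l-1}_s)$ and $\check{H}'=\prod_{s=1}^T\check{H}_{s,\theta}(\widetilde{x}^{l-1}_s,x^l_s)$, and let $S_1,D_1$ (resp. $S_2,D_2$) be the empirical numerator and denominator of the fine (resp. coarse) self-normalized estimator in \eqref{eq:mcmc_bl_est}. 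The crucial observation is the exact algebraic identity $S_1-S_2=\overline{\Psi}$ and $D_1-D_2=\overline{(\check{H}-\check{H}')}$, where $\Psi$ is the weighted difference functional of \eqref{eqn:Psi_SP} and the bars denote $(K_l+1)$-sample ergodic averages. Factoring the difference of ratios accordingly gives
$$\overline{\pi}^{l,N_l,K_l}[\varphi]-\overline{\pi}^{l-1,N_l,K_l}[\varphi]=\frac{\overline{\Psi}}{D_1}-\frac{S_2}{D_1 D_2}\,\overline{(\check{H}-\check{H}')},$$
and, via the change of measure \eqref{eq:main_eq}, $\overline{\pi}^{l,N_l}[\varphi]-\overline{\pi}^{l-1,N_l}[\varphi]$ has the identical form with ergodic averages replaced by stationary expectations.

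Next I would show that both functionals isolated by this factorization are $O(\sqrt{\Delta_l})$ in $L^2(\check{\pi}^{l,N_l})$. For $\Psi$ this is exactly Lemma~\ref{lem:lem3}, $\mathbb{E}_{\check{\pi}}[\Psi^2]\le C\Delta_l$. For $\check{H}-\check{H}'$, the symmetry $H_{s,\theta}(x,x')=H_{s,\theta}(x',x)$ gives $\check{H}_{s,\theta}(x,x')-\check{H}_{s,\theta}(x',x)=(g_\theta(x,y_s)-g_\theta(x',y_s))/H_{s,\theta}(x,x')$, which by the boundedness and Lipschitz regularity of $g_\theta$ under (A\ref{ass:2}) is pointwise $O(\sum_s\|x^l_s-\widetilde{x}^{l-1}_s\|)$; the synchronized strong-convergence estimate $\mathbb{E}_{l,N_l}[\|x^l_s-\widetilde{x}^{l-1}_s\|^2]\le C\Delta_l$ established within Lemma~\ref{lem:lem3} then yields $\mathbb{E}_{\check{\pi}}[(\check{H}-\check{H}')^2]\le C\Delta_l$. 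Moreover, (A\ref{ass:2}) forces each $\check{H}_{s,\theta}=g_\theta/H_{s,\theta}$ into a fixed compact subinterval of $(0,\infty)$, so $D_1,D_2$ and their stationary means are uniformly bounded away from zero and $S_2$ is uniformly bounded.

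I would then subtract the two representations and bound each piece. Splitting $\overline{\Psi}/D_1-\mathbb{E}_{\check{\pi}}[\Psi]/\mu_b$ (with $\mu_b=\mathbb{E}_{\check{\pi}}[\check{H}]$) into $\mu_b^{-1}(\overline{\Psi}-\mathbb{E}_{\check{\pi}}[\Psi])$ plus $\overline{\Psi}\,(D_1^{-1}-\mu_b^{-1})$, the centered ergodic-average term obeys $\mathbb{E}[(\overline{\Psi}-\mathbb{E}_{\check{\pi}}[\Psi])^2]\le \frac{C}{K_l+1}\mathrm{Var}_{\check{\pi}}(\Psi)\le\frac{C\Delta_l}{K_l+1}$ by the variance bound for uniformly ergodic reversible chains (Assumption~(A\ref{ass:3}) and Corollary~2.1 of \cite{roberts1997geometric}, with mixing constant uniform in $l$), while the product term pairs the $O(\sqrt{\Delta_l})$ quantity $\overline{\Psi}$ with the $O((K_l+1)^{-1/2})$ fluctuation of the bounded-below denominator $D_1$, giving $O(\Delta_l/(K_l+1))$ by Cauchy--Schwarz. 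The companion contributions from $\overline{(\check{H}-\check{H}')}$ are handled identically, using the second-moment bound established above in place of Lemma~\ref{lem:lem3}. Collecting these estimates yields the per-level bound $C\Delta_l/(K_l+1)$, and summing over $l=l_{\star}+1,\dots,L$ produces the right-hand side of \eqref{eq:bilevel_mcmc_variance}.

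The main obstacle is the factorization together with the cross-term control: the naive delta-method linearization of a single self-normalized estimator gives only an $O(1/(K_l+1))$ variance, so the entire gain must come from the cancellation that the exact factorization exposes, namely that the difference of the two ratios is driven solely by the small functionals $\Psi$ and $\check{H}-\check{H}'$. Verifying that every residual cross-term (products of these small averages with $O(1)$ denominator fluctuations) genuinely carries the factor $\Delta_l$ rather than $\sqrt{\Delta_l}$ is the delicate point, and it is precisely the boundedness of the weights and the $O(\Delta_l)$ second-moment bounds above that make this work. Finally, the uniformity in $l$ of the ergodicity constant in (A\ref{ass:3}) is what guarantees the per-level constants coincide, rendering the bound $\sum_l C\Delta_l/(K_l+1)$ legitimate.
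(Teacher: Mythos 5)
Your proof is correct, and at its core it coincides with the paper's argument: the paper applies the six-term ratio identity \eqref{eq:ratio_identity_complete} from Lemma~\ref{lem:lem1} to the four ratios $\tfrac{a}{A}-\tfrac{b}{B}-\tfrac{c}{C}+\tfrac{d}{D}$, and its two principal terms are exactly your two centered ergodic averages --- $(a-b)-(c-d)=\overline{\Psi}-\mathbb{E}_{\check{\pi}^{l,N_l}}[\Psi]$, and $A-B-C+D$ is the centered average of $\check{H}-\check{H}'$ --- after which it invokes Lemma~\ref{lem:lem3} together with uniform ergodicity (A\ref{ass:3}), precisely as you do. Where your route differs, to its advantage, is in the packaging: the exact two-term factorization $\tfrac{S_1}{D_1}-\tfrac{S_2}{D_2}=\tfrac{\overline{\Psi}}{D_1}-\tfrac{S_2}{D_1D_2}\,\overline{(\check{H}-\check{H}')}$ isolates the two small functionals directly and avoids the four remainder terms of \eqref{eq:ratio_identity_cross}, and you explicitly prove $\mathbb{E}_{\check{\pi}^{l,N_l}}[(\check{H}-\check{H}')^2]\leq C\Delta_l$ via the symmetry $H_{s,\theta}(x,x')=H_{s,\theta}(x',x)$ --- a bound the paper's proof leaves implicit (``the second term can be analyzed similarly''). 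Your observation that both $\Psi$ and $\check{H}-\check{H}'$ are uniformly bounded under (A\ref{ass:2}), so that the ergodic variance bound applies directly with $\mathrm{Var}_{\check{\pi}}(\Psi)\leq C\Delta_l$, matches the paper's Step 2.

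One point needs repair: your cross-term bound ``by Cauchy--Schwarz'' is loose as literally stated. Applying Cauchy--Schwarz to $\mathbb{E}\bigl[\overline{\Psi}^{\,2}\bigl(D_1^{-1}-\mu_b^{-1}\bigr)^2\bigr]$ requires fourth moments, namely $\mathbb{E}_{\check{\pi}}[\Psi^4]\leq C\Delta_l^2$ (a fourth-moment strong-convergence estimate not contained in Lemma~\ref{lem:lem3}; boundedness of $\Psi$ only gives $\mathbb{E}[\Psi^4]\leq C\Delta_l$) and a fourth-moment ergodic bound $\mathbb{E}[(D_1-\mu_b)^4]\leq C(K_l+1)^{-2}$, which goes beyond Corollary~2.1 of \cite{roberts1997geometric}. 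The fix stays within your own toolkit: write $\overline{\Psi}=(\overline{\Psi}-\mathbb{E}[\Psi])+\mathbb{E}[\Psi]$, bound the first piece using the uniform boundedness of $D_1^{-1}-\mu_b^{-1}$ (weights bounded above and away from zero under (A\ref{ass:2})) times the ergodic variance $C\Delta_l/(K_l+1)$, and the second piece using $|\mathbb{E}[\Psi]|\leq C\sqrt{\Delta_l}$ times the $O((K_l+1)^{-1})$ second moment of the denominator fluctuation; the companion terms involving $\overline{(\check{H}-\check{H}')}$ are handled identically. With that substitution every residual term carries the factor $\Delta_l/(K_l+1)$ --- playing the role of the paper's remainder analysis in Step 3, which obtains the even smaller order $O(\Delta_l^2/(K_l+1))$ for its remainders --- and summing over $l$ completes the proof exactly as in the paper's Step 4.
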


\begin{proof}
Fix level $l \in \{l_{\star}+1, \ldots, L\}$. The proof proceeds in the following four steps.

\noindent\textbf{Step 1: Ratio Identity and Decomposition.}
The bi-level MCMC estimator $\overline{\pi}^{l,N_l,K_l}[\varphi] - \overline{\pi}^{l-1,N_l,K_l}[\varphi]$ from equation (\ref{eq:mcmc_bl_est}) involves four ratio terms. Applying the algebraic identity from Lemma \ref{lem:lem1}, equation (\ref{eq:ratio_identity_complete}), we obtain a decomposition into six terms (two principal terms plus four remainder terms) for the difference
$$\frac{a}{A} - \frac{b}{B} - \frac{c}{C} + \frac{d}{D},$$
where
\begin{align*}
\frac{a}{A} = \overline{\pi}^{l,N_l,K_l}[\varphi],\quad
\frac{b}{B} = \overline{\pi}^{l-1,N_l,K_l}[\varphi], \quad
\frac{c}{C} = \overline{\pi}^{l,N_l}[\varphi],\quad \text{and}\quad
\frac{d}{D} = \overline{\pi}^{l-1,N_l}[\varphi].
\end{align*}
For the \textbf{principal terms}, we consider 
$$\frac{1}{A}[(a-b) - (c-d)],$$
which represents the difference-of-differences under the coupled posterior $\check{\pi}^{l,N_l}$ defined in equation (\ref{eq:main_tar}).

\noindent\textbf{Step 2: Principal Term Control via Ergodic Theory.}
Recall the weighted difference functional defined in \eqref{eqn:Psi_SP} as
$$\Psi(\theta, x^l_{1:T}, \tilde{x}^{l-1}_{1:T}) = \varphi(\theta, x^l_{1:T}) \prod_{k=1}^T \check{H}_{k,\theta}(x^l_k, \tilde{x}^{l-1}_k) - \varphi(\theta, \tilde{x}^{l-1}_{1:T}) \prod_{k=1}^T \check{H}_{k,\theta}(\tilde{x}^{l-1}_k, x^l_k).$$
The principal term corresponds to the MCMC sampling error for estimating $\mathbb{E}_{\check{\pi}^{l,N_l}}[\Psi]$ using the bi-level MCMC chain $\{V_i\}_{i=1}^{K_l}$ with state space $\Theta \times (E_l \times E_{l-1})^T$, where each $V_i = (\theta_i, x^l_{i,1:T}, \tilde{x}^{l-1}_{i,1:T})$ represents a coupled sample at iteration $i$.

By Assumption (A\ref{ass:3}), the bi-level MCMC kernel $\check{R}_l$ is uniformly ergodic with invariant measure $\check{\pi}^{l,N_l}$. Applying standard results for uniformly ergodic Markov chains (\cite{jasra_bpe_sde}, Proposition A.1) with $K_l$ iterations yields
$$\mathbb{E}\left[\left(\frac{1}{K_l}\sum_{i=1}^{K_l}\Psi(V_i) - \mathbb{E}_{\check{\pi}^{l,N_l}}[\Psi]\right)^2\right] \leq C\frac{\text{Var}_{\check{\pi}^{l,N_l}}(\Psi)}{K_l+1},$$
where $\text{Var}_{\check{\pi}^{l,N_l}}(\Psi) = \mathbb{E}_{\check{\pi}^{l,N_l}}[\Psi^2] - (\mathbb{E}_{\check{\pi}^{l,N_l}}[\Psi])^2$ is the variance under $\check{\pi}^{l,N_l}$.
By Lemma \ref{lem:lem3}, we have $\mathbb{E}_{\check{\pi}^{l,N_l}}[\Psi^2] \leq C\Delta_l$, therefore the second moment of the principal terms can be bounded by 
$C\frac{\Delta_l}{K_l+1}.$

\noindent\textbf{Step 3: Remainder Terms Control.}
Consider one representative remainder term from equation (\ref{eq:ratio_identity_complete}),
\begin{equation*}
\mathcal{R}_{\text{example}} = -\frac{1}{AC}(A-C)(c-d),
\end{equation*}
which corresponds to the explicit form
\begin{align*}
\mathcal{R}_{\text{example}} & = \left(\frac{1}{K_l+1} \sum_{k=0}^{K_l} \mathcal{P}^{l,N_l}_{T,\theta^{(k)}}[\mathcal{G}_{\theta^{(k)}}]\right)^{-1} \left(\int_{\Theta} \mathbb{E}^{N_l}_{\theta}[\mathcal{P}^{l,N_l}_{T,\theta}[\mathcal{G}_{\theta}]] \, \nu(d\theta)\right)^{-1} \\
& \quad \times \left(\frac{1}{K_l+1} \sum_{k=0}^{K_l} \mathcal{P}^{l,N_l}_{T,\theta^{(k)}}[\mathcal{G}_{\theta^{(k)}}] - \int_{\Theta} \mathbb{E}^{N_l}_{\theta}[\mathcal{P}^{l,N_l}_{T,\theta}[\mathcal{G}_{\theta}]] \, \nu(d\theta)\right) \\
& \quad \times \left(\int_{\Theta} \left(\mathbb{E}^{N_l}_{\theta}[\mathcal{P}^{l,N_l}_{T,\theta}[\varphi \cdot \mathcal{G}_{\theta}]] - \mathbb{E}^{N_l}_{\theta}[\mathcal{P}^{l-1,N_l}_{T,\theta}[\varphi \cdot \mathcal{G}_{\theta}]]\right) \nu(d\theta)\right).
\end{align*}

This term involves three factors:
\begin{itemize}
\item \textbf{Denominators}: By Assumption (A\ref{ass:2}), $(AC)^{-1}$ is in the order of $O(1)$.
\item \textbf{MCMC error} $(A-C)$: The normalizing constant difference between the MCMC empirical average (with $K_l$ iterations) and the exact expectation under the $N_l$-particle approximated laws. By the uniform ergodicity assumption, this term has MCMC sampling variance $O((K_l+1)^{-1})$.
\item \textbf{Coupled particle approximation weak error} $(c-d)$: The difference between expectations at two discretization levels $l$ and $l-1$, both approximated using $N_l$ particles. This represents the weak error of the discretization and is bounded by $O(\Delta_l)$ under the $N_l$-particle approximated laws.
\end{itemize}
Applying the Cauchy-Schwarz inequality yields:
\begin{align*}
\mathbb{E}[|\mathcal{R}_{\text{example}}|^2] &=\cdot O(1) \mathbb{E}[(A-C)^2] \cdot |c-d|^2 \\
&= O(1) \cdot O\left(\frac{1}{K_l+1}\right) \cdot O(\Delta_l^2) \\
&= O\left(\frac{\Delta_l^2}{K_l+1}\right).
\end{align*}
Since $\Delta_l \to 0$ as $l$ increases in the multilevel framework, this term is of higher order in $\Delta_l$ relative to the principal term’s bound of $O!\left(\Delta_l (K_l+1)^{-1}\right)$, and is therefore dominated by the latter.

All other remainder terms admit analogous bounds through similar analysis, each contributing at most $O(\Delta_l^2(K_l+1)^{-1})$, which are negligible compared to the principal terms.

\noindent\textbf{Step 4: Summation Over Levels.}

Combining Steps 2 and 3, and applying the Cauchy–Schwarz inequality six times (once for the principal term and once for each of the five remainder terms), we obtain for each level $l \in {l_{\star}+1, \ldots, L}$:
$$\mathbb{E}\left[\left(\overline{\pi}^{l,N_l,K_l}[\varphi] - \overline{\pi}^{l-1,N_l,K_l}[\varphi] - \{\overline{\pi}^{l,N_l}[\varphi] - \overline{\pi}^{l-1,N_l}[\varphi]\}\right)^2\right] \leq C\frac{\Delta_l}{K_l+1}.$$
Summing over $l = l_{\star}+1, \ldots, L$ completes the proof.
\end{proof}

\begin{lem}\label{lem:lem5}
Under Assumptions (A\ref{ass:1})–(A\ref{ass:3}), for any test function $\varphi \in C^2_b(\Theta \times \mathbb{R}^{dT}) \cap \mathcal{B}_b(\Theta \times \mathbb{R}^{dT})$, there exists a finite constant $C < +\infty$ such that for any $(l_{\star}, L, N_{l_{\star}}, K_{l_{\star}}, \dots, N_L, K_L) \in \mathbb{N}^{2(L-l_{\star})+4}$ with $l_{\star} < L$,
\begin{align*}
&\Bigg|\sum_{(l,q) \in \mathcal{A}_{l_{\star},L}} \mathbb{E}\bigg[ \Big\{ \overline{\pi}^{l,N_l,K_l}[\varphi] - \overline{\pi}^{l-1,N_l,K_l}[\varphi]- \left(\overline{\pi}^{l,N_l}[\varphi] - \overline{\pi}^{l-1,N_l}[\varphi]\right) \Big\}  \\
&\qquad\qquad\qquad\times \Big\{ \overline{\pi}^{q,N_q,K_q}[\varphi] - \overline{\pi}^{q-1,N_q,K_q}[\varphi] - \left(\overline{\pi}^{q,N_q}[\varphi] - \overline{\pi}^{q-1,N_q}[\varphi]\right) \Big\} \bigg]\Bigg|\\
&\leq C \sum_{(l,q) \in \mathcal{A}_{l_{\star},L}} \frac{\sqrt{\Delta_l}}{K_l+1} \cdot \frac{\sqrt{\Delta_q}}{K_q+1},
\end{align*}
where $\mathcal{A}_{l_{\star},L} = \{(l,q) \in \{l_{\star}+1, \dots, L\}^2 : l \neq q\}$.
\end{lem}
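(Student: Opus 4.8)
The plan is to exploit the mutual independence of the bi-level MCMC chains run at distinct levels, which collapses each cross-level covariance into a product of two per-level biases, and then to bound each such bias by $C\sqrt{\Delta_l}/(K_l+1)$. Throughout, write
$$\mathcal{E}_l := \overline{\pi}^{l,N_l,K_l}[\varphi] - \overline{\pi}^{l-1,N_l,K_l}[\varphi] - \big(\overline{\pi}^{l,N_l}[\varphi] - \overline{\pi}^{l-1,N_l}[\varphi]\big)$$
for the centered bi-level error at level $l$, so that the quantity to be controlled is $\big|\sum_{(l,q)\in\mathcal{A}_{l_{\star},L}}\mathbb{E}[\mathcal{E}_l\mathcal{E}_q]\big|$.

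First I would invoke the algorithmic construction (Step 3 of Algorithm \ref{MLPMCMC_mckean_vlasov_part1}): the telescoping differences at different discretization levels are generated by bi-level MCMC runs driven by mutually independent random inputs. Consequently, for $l\neq q$ the random variables $\mathcal{E}_l$ and $\mathcal{E}_q$ are independent, so that $\mathbb{E}[\mathcal{E}_l\mathcal{E}_q]=\mathbb{E}[\mathcal{E}_l]\,\mathbb{E}[\mathcal{E}_q]$. This is the structural reduction that turns the off-diagonal sum into a product of marginal biases and accounts for the product form of the target bound; in particular no genuine cross-chain correlation ever needs to be estimated.

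The central step is the per-level bias estimate $|\mathbb{E}[\mathcal{E}_l]| \leq C\sqrt{\Delta_l}/(K_l+1)$. Here I would reuse the ratio identity \eqref{eq:ratio_identity_complete} exactly as in the proof of Lemma \ref{lem:lem4}, identifying $a/A, b/B, c/C, d/D$ with the MCMC and exact bi-level quantities, so that $\mathcal{E}_l$ splits into a principal (linear) term and nonlinear ratio-correction remainders. Because all chains are initialized at their invariant measures (Assumption A\ref{ass:3}), the principal term reduces to leading order to a centered ergodic average of the weighted difference functional $\Psi$ of \eqref{eqn:Psi_SP}, and hence has vanishing expectation; the fluctuations of the random prefactor are absorbed into the remainders. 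Thus $\mathbb{E}[\mathcal{E}_l]$ equals the sum of the expectations of the remainder terms. Each remainder is a product of a mean-zero MCMC fluctuation of a normalizing constant with either a second fluctuation or a deterministic discretization difference, divided by denominators bounded away from zero by Assumption A\ref{ass:2}. A delta-method expansion of these ratios about the stationary means shows that each such expectation factorizes, up to higher order, into an MCMC variance factor of order $(K_l+1)^{-1}$ and a coupling factor controlled either by the weak discretization error $O(\Delta_l)$ or, through Cauchy--Schwarz together with Lemma \ref{lem:lem3} (which gives $\mathbb{E}_{\check{\pi}^{l,N_l}}[\Psi^2]\leq C\Delta_l$), by $O(\sqrt{\Delta_l})$. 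The dominant contribution is therefore $O(\sqrt{\Delta_l}/(K_l+1))$, with all remaining terms of order $O(\Delta_l/(K_l+1))$ and hence negligible as $\Delta_l\to 0$.

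Finally I would combine the two ingredients by taking absolute values and applying the triangle inequality,
$$\Big|\sum_{(l,q)\in\mathcal{A}_{l_{\star},L}}\mathbb{E}[\mathcal{E}_l]\,\mathbb{E}[\mathcal{E}_q]\Big| \leq \sum_{(l,q)\in\mathcal{A}_{l_{\star},L}}|\mathbb{E}[\mathcal{E}_l]|\,|\mathbb{E}[\mathcal{E}_q]| \leq C\sum_{(l,q)\in\mathcal{A}_{l_{\star},L}}\frac{\sqrt{\Delta_l}}{K_l+1}\cdot\frac{\sqrt{\Delta_q}}{K_q+1},$$
which is the asserted bound. I expect the main obstacle to be the per-level bias estimate: one must verify rigorously that the linear contribution truly vanishes under stationarity and that the surviving nonlinear corrections genuinely inherit the $\sqrt{\Delta_l}$ scaling from the coupling (Lemma \ref{lem:lem3}) rather than an unfavorable $O(1)$ factor, while simultaneously controlling the Taylor remainders uniformly in $l$ via the boundedness of the importance weights (Assumption A\ref{ass:2}) and the uniform ergodicity with uniform minorization (Assumption A\ref{ass:3}).
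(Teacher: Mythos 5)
Your proposal matches the paper's proof essentially step for step: the independence of the bi-level chains at distinct levels factorizes each cross-term into a product of per-level biases $\mathbb{E}[\mathcal{E}_l]\,\mathbb{E}[\mathcal{E}_q]$, and each bias is bounded by $C\sqrt{\Delta_l}/(K_l+1)$ exactly as in the paper — via the ratio identity of Lemma~\ref{lem:lem1}, stationarity (Assumption A\ref{ass:3}) killing the centered ergodic average of $\Psi$, Cauchy--Schwarz pairing the $O((K_l+1)^{-1})$ normalizing-constant fluctuation with the $O(\Delta_l/(K_l+1))$ variance of the coupled difference from Lemmas~\ref{lem:lem3}--\ref{lem:lem4}, and the remainder terms contributing only $O(\Delta_l/(K_l+1))$. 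The argument is correct and no further comparison is needed.
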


\begin{proof}
We control the cross-level covariance terms arising in the multilevel estimator. For distinct discretization levels, the independence of the corresponding bi-level MCMC chains allows these terms to factorize, enabling straightforward bounding. The proof proceeds in the following four steps.

\noindent\textbf{Step 1: Problem Setup and Independence Structure.}
For any $(l,q) \in \mathcal{A}_{l_{\star},L}$ with $l \neq q$, consider the cross-level expectation:
\begin{align*}
&\mathbb{E}\bigg[\Big\{\overline{\pi}^{l,N_l,K_l}[\varphi] - \overline{\pi}^{l-1,N_l,K_l}[\varphi] - \left(\overline{\pi}^{l,N_l}[\varphi] - \overline{\pi}^{l-1,N_l}[\varphi]\right)\Big\}\\
&\qquad\qquad \times\Big\{\overline{\pi}^{q,N_q,K_q}[\varphi] - \overline{\pi}^{q-1,N_q,K_q}[\varphi] - \left(\overline{\pi}^{q,N_q}[\varphi] - \overline{\pi}^{q-1,N_q}[\varphi]\right)\Big\}\bigg].
\end{align*}
By Algorithm \ref{MLPMCMC_mckean_vlasov_part1}, the bi-level MCMC chains at levels $l$ and $q$ are run independently, using independent randomness (Brownian motions and resampling indices). By independence, the expectation factors as:
\begin{align}
\mathbb{E}\left[\text{Cross-term}_{l,q}\right]
&= \mathbb{E}\left[\overline{\pi}^{l,N_l,K_l}[\varphi] - \overline{\pi}^{l-1,N_l,K_l}[\varphi] - \left(\overline{\pi}^{l,N_l}[\varphi] - \overline{\pi}^{l-1,N_l}[\varphi]\right)\right]\label{eq:factorization_A10}\\
&\quad\times \mathbb{E}\left[\overline{\pi}^{q,N_q,K_q}[\varphi] - \overline{\pi}^{q-1,N_q,K_q}[\varphi] - \left(\overline{\pi}^{q,N_q}[\varphi] - \overline{\pi}^{q-1,N_q}[\varphi]\right)\right].\nonumber
\end{align}

\noindent\textbf{Step 2: Bounding Individual Bias Terms.}
Define the bias at level $l$ as:
\begin{equation*}
B_l := \mathbb{E}\left[\overline{\pi}^{l,N_l,K_l}[\varphi] - \overline{\pi}^{l-1,N_l,K_l}[\varphi]\right] - \left(\overline{\pi}^{l,N_l}[\varphi] - \overline{\pi}^{l-1,N_l}[\varphi]\right).
\end{equation*}
Following the ratio decomposition approach in the proof of Lemma \ref{lem:lem1}, we adopt notation consistent with equation (\ref{eq:ratio_identity_complete}). For level $l$, define
\begin{itemize}
\item $A := \frac{1}{K_l}\sum_{i=1}^{K_l} \prod_{k=1}^T \check{H}{k,\theta^{(i)}}(x^{l,i}k, \tilde{x}^{,l-1,i}k)$, the fine-level MCMC normalizing constant;
\item $a := \frac{1}{K_l}\sum{i=1}^{K_l} \varphi(\theta^{(i)}, x^{l,i}{1:T}) \prod{k=1}^T \check{H}_{k,\theta^{(i)}}(x^{l,i}_k, \tilde{x}^{,l-1,i}_k)$, the fine-level MCMC weighted expectation;
\item $B, b$: the corresponding quantities for the coarse-level ($l-1$) MCMC;
\item $C, c$: the corresponding fine-level exact expectations under $\check{\pi}^{l,N_l}$;
\item $D, d$: the corresponding coarse-level exact expectations under $\check{\pi}^{l,N_l}$.
\end{itemize}
The bi-level difference can be expressed as:
\begin{equation*}
\overline{\pi}^{l,N_l,K_l}[\varphi] - \overline{\pi}^{l-1,N_l,K_l}[\varphi] = \frac{a}{A} - \frac{b}{B}.
\end{equation*}

Applying the ratio identity from Lemma \ref{lem:lem1}, equation (\ref{eq:ratio_identity_complete}), and noting that $\frac{c}{C} = \overline{\pi}^{l,N_l}[\varphi]$ and $\frac{d}{D} = \overline{\pi}^{l-1,N_l}[\varphi]$, we obtain a decomposition of the difference
\begin{equation*}
\frac{a}{A} - \frac{b}{B} - \frac{c}{C} + \frac{d}{D}
\end{equation*}
into six terms—two principal terms and four remainder terms.

\textit{Principal Terms Analysis:} From equation (\ref{eq:ratio_identity_complete}), we take the principal term 
\begin{align*}
T_1 = \frac{1}{A}[(a-b) - (c-d)]    
\end{align*} as an example.
We rewrite this by subtracting and adding $\frac{1}{C}[(a-b) - (c-d)]$:
\begin{equation*}
T_1 = \left(\frac{1}{A} - \frac{1}{C}\right)[(a-b) - (c-d)] + \frac{1}{C}[(a-b) - (c-d)].
\end{equation*}
Taking expectations and noting that under stationarity $\mathbb{E}[(a-b) - (c-d)] = 0$, the second term vanishes:
\begin{equation*}
\mathbb{E}[T_1] = \mathbb{E}\left[\left(\frac{1}{A} - \frac{1}{C}\right)[(a-b) - (c-d)]\right].
\end{equation*}
Applying the Cauchy-Schwarz inequality yields
\begin{equation*}
|\mathbb{E}[T_1]| \leq \sqrt{\mathbb{E}\left[\left(\frac{1}{A} - \frac{1}{C}\right)^2\right]} 
\cdot \sqrt{\mathbb{E}[((a-b) - (c-d))^2]}.
\end{equation*}

\textit{First factor:} The term $\left(\frac{1}{A} - \frac{1}{C}\right)$ represents the MCMC error 
in estimating the normalizing constant. By standard results for uniformly ergodic Markov chains 
(Assumption A\ref{ass:3}) and uniform bounds on normalizing constants $C'$ (Assumption A\ref{ass:2}),
\begin{equation*}
\sqrt{\mathbb{E}\left[\left(\frac{1}{A} - \frac{1}{C}\right)^2\right]} \leq \frac{C'}{\sqrt{K_l+1}}.
\end{equation*}

\textit{Second factor:} This term represents the variance of the coupled bi-level difference under 
MCMC sampling. By Lemma \ref{lem:lem4} (variance bound for coupled trajectories) combined with MCMC mixing 
properties from Assumption A\ref{ass:3}, we have
\begin{equation*}
\mathbb{E}[((a-b) - (c-d))^2] \leq C\frac{\Delta_l}{K_l+1}.
\end{equation*}
Therefore,
\begin{equation*}
\sqrt{\mathbb{E}[((a-b) - (c-d))^2]} \leq C\frac{\sqrt{\Delta_l}}{\sqrt{K_l+1}}.
\end{equation*}
Combining both factors gives
\begin{equation*}
|\mathbb{E}[T_1]| \leq C' \cdot \frac{1}{\sqrt{K_l+1}} \cdot \frac{\sqrt{\Delta_l}}{\sqrt{K_l+1}} 
= C'\frac{\sqrt{\Delta_l}}{K_l+1}.
\end{equation*}

\textit{Remainder Terms Analysis:} We consider the remainder term
\begin{equation*}
R_3 = -\frac{1}{AC}(A-C)(c-d).
\end{equation*}

We rewrite this by subtracting and adding $\frac{1}{C^2}(A-C)(c-d)$:
\begin{equation*}
R_3 = \left(\frac{1}{AC} - \frac{1}{C^2}\right)(A-C)(c-d) + \frac{1}{C^2}(A-C)(c-d).
\end{equation*}

Taking expectations and noting that $c, d,$ and $C$ are deterministic with respect to the MCMC randomness and $\mathbb{E}[A-C] = 0$ under stationarity, the second term vanishes:
\begin{equation*}
\mathbb{E}\left[\frac{1}{C^2}(A-C)(c-d)\right] = \frac{c-d}{C^2}\mathbb{E}[A-C] = 0.
\end{equation*}

Thus, the expectation is determined by the first term. Using the identity $\frac{1}{AC} - \frac{1}{C^2} = \frac{C-A}{AC^2}$:
\begin{equation*}
\mathbb{E}[R_3] = \mathbb{E}\left[-\frac{1}{AC^2}(A-C)^2(c-d)\right].
\end{equation*}

Taking absolute values and noting that normalizing constants are bounded away from zero (Assumption A\ref{ass:2}):
\begin{equation*}
|\mathbb{E}[R_3]| \leq C |c-d| \mathbb{E}[(A-C)^2].
\end{equation*}

\textit{Weak error factor:} The term $|c-d|$ represents the weak error of the time discretization. By standard error analysis for Euler-Maruyama schemes,
\begin{equation*}
|c-d| \leq C \Delta_l.
\end{equation*}

\textit{Variance factor:} The term $\mathbb{E}[(A-C)^2]$ represents the variance of the MCMC estimator. By uniform ergodicity (Assumption A\ref{ass:3}) and Corollary 2.1 in \cite{roberts1997geometric},
\begin{equation*}
\mathbb{E}[(A-C)^2] \leq \frac{C'}{K_l+1}.
\end{equation*}
Combining these bounds:
\begin{equation*}
|\mathbb{E}[R_3]| \leq C \cdot \Delta_l \cdot \frac{1}{K_l+1}.
\end{equation*}
Comparing this to the principal term bound $O(\frac{\sqrt{\Delta_l}}{K_l+1})$, and noting that $\Delta_l \ll \sqrt{\Delta_l}$ as $\Delta_l \to 0$, the remainder term is strictly dominated.
The remaining terms in $\mathcal{R}_{\text{rem}}$ can be analyzed similarly.

\textit{Combined Bound:} Summing the principal term and all remainder terms gives
\begin{equation*}\label{eq:bias_bound_l_A10}
|B_l| \leq  O\left(\frac{\sqrt{\Delta_l}}{K_l+1}\right).
\end{equation*}
Similarly, for level $q$,
\begin{equation*}\label{eq:bias_bound_q_A10}
|B_q| \leq C\frac{\sqrt{\Delta_q}}{K_q + 1}.
\end{equation*}

\noindent\textbf{Step 3: Combining Bounds.}
\begin{equation*}
\left|\mathbb{E}\left[\text{Cross-term}_{l,q}\right]\right| 
= |B_l| \cdot |B_q|
\leq C\frac{\sqrt{\Delta_l}}{K_l + 1} \cdot C\frac{\sqrt{\Delta_q}}{K_q + 1}
= C\frac{\sqrt{\Delta_l\Delta_q}}{(K_l+1)(K_q+1)}.
\end{equation*}
Summing over all distinct pairs $(l,q) \in \mathcal{A}_{l_\star,L}$,
\begin{equation*}
\sum_{(l,q) \in \mathcal{A}_{l_\star,L}} \left|\mathbb{E}\left[\text{Cross-term}_{l,q}\right]\right|
\leq C\sum_{(l,q) \in \mathcal{A}_{l_\star,L}} \frac{\sqrt{\Delta_l}}{K_l+1} \cdot \frac{\sqrt{\Delta_q}}{K_q+1},
\end{equation*}
which establishes the stated bound.
\end{proof}

\end{document}